\newtheorem{lemma}{Lemma}[section]
\newtheorem{theorem}{Theorem}[section]
\newtheorem{proposition}[theorem]{Proposition}
\newtheorem{remark}[theorem]{Remark}
\newcommand{\argmax}[1]{\underset{#1}{\operatorname{argmax}}}
\newcommand{\IR}{\mathbb{R}}
\newcommand{\IE}{\mathbb{E}}
\newcommand{\IP}{\mathbb{P}}
\newcommand{\Ind}{\mathbbm{1}}
\newcommand{\IN}{\mathbb{N}}
\newcommand{\IZ}{\mathbb{Z}}
\newcommand{\IT}{\mathbbm{T}}
\renewcommand{\tilde}{\widetilde}
\renewcommand{\bar}{\overline}
\renewcommand{\epsilon}{\varepsilon}
\renewcommand{\phi}{\varphi}
 \numberwithin{equation}{section}
\begin{document}
\title{\bf Testing For Global Covariate Effects in Dynamic Interaction Event Networks}
\author{Alexander Kreiss \footnote{Computations for this work were done (in part) using resources of the Leipzig University Computing Centre.} \\
    Institute of Mathematics, Leipzig University \\
    and \\
    Enno Mammen \\
    Institute for Applied Mathematics, Heidelberg University \\
    and \\
    Wolfgang Polonik \footnote{W. Polonik acknowledges partial support by the National Science Foundation under Grant No. DMS-1713108.} \\
    Department of Statistics, University of California at Davis}
\date{\today}
\maketitle
\begin{abstract}
In statistical network analysis it is common to observe so called interaction data. Such data is characterized by actors forming the vertices and interacting along edges of the network, where edges are randomly formed and dissolved over the observation horizon. In addition covariates are observed and the goal is to model the impact of the covariates on the interactions. We distinguish two types of covariates:  global, system-wide covariates (i.e. covariates taking the same value for all individuals, such as seasonality) and local, dyadic covariates modeling interactions between two individuals in the network. Existing continuous time network models are extended to allow for comparing a completely parametric model and a model that is parametric only in the local covariates but has a global non-parametric time component. This allows, for instance, to test whether global time dynamics can be explained by simple global covariates like weather, seasonality etc. The procedure is applied to a bike-sharing network by using weather and weekdays as global covariates and distances between the bike stations as local covariates.
\end{abstract}

\section{Introduction}
\label{sec:Int}

One branch in statistical network analysis is concerned with the analysis of so called interaction data. Some references for this type of data are for example \citet{B08,PW13,MRV18,KMP19}. Such data typically consists of a dynamic, random network where the vertices are some sort of actors who can interact with each other along the edges of the network. By a \emph{dynamic, random network} we mean a network which has a fixed set of vertices, i.e. actors, but the edge set, i.e. their relations, may change randomly over time. A classical example would be social contact networks: Here the vertices of the network under consideration are people and two persons are connected by an edge if they have the potential to interact with each other (e.g. being in a 50m radius of one another). An interaction, or an event, between two people could be an instance of close contact like the start of a face-to-face conversation. Over the course of the day the potential interaction partners change because people tend to be in different locations during the day (e.g. at work or at home). In addition to the network and the interactions, one typically also observes a set of covariates. Such covariates describe the relation of each pair in the network and can be composed of covariates on actor-level, pair specific covariates or system-wide covariates. The interest lies in modeling the relation between the interactions and the covariates.

We emphasize already here that the edge set does not necessarily need to describe physical (or otherwise established) relations like in the previous example. It is also possible that one imposes a network on the data, e.g. by specifying which pairs are relevant for the question of interest. In this case, the edges of the network can be understood as model inclusion dummies, indicating at which point in time which pair is relevant for the study. Such scenarios might be relevant when the interest lies on interactions that are only of interest if other conditions are met, such as the use of social media while using the smart-phone.

We model interaction event data on networks by formulating a counting process model in which the intensity function depends on covariates. In our framework, the covariate process is not required to have \emph{short-memory} properties. This flexibility allows for models with complex dependence of the covariates on past events. Such models have been studied in both parametric and non-parametric settings, see, for instance, \cite{PW13} and \cite{KMP19}, respectively. In order to illustrate the contribution of this paper, we let $X_{n,ij}(t)\in\IR^p$ be local, pair-specific covariate functions. Consider the following model for the intensity $\lambda_{n,ij}$ of the counting process which counts events from $i$ to $j$
$$\lambda_{n,ij}(t):=\alpha_0(t)\exp(\beta_0(t)'X_{n,ij}(t)).$$
The most flexible model (\emph{fully non-parametric}) allows both $\alpha_0(t)$ and $\beta_0(t)$ to vary in time. This is studied in \citet{KMP19}. \citet{PW13} assume $\beta_0\equiv\textrm{const.}$ and treat $\alpha_0$ as nuisance parameter. The case of both $\beta_0\equiv\textrm{const.}$ and $\alpha_0\equiv\textrm{const.}$, which we call a \emph{completely parametric} model, was used in \citet{K20} to study how to test the completely parametric versus the (fully) non-parametric model by using the $L_2$-distance between a parametric and a non-parametric estimator of the intensity function as a test statistic, similar to what is done in \citet{HM93} in nonparametric regression. None of the papers \citet{PW13}, \citet{KMP19}, \citet{K20} allows that any entry of $X_{n,ij}(t)$ is the same for all pairs $(i,j)$, thereby rendering the inclusion of a global, system-wide covariate $Z(t)$ impossible. Our work is addressing this issue by including $Z(t)$ in the baseline by assuming that $\alpha_0(t)=\alpha(Z(t);\theta_0)$ for some unknown parameter $\theta_0$ and a known link function $\alpha$ (note that one entry of $Z(t)$ is allowed to be $t$ and hence $\alpha(Z(t);\theta_0)$ might explicitly depend on time). Our goal then is to test under the additional assumption that $\beta_0 \equiv \textrm{const.}$ the hypothesis $\alpha_0(t)=\alpha(Z(t);\theta_0)$ for some $\theta_0$ versus the alternative of a non-parametric specification of $\alpha_0.$ 

Being unable to properly include global covariates in the testing is restrictive because many data-sets show a clear seasonality and therefore the superiority of the fully non-parametric model might simply stem from the situation that the completely parametric model under consideration does not accurately account for this seasonality. Our new test allows to explore whether extending a completely parametric model by allowing for a non-parametric seasonality is producing a meaningful extension of the model. Hence, when discussing the question whether there is a need to use a non-parametric model or whether a parametric model is sufficient, it is natural, to consider intermediate steps between completely parametric and fully non-parametric. We provide the first step for such comparisons. In practice such tests are relevant because a parametric model allows predictions in situations when the global covariates change (like the weather in the next month) while the non-parametric estimate for $\alpha_0$ can only be transferred to other time periods if one assumes that the global covariates remain the same. Lastly, if one has a specific hypothesis about what causes the seasonality, our testing framework provides a methodology to test for this hypothesis.

The complex dependence structure in network settings makes the mathematics behind such types of analysis significantly more challenging. While in standard situations individuals are typically considered to behave independently, such an assumption is rarely plausible in a network set-up. Most of the time it is rather the case that neighboring individuals influence each other and should therefore not be treated as independent. On the other hand it is intuitively clear that this dependence diminishes exponentially as the distance of actors grows in the network: If one is equally influenced by $k$ friends and they, in turn, are equally influenced by their $k$ friends, the impact of the friends of the friends (i.e. of actors of distance $2$), is of the order $k^{-2}$. This assumes that no single actor has a  disproportionate influence on others (no hubs). Intuitively, if only a single actor influences things, then observing new actors will not have a major impact. This intuition has been made precise in \citet{K20} and we make use of these ideas.

The organization of this paper is as follows: In Section \ref{sec:model} we introduce the exact model and present the hypothesis and our suggested test statistic. Afterwards in Section \ref{sec:theory} we present the theory for our test statistic. We will discuss the practical implementation in Section \ref{sec:numerics} and provide a real-world data example as well as a synthetic simulation study. Section \ref{sec:conclusion} concludes. Proofs and other technical details are collected in the supplementary Sections \ref{app:mixing}-\ref{supp:main_proof}.

\section{Model Specification and Test Strategy}
\label{sec:model}
In the following we introduce the exact data generating process used in the following, and formulate the testing problem of interest. Suppose that we observe  a sequence of networks $G_{n,t}=(V_n,E_{n,t})$ for $n\in\IN$ over a time interval $t \in [0,T]$, where the deterministic population $V_n$ is of growing size $|V_n|=n$. The edge set $E_{n,t}\subseteq V_n\times V_n$ is random and time varying. To simplify the notation we identify $V_n$ with $[n] =\{1,\ldots,n\}$. For $i,j\in V_n,$ the functions $C_{n,ij}(t)=\Ind((i,j)\in E_{n,t})$ indicate whether $i$ and $j$ are connected at time $t\in[0,T]$ so that $\left(C_{n,ij}\right)_{i,j\in V_n}$ can also be understood as the random, time varying adjacency matrix. Within the population, individuals can interact with each other if they are connected by an edge. For $i,j\in V_n,$ we denote by $N_{n,ij}(t)$ the number of interactions between $i$ and $j$ up to and including time $t\in[0,T]$ . The processes $N_{n,ij}:[0,T]\to\IN$ are thus counting processes. While networks can be directed or undirected, we only consider undirected networks for simplicity. Thus, we assume throughout $C_{n,ij}=C_{n,ji}$ as well as $N_{n,ij}=N_{n,ji}$ for all $i,j\in V_n$. Moreover, also for simplicity, we exclude self-interactions, meaning that we set $C_{n,ii}\equiv0$ and $N_{n,ii}\equiv0$ for all $i\in V_n$. We suppose that the network process (via $C_{n,ij}$) and the interactions (via $N_{n,ij}$) are observed. In addition to these we also observe random covariates $X_{n,ij}:[0,T]\to\IR^p$ which are specific for the pair $i,j\in V_n$. We are interested in modelling the interactions. We do not model the network process $C_{n,ij}$ and also not the covariate processes $X_{n,ij}$. As outlined in \cite{KMP19}, one could use our framework for the following more specific model where one observes two interaction processes $N^-_{n,ij}$ and $N^+_{n,ij}$ that define one network process: $N^+_{n,ij}$ jumps if an edge between $i$ and $j$ is added and $N^-_{n,ij}$ jumps if an edge between $i$ and $j$ is removed. Here one would define the network processes belonging to one of the two interaction processes as $C_{n,ij}$ or $1-C_{n,ij}$, respectively. In this paper we will not pursue this setting.

Note lastly that one can also use $C_{n,ij}$ as filters for the researcher to select pairs of interest. While there could potentially be interactions between all vertices, there might be reasons to assume that only specific interactions are relevant for the model \eqref{eq:mod} at a given time $t$. In this case, the researcher has the flexibility to achieve this by setting all the corresponding $C_{n,ij}(t)$ equal to 1. As was discussed in previous work, the researcher does not have to select the relevant pairs perfectly as long as the selection is not \emph{too liberal} (cf. p. 2769 in \citet{KMP19}).

Throughout we will assume that the array $(N_{n,ij})_{i,j\in V_n}$ forms a multivariate counting process with respect to a filtration $(\mathcal{F}_t^{\,n})_{t\in[0,T]}$. Unless specified otherwise all counting processes and martingales are understood to be defined with respect to this filtration. Note that by definition of a multivariate counting process no two counting processes jump at the same time (with probability one). As discussed above, the covariates and the interactions are connected through the intensity functions $\lambda_{n,ij}:[0,T]\to[0,\infty)$ for which we assume the proportional hazards model (cf. \citet{ABGK93,MS06,C72,AG82}), that is, we suppose that the intensity function with respect to the filtration $(\mathcal{F}_t^{\,n})_{t\in[0,T]}$ is given by
\begin{equation}
\label{eq:mod}
\lambda_{n,ij}(t)=C_{n,ij}(t)\alpha_0(t)\Psi(X_{n,ij}(t);\beta_0),
\end{equation}
where $X_{n,ij}(t) \in {\mathbb R}^p$ are random covariates depending on time, $\beta_0\in\IR^q$ is an unknown parameter and $\alpha_0:[0,T]\to[0,\infty)$ is an unknown, deterministic baseline intensity. The link function $\Psi$ is supposed to be known to the researcher, e.g. in a Cox-type model $p=q$ and $\Psi(X_{n,ij}(t);\beta_0)=\exp(\beta_0'X_{n,ij}(t))$ (cf. \citet{SM04}). In order to assure identifiability, $\Psi$ (or $X_{n,ij}$) may not include an intercept, and we have to impose, e.g. $\Psi(0;\beta_0)=1$. While $\Psi(X_{n,ij}(t);\beta_0)$ describes the pair-specific part of the intensity, $\alpha_0(t)$ can be interpreted as global component of the intensity which applies to all pairs in the system. Our interest then lies in testing whether the baseline $\alpha_0$ can be adequately modeled by deterministic system wide covariates, i.e., covariates that are the same for all individuals, such as weather or economic development. We denote these covariates by $Z:[0,T]\to\IR^d$.  By `deterministic' we here mean `measurable with respect to $\mathcal{F}_0^{\,n}$', where the measurability assumption on $Z$ is made for simplicity. Without it, the asymptotic analysis would become significanly more complex. Intuitively the assumption is justified if reliable short-time predictions of future developments of $Z$ exist, e.g., weather forecasts. Since the covariates $Z$ are supposed to be deterministic and to be the same for the entire network (regardless of its size), we assume also that they do not change with $n$. Our aim is testing the hypothesis
$$H_0:\alpha_0(t)=\alpha(Z(t);\theta_0)\textrm{ for some }\theta_0\in\Theta,$$
where $\Theta\subseteq\IR^d$ is a suitable parameter space and $\alpha:\IR^d\times\Theta \to [0,\infty)$ is a known link function. To simplify notation, we let $\alpha(\theta,t):=\alpha(Z(t);\theta)$. The test statistic we use for testing this hypothesis is along the lines of \cite{HM93}, that is, we compare a parametric and a non-parametric estimator. In order to define those estimators, we need the following definitions
\begin{alignat*}{2}
&N_n(t):=\sum_{i,j\in V_n}N_{n,ij}(t),\qquad &&\overline{\Psi}_n(t;\beta):=\sum_{i,j\in V_n}C_{n,ij}(t)\Psi(X_{n,ij}(t);\beta) \\
&C_n(t):=\Ind\left(\exists\, i,j\in V_n: C_{n,ji}(t)=1\right),\qquad&&\lambda_n(t,\beta):=\alpha_0(t)\overline{\Psi}_n(t,\beta),
\end{alignat*}
where the above sums over $i,j\in V_n$ are understood as \emph{sum over all undirected pairs $(i,j)\in V_n\times V_n$ with $i\neq j$}. We will use this notation throughout the paper. The process $N_n$ is again a counting process with respect to $(\mathcal{F}_t^{\,n})_{t\in[0,T]}$ because in our model with probability one no two individual processes jump at the same time. The intensity function of $N_n$ is given by $\lambda_n(t,\beta_0)$. The function $C_n$ is an indicator that equals $1$ at time $t$ if there is at least one edge present in the network. We denote by $M_{n,ij}(t):=N_{n,ij}(t)-\int_0^t\lambda_{n,ij}(s)ds$ the martingale associated with the counting process $N_{n,ij}$. Then, $M_n:=\sum_{i,j\in V_n}M_{n,ij}$ is the martingale associated with $N_n$ (all with respect to $(\mathcal{F}_t^{\,n})_{t\in[0,T]}$). 

The estimators used for our test statistic are as follows. Our parametric estimator is the maximum likelihood estimator
\begin{align}
\left(\hat{\theta}_n,\hat{\beta}_n\right):=\argmax{(\theta,\beta)}\sum_{i,j\in V_n}\int_0^T\log\left(\alpha(\theta,t)\Psi(X_{n,ij}(t);\beta)\right)dN_{n,ij}(t)-\int_0^T\alpha(\theta,t)\overline{\Psi}_n(t;\beta)dt. \label{eq:ML}
\end{align}
For the non-parametric estimator, we first use an initial estimator for $\beta_0$ not depending on any specific form of $\alpha_0$. For this we take the partial-maximum likelihood estimator (cf. \citet{C75,PW13}).
\begin{equation}
\label{eq:PL}
\tilde{\beta}_n:=\argmax{\beta}\sum_{i,j\in V_n}\int_0^T\left[\log\Psi\left(X_{n,ij}(t);\beta\right)-\log\overline{\Psi}_n(t;\beta)\right] dN_{n,ij}(t).
\end{equation}
Our non-parametric estimator is the smoothed Nelson-Aalen estimator (cf. \citet{ABGK93,A78,N69,RH83})
\begin{align}
\label{eq:nelson-aalen}
\hat{\alpha}_n\left(t,\tilde{\beta}_n\right)&:=\int_0^TK_{h,t}(s)\frac{C_n(s)}{\overline{\Psi}_n\left(s;\tilde{\beta}_n\right)}dN_n(s),
\end{align}
where $K \ge 0$ is a kernel function and $K_{h,t}(s):=\frac{1}{h}K\left(\frac{s-t}{h}\right)$. Above we use the convention $0/0:=0$. Let further 
\begin{equation*}
\alpha_{\textrm{smooth}}(\theta,t):=\int_0^TK_{h,t}(s)\alpha(\theta,s)ds\quad\textrm{ for all }\quad\theta\in\Theta,
\end{equation*}
be a smoothed versions of the parametric estimator. \citet{HM93} argue that, when comparing non-parametric and parametric estimators, the nature of the test might be dominated by the bias of the nonparametric estimator, this can be avoided by smoothing the parametric estimator. Our test statistic for testing is hence
\begin{equation}
\label{eq:Tn}
T_n:=\int_0^T\left(\hat{\alpha}_n\left(t;\tilde{\beta}_n\right)-\alpha_{\textrm{smooth}}(\hat{\theta}_n,t)\right)^2w(t)dt,
\end{equation}
where $w:[0,T]\to[0,\infty)$ is a weight function with $\IT:=\textrm{supp}\, (w)\subset(0,T)$, where $\textrm{supp}\, (w)$ denotes the closed support of $w$. Thus $w$ cuts off the boundary and therefore we may ignore possible boundary issues of the kernel type Nelson-Aalen estimator. We consider the boundary cut-off to be the main role of $w$, thus we will later in the simulation choose $w(t):=\Ind(t\in[\delta,T-\delta])$ for some small $\delta>0$.

\section{Main Results}
\label{sec:theory}
Here we present the main theoretical result of the paper, which states the asymptotic behavior of $T_n$ under the null-hypothesis and under local alternatives. The assumptions needed for this result to hold are presented and discussed in detail in Section \ref{subsec:assumptions}.

\subsection{Main Result}
\label{subsec:result}

In order to state our main result, we introduce the following notation, where by Assumption (VX) below, all these quantities are well defined:
\begin{align}
&\mu_n(t;\beta):= \IE\left(\Psi(X_{n,ij}(t);\beta)\big|C_{n,ij}(t)=1\right), \label{eq:defXn} \\
&p_n(t): = p_n \pi(t) = \IP(C_{n,ij}(t)=1) \label{eq:defpn} \\
&N:= m\,p_n\;\Big(\int_0^T\int_0^TK_{h,t}(s)\frac{w(t)}{ \pi(s)}ds\,dt\Big)^{-1} , \label{eq:defan}
\end{align}
where $m := \binom{n}{2}$ denotes the total number of undirected pairs and $p_n$ and $\pi$ satisfy: 

{\bf Assumption (SP)}: (Sparsity) \emph{We assume $p_n>0$, $\pi(t)\geq1$, and $\pi$ continuous with $0 \le p_n\pi(t) \le 1$ for all $n \ge 1$ and $t \in [0.T]$}

 Note that assuming $\pi(t)\geq1$ is no restriction when $0 < \inf_{t \in [0,T]} \pi(t),$ because in that case we may simply rescale $p_n$. Observe that the expected number of undirected pairs at time $t$ equals $m p_n \pi(t),$ and that we explicitly allow that $p_n\to0$ so that our set-up includes sparse networks in which the expected number of edges $mp_n(t)=O(n)$. Finally, $N$ can be interpreted as a weighted time average of these expected numbers, see also Remark \ref{rem:timeaver}. Also note that $mp_n(t) \mu_n(t;\beta) = \IE \overline{\Psi}_n(t,\beta).$

\begin{theorem}
\label{thm:T1}
Suppose that Assumption (SP) and all the assumptions from Section \ref{subsec:assumptions} hold. Further, assume that model \eqref{eq:mod} holds with $\alpha_0(t)=\alpha(\theta_0,t)+c_n\Delta_n(t),$ where $c_n=\big(N\sqrt{h}\big)^{-1/2},$ and $\Delta_n$ is uniformly bounded and continuously differentiable with uniformly bounded derivative (that is, uniformly in both $t$ and $n$). Then, as $n \to \infty,$ 
$$\frac{N\sqrt{h}\left(\displaystyle{T_n-\frac{A_n}{Nh}-\int_0^T\left(\int_0^TK_{h,t}(s)c_n\Delta_n(s)ds\right)^2w(t)dt}\right)}{\sqrt{B_n}}\;\; \overset{d}{\to}\;\;\mathcal{N}(0,1),$$
where with  $f_n(r,s):=\int_0^ThK_{h,t}(s)K_{h,t}(r)w(t)dt,$  $\gamma=\int_0^T\int_0^TK_{h,t}(s)\frac{w(t)}{\pi(s)}dsdt,$ and $K^{(2)}:=\frac{1}{2} \|K\star K\|_2^2,$ 
\begin{align*}
&A_n:=\frac{1}{N\gamma^2}\sum_{i,j\in V_n}\int_0^Tf_n(r,r)\left(\frac{1}{ \pi(r)\mu_n(r,\beta_0)}\right)^2dN_{n,ij}(r), \\
&B_n:=4K^{(2)}\int_0^T\left(\frac{w(s)\alpha_0(s)}{\gamma\pi(s)\mu_n(s;\beta_0)}\right)^2ds.
\end{align*}
\end{theorem}
\begin{remark} \label{rem:timeaver}
We assume in Assumption (B) below that $r\mapsto\mu_n(r,\beta_0)$ is uniformly bounded from below. Furthermore $\gamma\to\int_0^Tw(t)/\pi(t)dt>0$ if $\pi$ is continuous. Therefore, $B_n$ is bounded from below and does not influence the rate of convergence. To illustrate the rate of convergence, we consider the case of $\pi$ constant. Note that in this case $\pi\equiv1$ and $p_n=p_n(t)$. Therefore, for $h>0$ small enough,
$$N=mp_n(t)\left(\int_0^Tw(t)dt\right)^{-1},$$
and $N$ is a scaled version of the expected number of edges.
\end{remark}
\begin{remark}
\label{rem:conf_int}
Note that Theorem \ref{thm:T1} is formulated for local alternatives, including the null hypothesis for $\Delta_n\equiv0$. This allows us to use the stated asymptotic normality to formulate confidence intervals for the $L^2$-norm of the smoothed and weighted $\Delta_n$ as follows: Let $q_{1-\alpha}$ denote the $(1-\alpha)$-quantile of a standard normal distribution. Then, asymptotically with probability at least $1-\alpha$,
$$\left\|K_h\star(c_n\Delta_n)\right\|_w\geq D_n:=\sqrt{\max\left(T_n-\frac{B_n^{\frac{1}{2}}q_{1-\alpha}}{N\sqrt{h}}-\frac{A_n}{Nh},\ 0\right)},$$
where $\|K_h\star(c_n\Delta_n)\|_w^2:=\int_0^T\left(\int_0^TK_{h,t}(s)c_n\Delta_n(s)ds\right)^2w(t)dt$. In other words, we can also test the hypotheses $H_{\epsilon}$: $\|K_h\star(c_n\Delta_n)\|_w\leq \epsilon$ for different values of $\epsilon$ (reject $H_{\epsilon}$ if $D_n>\epsilon$). The right-hand side, i.e. $D_n$, in the above displayed formula is the largest value of $\epsilon$ for which $H_{\epsilon}$ would be rejected. We can hence use $D_n$  as a lower bound for the unknown $\|K_h\star(c_n\Delta_n)\|_w$ (the deviation from the parametric model) and compare it to $\|K_h\star\hat{\alpha}_n(\cdot;\tilde{\beta}_n)\|_w$ to obtain the fraction of the baseline that deviates from the parametric model. Finally, note that all of this discussion depends on the choice of $w$, which can be used to include or exclude areas of interest. We will illustrate this in our data application in Section \ref{sec:data}.
\end{remark}

In other situations it has been pointed out that the convergence to a normal limit for $L^2$ type non-parametric tests is very slow. As a remedy, resampling techniques have been proposed for finding critical values. We are not aware of a resampling scheme that is suitable for our model of dependent counting processes and let this question open for future research. However, our simulations suggest that in this case the approximation is not too bad.

Our analysis of $T_n$ reveals 
that $T_n$ is asymptotically equivalent to a degenerate $U$-statistics with kernel depending on the bandwidth $h$. In this respect the statistic behaves like  test statistics in other models that are also based on $L^2$ distances between parametric and nonparametric curve estimators. Degenerate $U$-statistics (of order 1) can be represented as weighted sums of centered squares of linear statistics of the form $\sum_{k=1}^K \lambda_k(\zeta_k^2 - c_k)$, where the linear statistics $\zeta_k$ depend on the eigenfunctions of the integral operator whose kernel equals the kernel of the $U$-statistic, and the weights $\lambda_k$ are its eigenvalues. For tests of the form $T_n$ and other $L^2$-type tests one has to distinguish two asymptotic settings: the case of a fixed bandwidth and the case of a bandwidth converging to zero. In case of a fixed bandwidth, the representation of degenerate $U$-statistics can be used to show that the limit distribution of $U$-statistics is a weighted sum of independent, centered squared standard normal variables, see e.g.  \cite{LM013} and \cite{G77}. 
This is similar to the 
Cramer-von Mises test and other related goodness-of-fit tests. The linear statistics $\zeta_k$ can be interpreted as tests of one-dimensional deviations from the null hypothesis in one direction. Summed up the $\zeta_k^2$ result in an omnibus test. Also, for the bandwidth $h$ converging to zero, we obtain an approximation of squared linear statistics by weighted averages, with weights depending on the bandwidth $h$. For a Gaussian convolution kernel, eigenvalues and eigenfunctions corresponding to the directions are considered in \cite{SWY15}.  There, the dependence of the eigenvalues and eigenfunctions on the variance/squared bandwidth of the Gaussian kernel is explicitly stated. With decreasing bandwidth the mass of the weights moves to later summands, and this leads to an asymptotic normal distribution of these $L^2$ statistics.  Furthermore, it shows that the statistic puts more and more weight into later summands which increases the omnibus character of the test. For a comparison of $L^2$-type tests with decreasing bandwidth and with fixed bandwidth in nonparametric regression, see  \cite{FL00}. There it is shown that these tests  cover a class of tests with a large spectrum of properties. This was the reason why we have chosen this class of tests in our mathematical analysis. Furthermore, in  \cite{FL00}  it is also shown that asymptotics with decreasing bandwidth and with fixed bandwidth require different mathematical approaches. In this paper we only consider the case of statistics with bandwidths converging to zero. In our proof, we will also not explicitly use the representation of $U$-statistics given above, but we will use martingale  central limit theorems instead. Nevertheless, the discussion above is important for the basic understanding of the types of test considered in our work.

\subsection{Assumptions}
\label{subsec:assumptions}
In this section we collect the assumptions that we use to prove our main result. We begin with fairly standard assumptions on the data.

\textbf{Assumption (VX):} (Vertex Exchangeability) \\
\emph{The tupels $(C_{n,ij},X_{n,ij},N_{n,ij})$ are vertex exchangeable, i.e., their joint distribution does not change when the vertex labels are permuted.}

Since in observational studies names are normally uninformative, we regard this assumption as not very restrictive. (VX) implies that all quantities indexed by $(i,j)\in V_n\times V_n$ are identically distributed and hence that $\mu_n(t;\beta)$ and $p_n \pi(t)$ do not depend on $i,j$.

\textbf{Assumption (KBW):} (Kernel, Bandwidth, Weight) \\
\emph{The kernel $K$ is bounded, symmetric about $0$ and supported on $[-1,1]$. The weight function $w$ is continuous and bounded with $\IT:=\textrm{supp}\, w\subset(0,T)$. The bandwidth $h$ fulfills $\frac{\log m}{h N}\to0$, $\sqrt{h}\log m\to0$.}

These assumptions in particular imply that $h\to0$ and $Nh\to\infty$, which, when interpreting $N$ as the effective sample size, are standard assumptions for kernel smoothing. Choosing  $h$ of the form $O(N^{-r})$ for $r>0$ `small', with the standard choice $r=1/5$, seems reasonable. Also, the quantity $m = \binom{n}{2}$ can of course be replaced by $n^2$ in these assumptions. Using $m$ makes the origins of the stated assumptions more transparent.

In the following, let $\|f(\cdot)\|_{\infty}:=\sup_{x\in\mathcal{D}_f}\|f(x)\|$ denote the sup-norm, where $f:\mathcal{D}_f\to\mathcal{X}$ and $\|\cdot\|$ is a norm on $\mathcal{X}$.

\textbf{Assumption (C):} (Continuity and Boundedness of the Model) \\
\emph{The link functions $\Psi$ and $\alpha$ are bounded, i.e. $\|\Psi(\cdot;\beta_0)\|_{\infty}<\infty$ and $\|\alpha(\theta_0;\cdot)\|_{\infty}<\infty$, and fulfill the following Lipschitz property
$$\left|\Psi(x,\beta_1)-\Psi(x;\beta_2)\right|\leq L_{\Psi}\|\beta_1-\beta_2\|\,\textrm{ and }\,|\alpha(\theta_1,t)-\alpha(\theta_2,t)|\leq L_{\alpha}(t)\|\theta_1-\theta_2\|.$$
The function $L_{\alpha}$ satisfies $\|L_{\alpha}\|_2<\infty$. Moreover, $\mu_n(\cdot;\beta_0)$ is continuous.} 

Intuitively, Assumption (C) means that the model is not changing too rapidly neither over time (continuity of $t\mapsto\mu_n(t;\beta_0)$) nor for different parameters (Lipschitz continuity of $\Psi$ and $\alpha$). The latter is plausible, for instance, if we assume continuously differentiable link functions (with respect to the parameters) and bounded covariates.

\textbf{Assumption (P)} (Parametric Estimation) \\
\emph{The estimators $\hat{\theta}_n$ and $\tilde{\beta}_n$ are based on data independent of $(C_{n,ij}, X_{n,ij}, N_{n,ij})_{(i,j) \in V_n \times V_n}$ and satisfy $\|\hat{\theta}_n-\theta_0\|=O_P(N^{-1/2}),$ $\|\tilde{\beta}_n-\beta_0\|=O_P(N^{-1/2})$ and $\IE(\|\tilde{\beta}_n-\beta_0\|^2)=O(N^{-1})$. Moreover, there is a compact set $K(\beta_0)$ such that $\IP(\tilde{\beta}_n\in K(\beta_0))=1$.}

Since $N$ is the effective number of observations, the assumptions on the rates of consistency of the estimators are standard. The independence assumption is made for simplicity. It holds, for instance, if we use data splitting. We conjecture that the independence assumption can be replaced by a stochastic expansion of the estimators. This has been done in other discussions of test statistics based on the comparison of parametric and nonparametric fits, see e.g. Assumption (P1) in \citet{HM93}.

\textbf{Assumption (B):} (Local Boundedness) \\
{\em $\sup_{n\in\IN}\|\mu_n(\cdot;\beta_0)\|_{\infty}<\infty$ and $\sup_{n\in\IN}\sup_{\beta\in K(\beta_0)}\|\mu_n(\cdot;\beta)^{-1}\|_{\infty}=O(1)$.}

The upper bound holds, for example, if $\Psi$ is bounded, and the lower bound (an upper bound on the inverse) essentially only means that at no point in time the intensity converges to zero (provided that a link exists). Since we condition on the existence of a link in the definition of $\mu_n$ it might even be plausible to assume that $\mu_n$ does not even depend on $n$ as was argued in \citet{KMP19}.

For the following assumptions we let $B_n(c):=\left\{\beta: \|\beta-\beta_0\|\leq \frac{c}{\sqrt{N}}\right\}$ for an arbitrary constant $c>0$ and, for any $c_1,c_2>0,$ define the events
\begin{align*}
\mathcal{A}_n(c_1,c_2):=&\left\{\underset{\beta\in B_n(c_2)}{\sup_{t\in[0,T],}}\sqrt{p_n(t)}\left|\left(\frac{1}{mp_n(t)}\overline{\Psi}_n(t;\beta)\right)^{-1}-\mu_n(t;\beta)^{-1}\right|\leq c_1\sqrt{\frac{\log m}{m}}\right\} \\
\tilde{\mathcal{A}}_n(c_1,c_2):=&\left\{\underset{\beta\in B_n(c_2)}{\sup_{t\in[0,T],}}\sqrt{p_n(t)}\left|\frac{1}{m p_n(t)}\overline{\Psi}_n(t;\beta)-\mu_n(t;\beta)\right|\leq c_1\sqrt{\frac{\log m}{m}}\right\}.
\end{align*}
Recall that $\mu_n(t;\beta)=\IE\left(\Psi(X_{n,ij}(t),\beta)\big|C_{n,ij}(t)=1\right) = \IE \big[\frac{1}{mp_n(t)} \overline{\Psi}_n(t;\beta)\big]$ and $m = \binom{n}{2}$.

\textbf{Assumption (LL):} (Law of Large Numbers) \\
\emph{For any $c_2>0$ and any $\delta>0,$ there is $c_1>0$ such that, for all $n\in\IN$
$$\IP\left(\mathcal{A}_n(c_1,c_2)\right))\geq1-\delta\,\textrm{ and }\,\IP\left(\tilde{\mathcal{A}}_n(c_1,c_2)\right)\geq1-\delta.$$}

(LL) is essentially a law of large numbers for the hazards $\Psi(X_{n,ij}(t),\beta)$ in the sense that we require that their averages concentrate around their mean. As discussed in the Supplement (Section \ref{app:mixing}) this assumption holds under certain mixing conditions.

Denote $\tilde{C}_{n,ij}:=\sup_{t\in[0,T]}C_{n,ij}(t)$.

\textbf{Assumption (WC):} (Weak Correlation) \\
\emph{Suppose that
\begin{align}
&\frac{\IP(\tilde{C}_{n,12}=1)}{p_n}+\frac{\IP(\tilde{C}_{n,12}=1,\tilde{C}_{n,23}=1)}{np_n^2}+\frac{\IP(\tilde{C}_{n,12}=1,\tilde{C}_{n,34}=1)}{p_n^2}=O(1), \label{eq:swc1} \\
&\frac{1}{N^4}\underset{a=1,...,4}{\sum_{i_a,j_a\in V_n}}\IP\left(\prod_{a=1}^4\tilde{C}_{n,i_aj_a}=1\right)=O(1), \label{eq:swc2} \\
&\underset{s\in[0,T]}{\sup_{i,j\in V_n,}}\frac{1}{N^4}\underset{a=1,...,4}{\sum_{k_a,l_a,\in V_n}}\IP\left(\prod_{a=1}^4\tilde{C}_{n,k_al_a}=1\Big|C_{n,ij}(s)=1\right)=O(1), \label{eq:swc3} \\
&\sup_{k,l,k',l'}\frac{1}{N^4}\underset{a=1,...,4}{\sum_{i_a,j_a\in V_n}}\IP\left(\prod_{a=1}^4\tilde{C}_{n,i_aj_a}=1\bigg|\tilde{C}_{n,kl}=1,\tilde{C}_{n,k'l'}=1\right)=O(1), \label{eq:swc4} \\
&\underset{k,l,l'l'\in V_n}{\sup_{t,s\in[0,T],}}\frac{1}{N^3}\underset{a=1,...,3}{\sum_{i_a,j_a\in V_n}}\IP\left(\prod_{a=1}^3\tilde{C}_{n,i_aj_a}=1\big|C_{n,kl}(t)=1,C_{n,k'l'}(s)=1\right). \label{eq:swc5}
\end{align}}

To motivate this condition, note that the first term in \eqref{eq:swc1} requires that the probability of an edge being present at some time-point in the interval $[0,T]$ is of the same order as the probability being present at a fixed point $t\in[0,T]$. This appears to be a reasonable assumption if there are active edges, appearing and disappearing continuously, and inactive edges that never appear. This assumption excludes, for instance, the case where every edge appears exactly once at a random time point and exists for only a short time period. Furthermore, a sufficient condition for the third term in \eqref{eq:swc1} to be bounded is that $\tilde{C}_{n,12}$ and $\tilde{C}_{n,34}$ are independent. Note that this involves only disjoint pairs $(1,2)$ and $(3,4)$. If there is an overlap as for $(1,2)$ and $(2,3)$ appearing in the second term, the additional factor of $n$ in the denominator allows for strong dependence. This type of assumptions have also been used in \citet{KMP19} and \citet{K20}. For the remaining conditions similar interpretations can be found. The conditions on the conditional probabilities require that no single pair is indicative of the behaviour of the entire network. Assumption (WC) may be replaced by the weaker but more technical assumptions presented in Section \ref{app:lla} in the Supplement.

For the following technical assumption we consider, for each $t$ and $n$, a random distance function between pairs $d_t^{\,n}:V_n^2\times V_n^2\to[0,\infty)$, and for $I \subseteq V_n \times V_n$, set $d_t^{\,n}(ij,I):=\inf_{(k,l)\in I}d_t^{\,n}(ij,kl)$,  with the convention $d_t^{\,n}(ij,\emptyset):=\infty$.

\textbf{Assumption (mDep):} (Momentary-$m$-Dependence) \\
\emph{There is a number $M>0$ such that
\begin{align*}
&\forall n\in\IN,\forall t_0\in[0,T],\forall J\subseteq V_n\times V_n:\textrm{ Given }\mathcal{F}_{t_0}^{\,n}, \\
&\quad \left(N_{n,ij}(t),C_{n,ij}(t),X_{n,ij}(t)\right)_{(i,j)\in J,t\in[t_0,t_0+6h]}\textrm{ is conditionally independent of} \\
&\quad\sigma\Big(\left(N_{n,ij}(t),C_{n,ij}(t),X_{n,ij}(t)\right)\Ind\left(d_s^{\,n}(ij,J)\geq M\right): \\
&\quad\quad\quad\quad t_0-6h\leq s\leq t_0,t_0\leq t\leq s+6h, (i,j)\in V_n\times V_n\Big).
\end{align*}}

This assumption captures the situation of time developing networks where dependency structures may vary over time. The dependence structures are governed by the distance function $d_t^{\,n}$. An example for $d_t^{\,n}$ would be the length of the shortest path in the network between two edges (if existent, and $\infty$ otherwise). Intuitively, the above assumption means that, conditional on the past, the immediate short future of far apart processes can be treated as if they were independent. This is plausible if information needs some time to \emph{travel} through the network. See Supplement, section~\ref{app:mDep_mot} for a more detailed motivation for this assumption (see also \citet{K20}).

We call a pair $(i,j)\in V_n\times V_n$ a hub, if it has \emph{many close, active neighbouring pairs $(k,l)$ during a short period of time}. The number of close, active neighbouring pairs is defined as the number of pairs which are during any interval of length $6h$ simultaneously closer than $M$ (the constant from (mDep)) to $(i,j)$. In formulas this is (the choice of $[t-4h,t+2h]$ is somewhat arbitrary and made for later convenience)
$$K_M^{\,ij}:=\sup_{t\in[4h,T-2h]}\sum_{k,l\in V_n}\Ind\left(d_{t-4h}^{\,n}(kl,ij)<M\right)\Ind \big((k,l) \in E_{n,r} \text{ for some } r\in[t-4h,t+2h]\big).$$
For a given $n_{\rm hub}>0$, we call $(i,j)$ a hub if $K_M^{\,ij}\geq n_{\rm hub}$. Our weak correlation assumption (WC) allows for correlation between overlapping pairs $(i,j)$ and $(j,k)$. If the pair $(i,j)$ is a hub, this means there are many pairs $(j,k)$ in the edge set during some time interval. Our assumptions allow that during this time interval all such pairs $(j,k)$ are correlated with $(i,j)$. The existence of hubs poses therefore challenges when it comes to the behaviour of averages.

\textbf{Assumption (NH):} (No Hubs) \\
\emph{There is $n_{\rm hub}>0$ such that almost surely $K_M^{\,ij}\leq n_{\rm hub}$ for all $i,j\in V_n$}

(NH) is a simplifying assumption. We will prove our results under the weaker Assumption (HSR) which allows for the existence of hubs. (HSR) is given in the Supplement, in Section \ref{app:hubs}.

Our last assumption appears a bit clumsy. But the reader should note that all statements but the last would be trivially true if we had no conditional but regular expectations. Thus (BM) below excludes only pathologies in which single pairs react strongly to singular events. By $N_{n,ij}[a,b]$ we mean the number of jumps of the process $N_{n,ij}$ in the interval $[a,b]$ and we define for any process $X:[0,T]\to\IR$
\begin{align*}
\int |X(t)|d|M_{n,ij}|(t):=\int |X(t)|dN_{n,ij}(t)+\int |X(t)|\lambda_{n,ij}(t)dt.
\end{align*}

\textbf{Assumption (BM):} (Bounded Moments) \\
\emph{Let
\begin{align}
&\sup_{i,j,k,l\in V_n}\IE\left(\left(1+N_{n,ij}[0,T]+N_{n,ij}[0,T]N_{n,kl}[0,T]\right)^2\big|\tilde{C}_{n,ij}=1,\tilde{C}_{n,kl}=1\right)=O(1), \label{eq:BM1} \\
&\underset{s\in[0,T]}{\sup_{i,j,k_1,l_1,k_2,l_2\in V_n}}\IE\Big(N_{n,k_1l_1}[s-2h,s]N_{n,k_2l_2}[s-2h,s]\Big|C_{n,ij}(s)\tilde{C}_{n,k_1l_1}\tilde{C}_{n,k_2l_2}=1\Big)\to0, \label{eq:BM2} \\
&\underset{s\in[0,T]}{\sup_{i,j,k_1,l_1,k_2,l_2\in V_n}}\IE\Big(N_{n,k_1l_1}[s-2h,s]\Big|C_{n,ij}(s)\tilde{C}_{n,k_1l_1}\tilde{C}_{n,k_2l_2}=1\Big)=O(1), \label{eq:BM3} \\
&\underset{s\in[0,T]}{\sup_{i,j\in V_n,}}\sup_{k_1,l_1,...,k_4,l_4\in V_n}\IE\left(\prod_{a=1}^4N_{n,k_al_a}[0,T]\Bigg|\prod_{a=1}^4\tilde{C}_{n,k_al_a}C_{n,ij}(s)=1\right)=O(1), \label{eq:BM4} \\
&\sup_{i_1,\ldots,i_4\in V_n \atop j_1,\ldots,j_4\in V_n}\int_0^T\IE\left((N_{n,i_2j_2}[t-2h,t]+h)(N_{n,i_4j_4}[t-2h,t]+h)\Big|\prod_{a=1}^4\tilde{C}_{n,i_aj_a}=1\right)dt\to0, \label{eq:BM5} \\
&\sup_{i_1,\ldots,i_4\in V_n\atop j_1,\ldots,j_4\in V_n}\IE\Bigg[\int_0^T\left(h+N_{n,i_4j_4}[t-2h,t+2h]\right)\left(h+N_{n,i_2j_2}[t-2h,t+2h]\right) \nonumber \\
&\quad\times\underset{d_{t-2h}(uv,i_1j_1)<M}{\sup_{u,v\in V_n}}\left(h+N_{n,uv}[t-2h,t+2h]\right)d|M_{n,i_1j_1}|(t)\bigg|\prod_{a=1}^4\tilde{C}_{n,i_aj_a}=1\Bigg]\to0. \label{eq:BM6}
\end{align}}

The last statement requires also that the number of pairs $(u,v)$ which are closer than $M$ to $(i,j)$ is small. In that case the supremum is of finite order and the condition is reasonable. Assumption (BM) can be replaced by the assumptions given in Section \ref{app:lla} in the Supplement.

\subsection{Proof of Theorem \ref{thm:T1}}
\label{subsec:proof}
The general outline of the proof is simple. However there are many technical details that are very tedious to handle. We provide a comprehensive treatment of all the details in the Supplement, in Section \ref{supp:main_proof}, and show here the main steps only. Note that $C_n(t)=1$ if one of the processes $N_{n,ij}$ jumps at time $t$. We begin by rewriting the integrand in the test statistic in the following way (recall the definitions of $M_n$ and $N_n$ in the two paragraphs before \eqref{eq:ML})
\begin{align*}
&\hat{\alpha}_n(t;\tilde{\beta}_n)-\alpha_{\textrm{smooth}}(\hat{\theta}_n,t) \\
=&\int_0^TK_{h,t}(s)\frac{1}{\overline{\Psi}_n(s;\beta_0)}dN_n(s)-\int_0^TK_{h,t}(s)\alpha(\hat{\theta}_n,s)ds \\
&\quad\quad+\int_0^TK_{h,t}(s)\left(\frac{1}{\overline{\Psi}_n(s;\tilde{\beta}_n)}-\frac{1}{\overline{\Psi}_n(s;\beta_0)}\right)dN_n(s) \\
=&\int_0^TK_{h,t}(s)\frac{1}{\overline{\Psi}_n(s;\beta_0)}dM_n(s)+\int_0^TK_{h,t}(s)\left(\alpha(\theta_0,s)-\alpha(\hat{\theta}_n,s)+c_n\Delta_n(s)\right)ds \\
&+\int_0^TK_{h,t}(s)\left(\frac{1}{\overline{\Psi}_n(s;\tilde{\beta}_n)}-\frac{1}{\overline{\Psi}_n(s;\beta_0)}\right)dN_n(s) \\
=&\;I_1(t)+I_2(t)+I_3(t)+I_4(t),
\end{align*}
where
\begin{align}
I_1(t)&:=\int_0^TK_{h,t}(s)\frac{1}{\overline{\Psi}_n(s;\beta_0)}dM_n(s), \label{eq:defI1} \\
I_2(t)&:=c_n\int_0^TK_{h,t}(s)\Delta_n(s)ds, \label{eq:defI2} \\
I_3(t)&:=\int_0^TK_{h,t}(s)(\alpha(\theta_0,s)-\alpha(\hat{\theta}_n,s))ds, \label{eq:defI3} \\
I_4(t)&:=\int_0^TK_{h,t}(s)\frac{\overline{\Psi}_n(s;\beta_0)-\overline{\Psi}_n(s;\tilde{\beta}_n)}{\overline{\Psi}_n(s;\tilde{\beta}_n)\overline{\Psi}_n(s;\beta_0)}dN_n(s). \label{eq:defI4}
\end{align}
With this notation, we obtain that
\begin{align*}
&N\sqrt{h}T_n=N\sqrt{h}\int_0^T\left(\hat{\alpha}_n(t;\tilde{\beta}_n)-\alpha_{\textrm{smooth}}(\hat{\theta}_n,t)\right)^2w(t)dt \\
=&\;N\sqrt{h}\int_0^TI_1(t)^2w(t)dt+N\sqrt{h}\int_0^TI_2(t)^2w(t)dt+\sum_{i=3}^4N\sqrt{h}\int_0^TI_i(t)^2w(t)dt \\
&\quad+N\sqrt{h}\underset{i\neq j}{\sum_{i,j=1}^4}\int_0^TI_i(t)I_j(t)w(t)dt.
\end{align*}
Now, Lemmas \ref{lem:I3} and \ref{lem:I4} in the supplementary Section \ref{sec:support_lemmas} show that the terms involving $I_3(t)^2$ and $I_4(t)^2$ converge to zero. Furthermore, by using additionally the Cauchy-Schwarz Inequality and Lemmas \ref{lem:I2} and \ref{lem:I12}--\ref{lem:I14}, we see that also the cross-terms in the second row converge to zero. Thus, the asymptotic behavior of $T_n$ is entirely determined by the terms $I_1$ and $I_2$. The term involving $I_2$ equals exactly what we have to subtract in the formulation of Theorem \ref{thm:T1} (mind the definition of $c_n$)
\begin{align*}
N\sqrt{h}\int_0^TI_2(t)^2w(t)dt=\int_0^T\left(\int_0^TK_{h,t}(s)\Delta_n(s)ds\right)^2w(t)dt.
\end{align*}
Connecting all the previous results, we see that we have to show that
\begin{align*}
B_n^{-\frac{1}{2}}\left(N\sqrt{h}\int_0^TI_1(t)^2w(t)dt-h^{-\frac{1}{2}}A_n\right)\overset{d}{\to}\mathcal{N}(0,1)
\end{align*}
in order to complete the proof of Theorem \ref{thm:T1}. This is exactly the content of Proposition~\ref{prop:mmd} which will be shown Section \ref{app:proof} in the Supplement.

The main tool for the proof of Proposition \ref{prop:mmd} is Rebolledo's Martinale Central Limit Theorem (cf. Theorem \ref{thm:Rebolledo} in the Supplement). In Rebolledo's CLT a sequence of martingales converges to a Gaussian Limit Process. This requires two ingredients: Firstly, the sequence of processes must be a sequence of martingales and, secondly, the variation process of these martingales must stabilize. In our application the process of interest will be driven by $M_n$. $M_n$ is a martingale by definition of our counting process set-up. This is a time-wise property of the process and is as such not affected by the network structure (which is a spatial property). Since the asymptotic normality is a consequence of this time-wise martingale property, we may expect that the asymptotic distribution is the same as for the non-network case. The second ingredient, however, the stabilisation of the variance process, is a spatial property. This is naturally affected by the network structure. Thus, network dependence can possibly jeopardize a stabilisation of the variance process. We provide here assumptions under which this does not happen. Note that the speed of convergence is not of direct interest as long as the limit is correct. This task is, unfortunately, much more difficult than it might at first appear.

Therefore, we have to make assumptions on the network itself. Most importantly, this is Assumption (VX) which allows to reduce convergences of sums to correct behaviour of covariances as we detailed after Assumption (WC). For the network this is sufficient because we may use the notion of $\tilde{C}_{n,ij}$ which allows to disentangle the network from time. This is a strong assumption but we provide in Section \ref{app:lla} a set of alternative assumptions which may replace (WC) and does not make use of $\tilde{C}_{n,ij}$. For the covariates we do not have such a simple solution and therefore we require in Assumption (LL) concentration of the covariates.

\section{Numerical Results}
\label{sec:numerics}
\subsection{Implementation}
\label{subsec:implementation}
\footnote{The R-code is available at \url{https://github.com/akreiss/Baseline-Estimation.git}.} When it comes to applying the test in a real-world situation, one needs two independent sets of observation, and one also has to specify a model for the baseline function. The first set of data is used to compute the parametric estimates $(\hat{\theta}_n,\hat{\beta}_n)$ as in \eqref{eq:ML} as well as the partial likelihood estimate $\tilde{\beta}_n$ as in \eqref{eq:PL}. Then, the smoothed Nelson-Aalen estimator $\hat{\alpha}_n(t,\tilde{\beta}_n)$ will be computed based on the second set of data according to formula \eqref{eq:nelson-aalen}. After specifying a weight function, the test statistic $T_n$ can be computed as in \eqref{eq:Tn}. The critical value of the test is determined according to Theorem \ref{thm:T1}. Since we need the distribution of $T_n$ on the null-hypothesis, we set  $\Delta_n\equiv0$ but we have to estimate $A_n, B_n$ and $N$. All of the following estimates are computed based on the second dataset. For estimating $A_n$ we use its representation from Proposition \ref{prop:mmd}. To estimate $A_n$ we replace $\mu_n(r;\beta_0)$ by the corresponding average where $\beta_0$ is replaced by $\tilde{\beta}_n$. More precisely, $\mu_n(r;\beta_0)$ is estimated by
$$\left(\sum_{i,j\in V_n}C_{n,ij}(r)\right)^{-1}\left(\sum_{i,j\in V_n}C_{n,ij}(r)\exp\left(X_{n,ij}(r)'\tilde{\beta}_n\right)\right).$$
Furthermore, $mp_n(r)$ (the expected number of edges) is estimated by the observed number of edges at time $r$. The number $N$ can be estimated using the same estimates. In order to compute the integral in the definition of $N$ and in further quantities, we suppose that the network and the covariates remain constant over known time-intervals, e.g. they change at every hour but remain then constant for 60 minutes. Note that so far we have used only the partial likelihood estimator which works well on the hypothesis and on the alternative. The variance $B_n$ can be estimated using the same conventions with the following addition: To find critical values for the test, we may assume that we are on the hypothesis and hence we estimate the true global intensity $\alpha_0(t)$ by $\alpha(Z(t);\hat{\theta}_n)$. For finding the confidence area mentioned in Remark \ref{rem:conf_int} we use the nonparametric estimator $\hat{\alpha}_n(\cdot;\tilde{\beta}_n)$.

\subsection{Empirical Application}
\label{sec:data}
Here we apply the methodology from Section \ref{sec:model} to bike sharing data, details of the implementation are mentioned in Section \ref{subsec:implementation}. The data is based on $527$ bike stations in Washington D.C. and its surrounding. We consider these bike stations to be the vertices of a network. An interaction from bike station $i$ to bike station $j$ happens if someone rents a bike at $i$ and returns it at $j$. We consider such data from May 13, 2018 (Sunday) to May 26, 2018 (Saturday). The data is publicly available and can be downloaded from \url{https://www.capitalbikeshare.com/system-data}. Since many of the connections are rarely (or never) used, we attempt to model only the frequently used connections. To this end we construct a network of active pairs as follows: There is a link from bike station $i$ to station $j$ if there were at least ten bike rides in April 2018 from $i$ to $j$. So we only consider pairs that were active at least twice a week on average over a period of one month. This convention is somewhat arbitrary and a full analysis would require a sensitivity analysis with respect to this choice. Also note that this is a directed network. This is no problem because the result in Theorem \ref{thm:T1} holds analogously for directed networks.

\begin{figure}
\includegraphics[width=\textwidth]{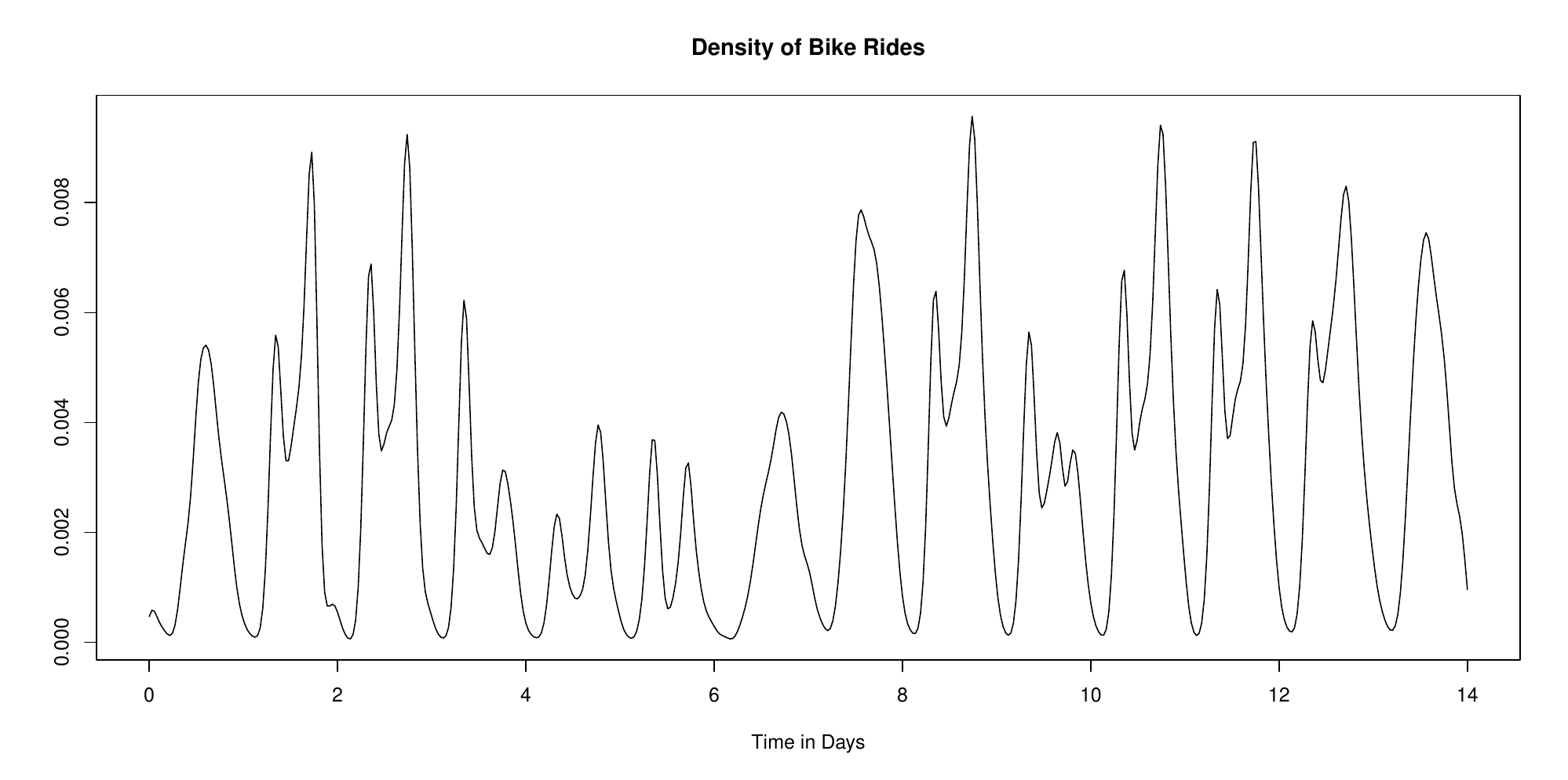}
\caption{Kernel density estimate for bike rides over a period of two weeks in 2018 (May 13 - May 26).}
\label{fig:density}
\end{figure}

Figure \ref{fig:density} shows a kernel density estimate of the times at which bike rides happened over the period from May 13 till May 26. The scale of the x-axis are days, with day $1$ and $8$ being Sundays and days $7$ and $14$ being Saturdays. One clearly sees a different pattern for working days and weekends. Moreover it appears that in the second week there were more bike rides than in the first week. One might suspect that this is due to the weather situation: Figure \ref{fig:weather} shows that there was more rain in the first week than in the second week. The weather information is publicly available from \url{https://www.wunderground.com/} and was collected at the Ronald Reagan Airport in Washington D.C. Even though the Washington D.C. area is large, it is plausible that this local weather information is a valid indicator for the weather in the entire region. Therefore, the temperature and the precipitation are system-wide covariates that are identical for all pairs. Let $T(t)$ denote the $\log$ of the temperature at time $t$ (in degree centigrade, there were no negative temperatures) and let $P(t)$ denote the precipitation (in centimeters). As a parametric model for $\alpha_0$ we consider the following ($\theta=(\theta_1,...,\theta_{17})$)

\begin{align*}
\alpha(\theta,t)=&\exp\Bigg(\theta_1+\begin{pmatrix}T(t) & T(t)^2\end{pmatrix}'\begin{pmatrix}\theta_2 \\ \theta_3\end{pmatrix}+\begin{pmatrix}P(t) & P(t)^2\end{pmatrix}'\begin{pmatrix}\theta_4 \\ \theta_5\end{pmatrix} \\
&\quad+\begin{pmatrix}\sin\left(t\frac{2\pi}{24}\right) \\\sin\left(2t\frac{2\pi}{24}\right) \\ \sin\left(3t\frac{2\pi}{24}\right)\end{pmatrix}'\begin{pmatrix}\theta_6 \\ \theta_7 \\ \theta_8\end{pmatrix}+\begin{pmatrix}\cos\left(t\frac{2\pi}{24}\right) \\ \cos\left(2t\frac{2\pi}{24}\right) \\ \cos\left(3t\frac{2\pi}{24}\right)\end{pmatrix}'\begin{pmatrix}\theta_9 \\ \theta_{10} \\ \theta_{11}\end{pmatrix} \\
&\quad+W(t)\begin{pmatrix}\sin\left(t\frac{2\pi}{24}\right) \\\sin\left(2t\frac{2\pi}{24}\right) \\ \sin\left(3t\frac{2\pi}{24}\right)\end{pmatrix}'\begin{pmatrix}\theta_{12} \\ \theta_{13} \\ \theta_{14}\end{pmatrix}+W(t)\begin{pmatrix}\cos\left(t\frac{2\pi}{24}\right) \\ \cos\left(2t\frac{2\pi}{24}\right) \\ \cos\left(3t\frac{2\pi}{24}\right)\end{pmatrix}'\begin{pmatrix}\theta_{15} \\ \theta_{16} \\ \theta_{17}\end{pmatrix}\Bigg),
\end{align*}

where $W(t)$ is an indicator function which equals $1$ if the $t$ lies on a weekend. Since the weather data is only available for every hour, we consider the functions $P(t)$ and $T(t)$ as piece-wise constant. Note that the essential assumption of continuity of the link function is not violated because we only require continuity in the parameter. The constant $L_{\alpha}(t)$ is bounded because temperature and precipitation as well as the other functions are bounded. The Assumption (VX) is fulfilled because the vertex-labels, i.e., the bike station IDs, were virtually randomly assigned to the bike stations. The continuity of $t\mapsto\mu_n(t;\beta_0)$ required in Assumption (C) is more of an issue. However, in the limit this is not a problem because the continuity is used for kernel approximations, thus these approximations are only problematic at the discontinuity. Since we integrate over the entire observation period, these approximation errors on short intervals are not of big importance.

\begin{figure}
\includegraphics[width=\textwidth]{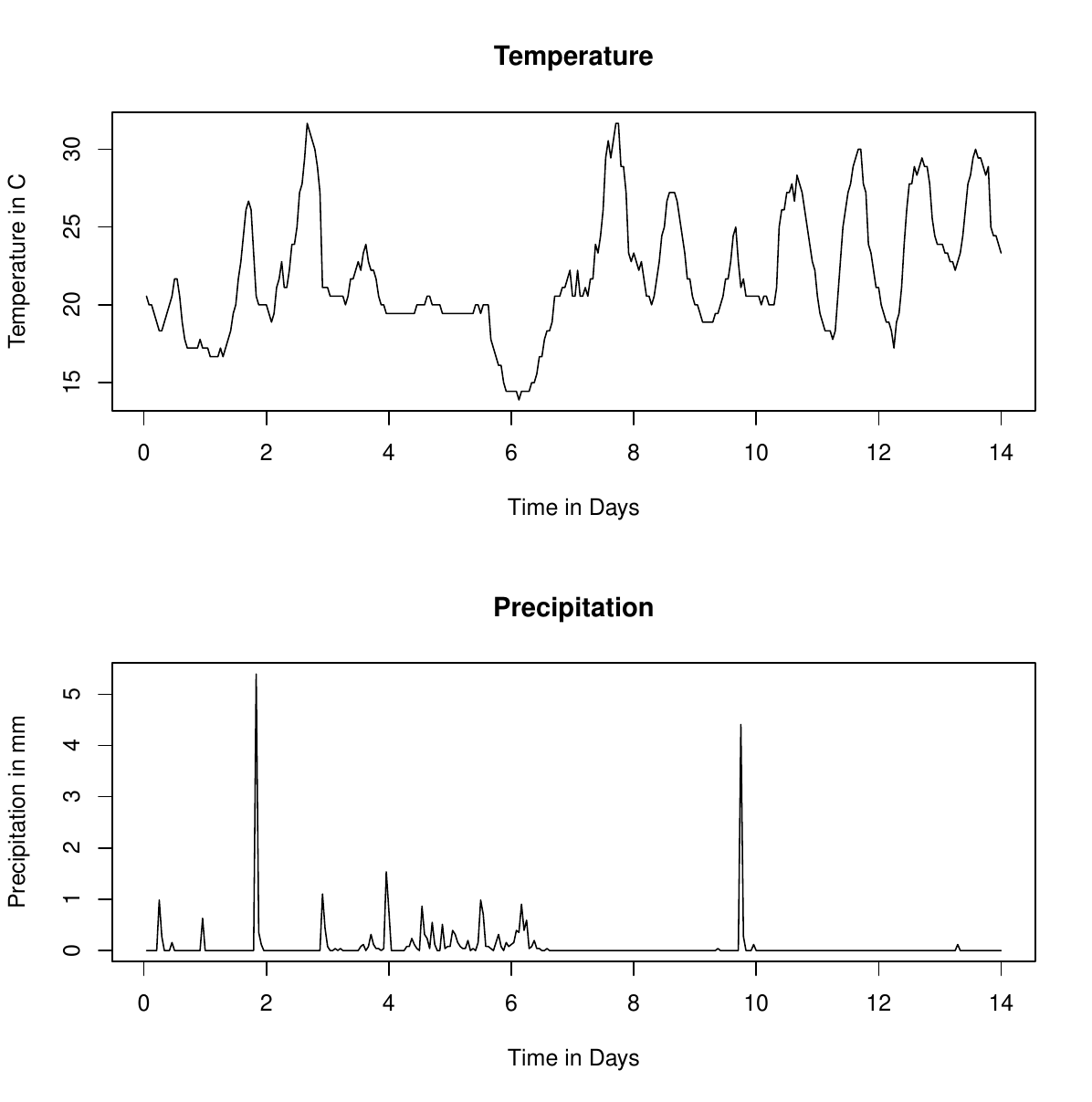}
\caption{Temperature and Precipitation in Washington D.C.}
\label{fig:weather}
\end{figure}

We consider covariates on pair-wise level based on distances. Let $d_{i,j}$ denote the logarithm of the biking time in minutes from station $i$ to station $j$ as returned by Google maps. Then, we consider a simple bi-variate covariate vector
$$X_{n,ij}=\binom{d_{i,j}}{ d_{i,j}^{\,2}}.$$
As a link function we consider $\Psi(x;\beta)=\exp(x'\beta)$ and hence we have a Cox proportional hazards model. We compute the estimates $\tilde{\beta}_n$ and $(\hat{\theta}_n,\hat{\beta}_n)$ as it was described in Section \ref{subsec:implementation} based on the first week, i.e, from May 13 to May 19. The estimates $\hat{\beta}_n$ and $\tilde{\beta}_n$ are almost identical and have roughly the values $(0.219,-0.147)'$. The negative sign of the parameter for $d_{i,j}^2$ shows that the intensity (as a function of the distance) has a maximum. This is plausible because people will likely not use the bikes for very short or very long tours. Now we compute the non-parametric Nelson-Aalen estimator of the baseline intensity based on the second week and compare the estimate with the parametric estimator. Here we have chosen the bandwidth to be $30$ min. This has been obtained by \emph{eye-balling}. The assumption of the independence of the two weeks is plausible because we expect that people do not base their biking decisions on the past week other than possibly indirectly through the weather which we control for.

Both estimators are shown in Figure \ref{fig:estimates}. We can see that in particular the drop in bike rides on Tuesday (Day 3) (possibly due to the rain) is well captured by the parametric model. Also the difference between weekends and working days is well visible in both estimators. However, there appears to be an overestimation of the activity on the weekends by the parametric estimator. We apply our test which is based on a modified $L^2$-distance (using smoothing and weighting) as described in \eqref{eq:Tn}. By using the weight function $w$ we can restrict the test to working days, weekends or the whole week. In all three cases, the test based on $T_n$, centralized and scaled according to Theorem \ref{thm:T1}, rejects the hypothesis of a baseline that is entirely driven by global covariates. The observed $p$-values are $1-\Phi(364)\approx 0$ (whole week), $1-\Phi(627)\approx 0$ (working days), and $1-\Phi(169)\approx 0$ (weekend). To investigate the deviation further, it is natural to compare the deviation between parametric and nonparametric estimator relative to the parametric estimator in the $L^2$-norm
$$\frac{\sqrt{\int_0^7\left(\hat{\alpha}_n(t;\tilde{\beta}_n)-\alpha(\hat{\theta}_n,t)\right)^2dt}}{\sqrt{\int_0^7\alpha(\hat{\theta}_n,t)^2dt}}\approx 0.32,$$
that is, around $32\%$ of the baseline $\alpha_0$ are contributed by the deviation $c_n\Delta_n$ and the remaining $68\%$ are explained by the global covariates. Our Theorem \ref{thm:T1} and the Remark \ref{rem:conf_int} allow us to make statements like this on, e.g., $95\%$ confidence level, as follows. According to Remark \ref{rem:conf_int} we can compute that, with probability $95\%$, $\|K_h\star(c_n\Delta_n)\|_w\geq0.22$. Dividing this number by $\|K_h\star\hat{\alpha}_n(\cdot;\tilde{\beta}_n)\|_w$ yields a contribution of around $29.4\%$ or more by $c_n\Delta_n$. Put differently, at most $70.6\%$ of $\alpha_0$ are explained through the global covariates. Using the weight function $w$ to restrict to working days only, yields a contribution of $c_n\Delta_n$ of $28.8\%$ to $\alpha_0$. Repeating the same exercise on weekends, yields a percentage of $31.2\%$. Hence, we see in both cases a relatively large misfit of the parametric model.

\begin{figure}
\includegraphics[width=\textwidth]{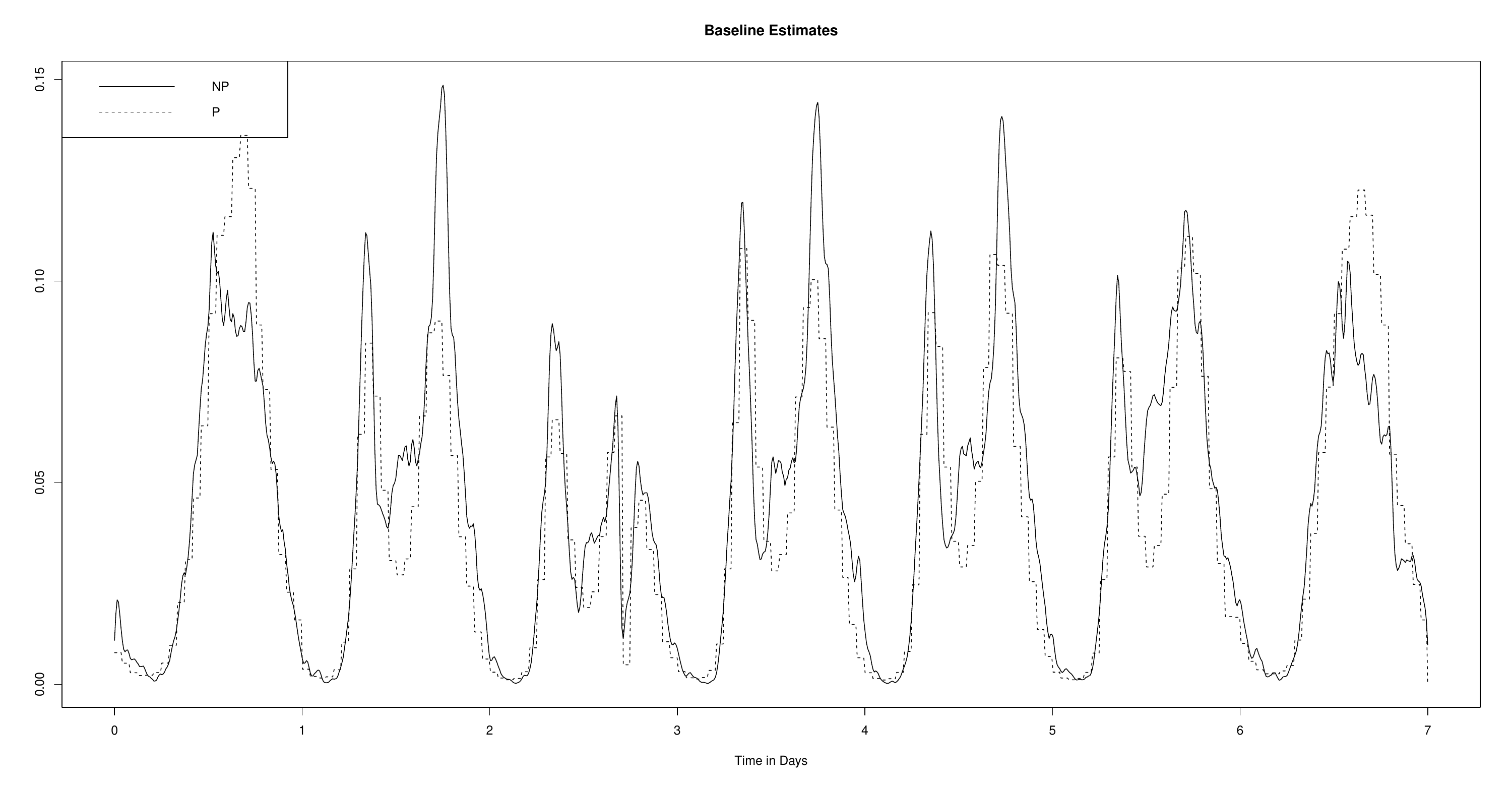}
\caption{Parametric (P) and Non-parametric (NP) estimates of the baseline intensity for the second week.}
\label{fig:estimates}
\end{figure}

\subsection{Simulation}
\label{subsec:simulation}
In this section we will simulate the power function of the test procedure. In order to simulate reasonable data, we fit the following model to the bike data from Section \ref{sec:data}: As pair-wise covariate $X_{ij}$ we use the number of joint neighbours of $i$ and $j$, that is, the number of directed paths from $i$ to $j$ of length $2$. For the global covariates we use the same parametric model as in Section \ref{sec:data}. This results in estimates $\hat{\theta}_n\in\IR^{17}$ and $\hat{\beta}_n\in\IR$. The fitted parameters are used in the subsequent simulations as true underlying parameters. In particular, we choose $\beta_0$ equal to $\hat{\beta}_n$. As baseline we use $\alpha_{\rho}(t):=\bar{\alpha}(t)+\rho(\alpha_1(t)-\bar{\alpha}(t))$, where $\rho\in[0,1]$ and $\alpha_1(t)=\alpha(\hat{\theta}_n,t)$ using the weather information from Section \ref{sec:data}. The function $\alpha_1$ changes hourly over a period of two weeks, thus it takes $14\cdot24$ different values. The function $\bar{\alpha}$ is constant and it is the average of $\alpha_1$:
$$\bar{\alpha}(t)=\log\left(\frac{1}{14\cdot24}\int_0^{14\cdot24}\alpha_1(s)ds\right)$$
Hence, for $\rho=0$ the null hypothesis is
$$H_0:\quad \alpha(t)=\exp(\theta)\textrm{ for some }\theta\in\IR,$$
The value of $\rho$ controls the deviation from the null hypothesis. Figure \ref{fig:baselines} shows $\alpha_{\rho}(t)$ for the choices $\rho=0, 0.05, 0.1, 0.2, 0.5$.

In order to obtain a network randomly changing in time, we adopt the model from \citet{JLY20} to our situation by allowing the parameters for weekends and working days to be different. We fit this model to the observed data from Section \ref{sec:data} and again we use the fitted parameters as model parameters for the simulations. We consider two simulation set-ups similar to Section \ref{sec:data}: One where the network has $n=527$ vertices (as in Section \ref{sec:data}) and one for a smaller network with $n=300$ vertices. In each instance of the simulation, we generate a period of two weeks and split the period in two segments of one week each. We then use the estimation procedure as explained in Section \ref{subsec:implementation}. We simulate $2,000$ such datasets from this model for the choices $\rho=0, 0.05, 0.1, 0.2, 0.5$ (cf. Figure \ref{fig:baselines} for a plot of the baselines). For each choice of $\rho$ we compute the percentage of rejections of our test when it is calibrated to level $\alpha=0.05$. The result is shown in Figure \ref{fig:power}. We see that for both network sizes the test complies with the desired level on the hypothesis $\rho=0$ and that it has power against the alternatives. We show in Section \ref{supp:simulation} of the supplement the finite sample distributions of the test statistics. It can be seen that the normal approximation is quite good justifying tests based on the asymptotic result.

\begin{figure}
\centering
\includegraphics[width=\textwidth]{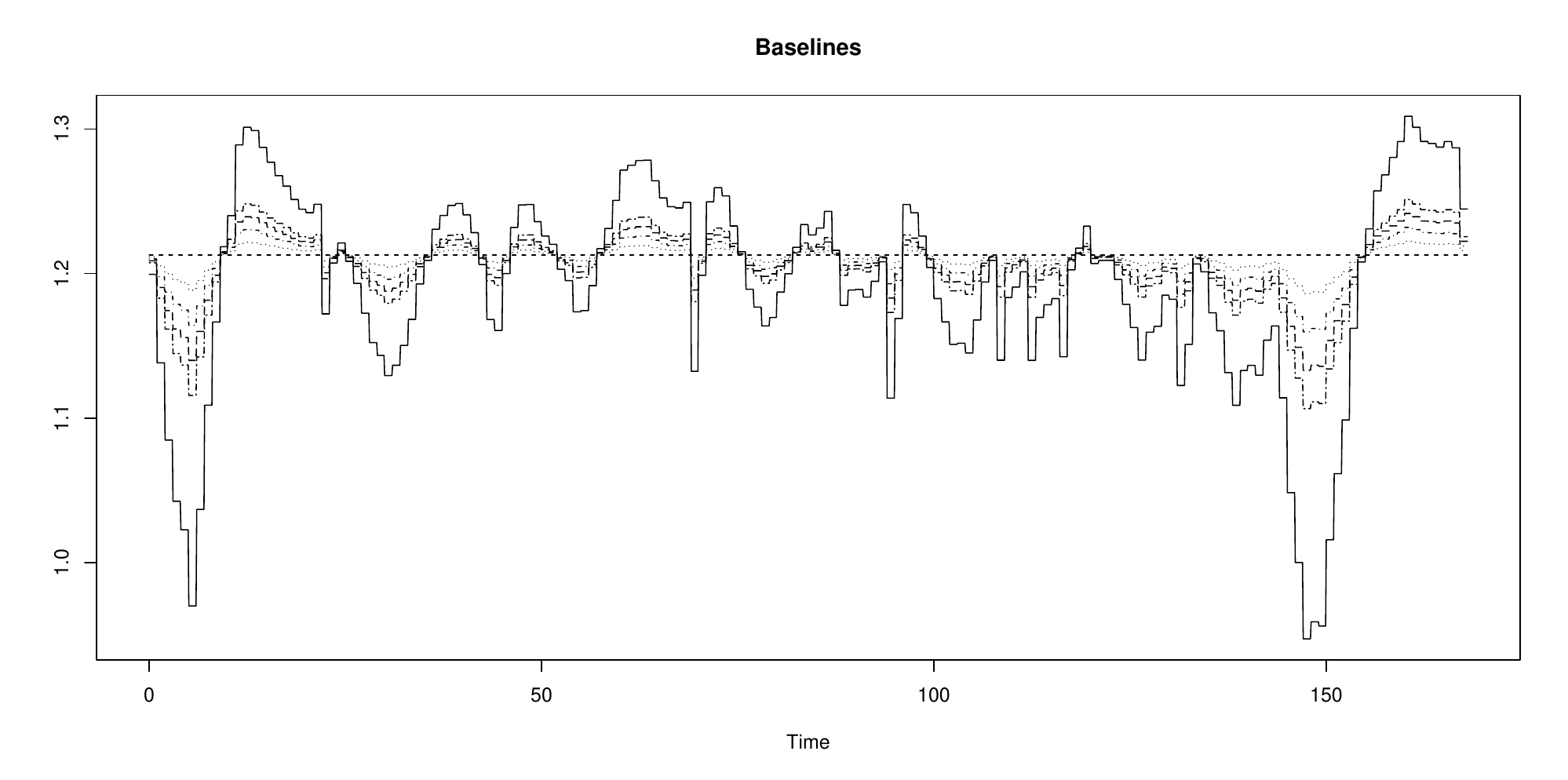}
\caption{Choices of baseline intensities used for the simulations (x-axis in hours)}
\label{fig:baselines}
\end{figure}

\begin{figure}
\centering
\includegraphics[width=0.6\textwidth]{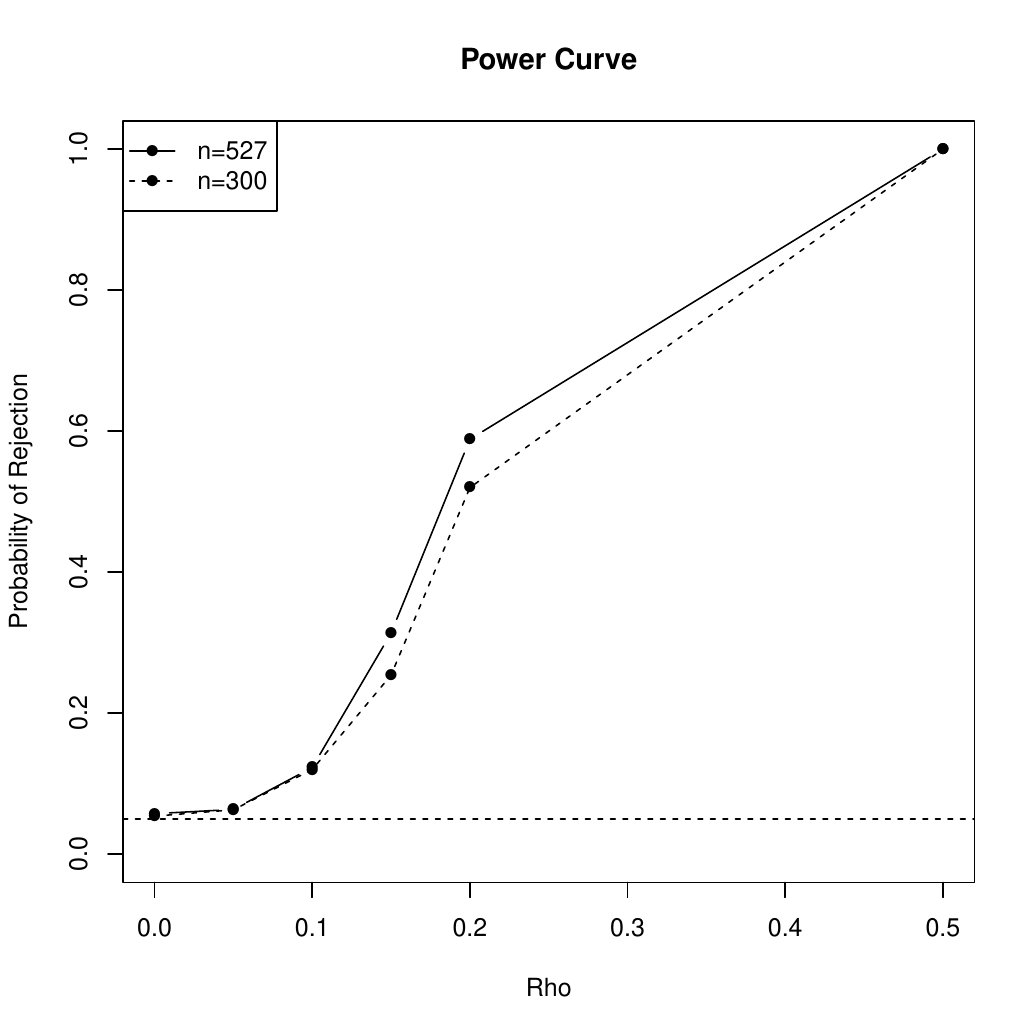}
\caption{Percentage of rejections of a test of level $\alpha=0.05$.}
\label{fig:power}
\end{figure}

\section{Conclusion}
\label{sec:conclusion}

In this paper we have proposed a test statistic for parametric network models of global covariates. We use weak dependence assumptions to derive the asymptotic behavior of the test statistic. This can be used to obtain critical values. Moreover, from a mathematical point of view, we have illustrated how weak dependence assumptions on networks can be developed in order to allow the derivation of asymptotic results. The assumptions allow for a certain dependence between neighbors in a network but require that the dependence is weak for vertices which are far apart in the network.

As an illustration we have applied the test to bike share data. We saw that even though the weather and time information used by the model appear to explain the data well, there is still some structure that is not unraveled by the particular model under study. This motivates further study in the direction of modeling interactions on networks. As a direct consequence, an interesting future research question would be to test if the fully non-parametric model from \citet{KMP19} offers needed flexibility or if a hybrid model like it was studied here offers enough flexibility. More generally, the link-functions $\alpha_0$ and $\Psi$ can be modelled parametrically or non-parametrically, yielding four possible models. In order to understand the situation in a real-world dataset well, all $\binom{4}{2}$ possible comparisons are of interest. In addition, a thorough treatment of the bandwidth would be desirable.

\newpage

\begin{center}
\textbf{\Large Supplement}
\end{center}

\section{Relation to Kreiss (2021) and Mixing}
\label{app:mixing}
The proofs of this paper are based on the concept of momentary-$m$-dependence which was introduced in \citet{K20} and which corresponds to our Assumption (mDep). We apply it in two places: In the proof of Proposition \ref{prop:mmd} we utilize Rebolledo's Martingale Convergence Theorem \ref{thm:Rebolledo} to obtain the desired convergence. We show the requirements of Proposition 2.9 and Theorem 2.11 from \citet{K20} to show that the quadratic variation process of a certain martingale converges which is a requirement of Rebolledo's Central Limit Theorem. So the main theoretical challenge here lies in finding a suitable expansion of the test statistic and to bound all nuisance terms. In addition, we have to prove in Lemma \ref{lem:I14} that a certain cross-term involving martingales converges to zero. After removing all nuisance terms the core of the problem is to bound the difference between two averages evaluated at $\beta_0$ and $\tilde{\beta}_n$ which is the sum of dependent random variables and therefore requires some weak dependence relation. We show in Lemma \ref{lem:r22} that momentary-$m$-dependence is a suitable concept to handle such a problem. Finally, the mixing concepts from \citet{K20} can be used to prove the Assumption (LL). We give more details now.

We show how Assumption (LL) in the main text can be shown to hold under certain mixing assumptions. Mixing on networks has been discussed in \citet{K20} and we give here just a quick overview. In order to formulate the mixing assumptions we need the following notation. Let $\delta_n=a \log n$ for some fixed $a>0$ and consider a random, time-varying partition $\left(G^t(k,b,\delta_n)\right)_{k,b}\subseteq E_{n,t}$ of the edge set, where $k=1,...,\mathcal{K}$ and $b=1,...,m$ (so some of the sets are empty). We call $G^t(k,b,\delta_n)$ the $b$-th group of type $k$. Note that $\mathcal{K}$ is fixed. This partition shall have the following properties
\begin{enumerate}
\item $(k,b)\neq (k',b')\Rightarrow G^t(k,b,\delta_n)\cap G^t(k',b',\delta_n)=\emptyset$,
\item $\bigcup_{k,b}G^t(k,b,\delta_n)=E_{n,t}$,
\item For $b\neq b'$: $(i,j)\in G^t(k,b,\delta_n)$, $(k,l)\in G^t(k,b',\delta_n)$ $\Rightarrow$ $d_t^n(ij,kl)\geq \delta_n$.
\end{enumerate}
Let $I_{n,ij}^{k,b,t}:=\Ind\left((i,j)\in G^t(k,b,\delta_n)\right)$ be the indicator which states if the pair $(i,j)$ is contained in the set $G^t(k,b,\delta_n)$. The idea is now to rewrite the sum of interest in $\mathcal{A}_n$ and $\tilde{\mathcal{A}}_n$ as follows
$$\sum_{i,j\in V_n}C_{n,ij}(t)\Psi(X_{n,ij}(t);\beta)=\sum_{k,b}U_{k,b}^t,\textrm{ where } U_{k,b}^t:=\sum_{i,j\in V_n}C_{n,ij}(t)\Psi(X_{n,ij}(t);\beta)I_{n,ij}^{k,b,t}.$$
The $\beta$-mixing coefficient of the network is then defined as follows
$$\beta_t(n):=\underset{k=1,...,\mathcal{K}}{\max_{B=1,...,m}}\beta\left(\left[U_{k,b}^t\right]_{b\leq B-1},U_{k,B}^{n,t}\right).$$
In order to make use of these concepts, we have to define the maximal size of a group of type $k$, $S_k(t):=\max_{b=1,...,m}|G^t(k,b,\delta_n)|$ and the indicator
$$\Gamma_n^t(c):=\left(\max_{k=1,...,\mathcal{K}}S_k(t)^2\leq c^2\frac{mp_n(t)}{\log mp_n(t)}\right)$$
for $c>0$ which indicates if no single group is too large. Moreover, we have to define the restriction of the random variables of interest to the event that the partition fulfills $\Gamma_n^t(c)=1$,
$$Y_{n,ij}(t):=\left(C_{n,ij}(t)\Psi(X_{n,ij}(t);\beta)-\IE\left(C_{n,ij}(t)\Psi(X_{n,ij}(t);\beta)\big| \Gamma_n^t=1\right)\right)\Gamma_n^t.$$
Let $E_{k,b}^{n,t}:=\IE\left(|G^t(k,b,\delta_n|\right)$ be the expected size of a group. We need to make the following assumptions.

\textbf{Assumption (Mixing)}\\
{\it Suppose that for any $c_0>0$ we can choose $a,c_1>0$ such that for $\delta_n=a\log n$, $\beta_t(n)\leq n^{-c_0}$ and $\IP(\Gamma_n^t(c_1)=0)\leq n^{-c_0}$ for all $t\in[0,T]$. Suppose furthermore that for each such $a$, there are constants $c_2,c_3>0$ such that for all $t\in[0,T]$
$$\frac{1}{mp_n(t)}\sum_{b=1}^mE_{k,b}^{n,t}\geq c_2,$$
and that for pairwise different vertices $v_1,...,v_4\in V_n$ and all $t\in[0,T]$
\begin{align*}
\textrm{Var}\left(Y_{n,v_1v_2}(t)I_{n,v_1v_2}^{k,b,t}\right)&\leq c_3\|\Psi(\cdot;\beta)\|_{\infty}^2E_{k,b}^{n,t}m^{-1}, \\
\textrm{Cov}\left(Y_{n,v_1v_2}(t)I_{n,v_1v_2}^{k,b,t},Y_{n,v_2v_3}(t)I_{n,v_2v_3}^{k,b,t}\right)&\leq c_3\|\Psi(\cdot;\beta)\|_{\infty}^2E_{k,b}^{n,t}m^{-\frac{3}{2}}, \\
\textrm{Cov}\left(Y_{n,v_1v_2}(t)I_{n,v_1v_2}^{k,b,t},Y_{n,v_3v_4}(t)I_{n,v_3v_4}^{k,b,t}\right)&\leq c_3\|\Psi(\cdot;\beta)\|_{\infty}^2E_{k,b}^{n,t}m^{-2}.
\end{align*}}
The existence of such a partition is in particular plausible in networks which are related to geographic locations (like the bike stations). The exponential decay of the $\beta$-mixing coefficient is motivated through an exponential growth of the $\delta_n$-neighborhoods, i.e., the number of pairs in a distance smaller than $\delta_n$. For an example we refer to \citet{K20}. For the assumptions on the correlations note that $\IE(Y_{n,ij}(t)^2)\approx \|\Psi(\cdot;\beta)\|_{\infty}^2p_n(t)$ and that, at least for reasonable partitions, $\IE(I_{n,ij}^{k,b,t})\approx E_{k,b}^{n,t}/mp_n(t)$.

Under these conditions we can prove Assumption (LL).
\begin{proposition}
\label{prop:supbound}
In addition to the assumptions of Section \ref{subsec:assumptions} (other than (LL)): Let $d_0>0$ be arbitrary but fixed. Suppose that the (Mixing) Assumption holds. Suppose furthermore that $\pi$ and $\mu_n$ are H\"{o}lder continuous with exponents $\alpha_1,\alpha_2$, respectively. Suppose that $\sup_{\beta\in B_n(d_0)}\|\Psi(\cdot;\beta)\|_{\infty}=O(1),$ and that
$$\left|\Psi(x_1,\beta)-\Psi(x_2,\beta)\right|\leq L_{\Psi}(\beta)\tilde{\omega}_{\Psi}\left(\|x_1-x_2\|\right),$$
where $\sup_{\beta\in B_n(d_0)}L_{\Psi}(\beta)=O(1)$. Let $k_0>0$ be such that $n^{-\alpha_1 k_0}mp_n/\log m\to0$ and $n^{-\alpha_2 k_0}\sqrt{m/\log m}\to0$ and $n^{-k_0}\sqrt{m/p_n\log m}\to0$. Let $\#\textrm{jumps}(C_n)[n^{-k_0}]$ denote the maximal number of jumps of $C_n$ in any interval of length $n^{-k_0}$. Assume that
\begin{align*}
&\frac{\#\textrm{jumps}(C_n)[n^{-k_0}]}{mp_n}=O_P\left(\sqrt{\frac{\log m}{mp_n}}\right),  \\
&\sup_{t_1,t_2:|t_1-t_2|\leq n^{-k_0}}\sum_{(i,j)\in V_n\times V_n}\frac{\tilde{\omega}_{\Psi}\left(\left\|X_{n,ij}(t_1)-X_{n,ij}(t_2)\right\|\right)}{mp_n}=O_P\left(\sqrt{\frac{\log m}{mp_n}}\right).
\end{align*}
Finally, let $\eta_0:=\inf_{\beta\in B_n(d_0)}\left\|\mu_n(\cdot;\beta)\right\|_{\infty}$ and assume that $\eta_0^{-1}=O(1)$. Then, for any $\delta>0$, there is $d_1>0$ such that
$$\IP\left(\mathcal{A}_n(d_1,d_0)\right)\geq1-\delta\textrm{ and }\IP\left(\tilde{\mathcal{A}}_n(d_1,d_0)\right)\geq1-\delta.$$
\end{proposition}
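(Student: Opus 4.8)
The plan is to prove the second event $\tilde{\mathcal{A}}_n(c_1,c_2)$ first — the concentration of the empirical average $\frac{1}{r_np_n(t)}X_n(t;\beta)$ around the conditional mean $\bar{X}_n(t;\beta)$ — and then to deduce the reciprocal version $\mathcal{A}_n(c_1,c_2)$ from it. Note that by (ID) one has $\IE(X_n(t;\beta))=r_np_n(t)\bar{X}_n(t;\beta)$, so $\bar X_n$ is precisely the mean of the average. For the transfer, observe that on $\tilde{\mathcal{A}}_n(c_1,c_2)$ the average differs from $\bar{X}_n(t;\beta)\ge\eta_0$ by a quantity of order $\sqrt{\log r_n/(r_np_n(t))}=o(1)$, so the average stays bounded away from zero for large $n$; since $\eta_0^{-1}=O(1)$, the map $x\mapsto 1/x$ is Lipschitz with a uniformly bounded constant on the relevant range, giving
\[
\left|\left(\frac{1}{r_np_n(t)}X_n(t;\beta)\right)^{-1}-\bar{X}_n(t;\beta)^{-1}\right|\le C\left|\frac{1}{r_np_n(t)}X_n(t;\beta)-\bar{X}_n(t;\beta)\right|
\]
for a constant $C=O(\eta_0^{-2})$. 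Hence $\tilde{\mathcal{A}}_n(c_1,c_2)\subseteq\mathcal{A}_n(Cc_1,c_2)$, and it suffices to establish the uniform deviation bound defining $\tilde{\mathcal{A}}_n$.

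The central difficulty is the uniformity over the continuum of $(t,\beta)$, which I would reduce to finitely many points by a grid argument. Cover $[0,T]$ by points spaced $n^{-k}$ apart and $B_n(c_2)$ by a net of comparable mesh, yielding $\mathrm{poly}(n)$ points in all. The oscillation of $\frac{1}{r_np_n(t)}X_n(t;\beta)$ between neighbouring points has three sources: jumps of the adjacency processes $C_{n,ij}$ in $t$, controlled by the assumption $\#\mathrm{jumps}(C_n)[n^{-k}]/(r_np_n)=O_P(\sqrt{\log r_n/(r_np_n)})$; continuous motion of the covariates $X_{n,ij}(t)$ in $t$, controlled through $\tilde{\omega}_\Psi$ by the second oscillation hypothesis; and variation in $\beta$, controlled through $\omega_\Psi$ — for the mean $\bar X_n$ by Lemma \ref{lem:contXbar} and for the empirical sum directly from the Lipschitz bound on $\Psi$ in $\beta$ — using $\omega_\Psi(n^{-k})\sqrt{r_n/(p_n\log r_n)}\to0$. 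The H\"older continuity of $p_n$ with the stated decay $H_{n,p}p_n^{-2}n^{-m\alpha}\to0$ lets me further replace $p_n(t)$ by its value at the nearest grid point. Each oscillation bound is $o(\sqrt{\log r_n/(r_np_n)})$, hence negligible against the target threshold, so it is enough to bound the deviation at the grid points.

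For a fixed grid point $(t,\beta)$ I would use the decomposition $X_n(t;\beta)=\sum_{k,m}U_{k,m}^t$ from the (Mixing) setup. On $\Gamma_n^t(c_1)=1$ no single group is too large, and for each fixed type $k$ the sequence $(U^t_{k,m})_m$ is $\beta$-mixing with coefficient $\beta_t(n)\le n^{-b}$ by the distance property of the partition; a standard coupling replaces these blocks by independent ones at a cost $n^{-b}$, after which a Bernstein inequality applies (there are only $\Kappa$ types, contributing a fixed factor). The variance proxy is supplied by the three covariance bounds of (Mixing): classifying pairs of edges by the number of shared vertices (two, one, or none) and counting how many such pairs occur in a group, these bounds sum to a variance of the correct order $r_np_n(t)$ for the unnormalised sum, while the range is $O(1)$ since $\Psi$ is bounded. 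The desired deviation of the unnormalised sum is $c_1\sqrt{r_np_n(t)\log r_n}$, so Bernstein yields a tail of order $r_n^{-\kappa c_1^2}$, which beats the $\mathrm{poly}(n)$ cardinality of the grid once $c_1$ is chosen large; the events $\{\Gamma_n^t(c_1)=0\}$ contribute at most $n^{-b}$ each and are absorbed the same way. A union bound over the grid then gives $\IP(\tilde{\mathcal{A}}_n(c_1,c_2))\ge1-\delta$ for $c_1$ large, and the first paragraph upgrades this to $\mathcal{A}_n$.

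I expect the main obstacle to be the pointwise step: producing an exponential, Bernstein-type tail for the network-dependent sum $\sum_{k,m}U_{k,m}^t$ with the right variance proxy, while scaling the deviation so that the resulting polynomial-in-$r_n$ tail survives the union bound over the $n^{-k}$-grid. Making the variance bookkeeping from the three covariance estimates match the effective sample size $r_np_n(t)$, and simultaneously controlling every grid-oscillation error below the common scale $\sqrt{\log r_n/(r_np_n)}$, is the delicate part; the reciprocal transfer to $\mathcal{A}_n$ and the $\beta$-net reduction are comparatively routine given the stated moduli-of-continuity hypotheses and $\eta_0^{-1}=O(1)$.
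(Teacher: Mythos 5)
Your proposal follows essentially the same route as the paper's proof: a grid/chaining reduction over $[0,T]\times B_n(c_2)$ whose oscillation terms are controlled exactly as you describe (jumps of $C_n$, the $\tilde{\omega}_{\Psi}$ and $\omega_{\Psi}$ moduli, and H\"older continuity of $p_n$), a pointwise exponential tail bound at the grid points followed by a union bound over the polynomially many points, and finally the transfer from $\tilde{\mathcal{A}}_n(c_1,c_2)$ to $\mathcal{A}_n(c_1,c_2)$ using that the empirical average stays near $\bar{X}_n\geq\eta_0$ so that inversion costs only a factor $O(\eta_0^{-2})$. The only difference is at the pointwise step: where you sketch a coupling-plus-Bernstein argument built from the (Mixing) covariance bounds, the paper simply invokes the ready-made exponential inequality of Lemma 2.16 in \citet{K20}, which encapsulates precisely that argument.
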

\begin{proof}
The proof uses a chaining type argument: Let
\begin{equation*}
T_{n,k_0}:=\left\{\left(\frac{a_j}{n^{k_0}}\right)_{j=0,...,q}:a_j\in\IZ\cdot\frac{2}{\sqrt{1+q}}, j=0,...,q\right\}\cap\left([0,T]\times \overline{B}_n(d_0)\right)
\end{equation*}
be a finite grid of $[0,T]\times B_n(d_0)$ of mesh $2n^{-k_0}/\sqrt{1+q}$. Then,
$$\#T_{n,k}=O\left(n^{(1+q)k_0}/(mp_n)^{q/2}\right).$$
Denote by $\pi_{n,k_0}:[0,T]\times B_n(d_0)\to T_{n,k_0}$ the projection, i.e, for any $(t,\beta)\in[0,T]\times B_n(d_0)$ it holds that $\left\|(t,\beta)-\pi_{n,k_0}(t,\beta)\right\|\leq n^{-k_0}$. In order to avoid too complicated notation we write $\pi_{n,k_0}(t,\beta)=\left(\pi_{n,k_0}(t),\pi_{n,k_0}(\beta)\right)$ such that $\pi_{n,k_0}(t)$ is the time grid point which lies closest to $t$ and $\pi_{n,k_0}(\beta)$ is the parameter grid point lying closest to $\beta$.

We begin by studying $\tilde{\mathcal{A}}_n(d_1,d_0)$ and rewrite the supremum in its definition in the following way (the following holds for all $d_0,d_1>0$)
\begin{align}
&\sup_{t\in[0,T],\beta\in B_n(d_0)}\sqrt{p_n(t)}\left|\frac{1}{mp_n(t)}\overline{\Psi}_n(t;\beta)-\mu_n(t;\beta)\right| \nonumber \\
\leq&\sup_{t\in[0,T],\beta\in B_n(d_0)}\left|\frac{1}{m\sqrt{p_n(t)}}\overline{\Psi}_n(t;\beta)-\frac{1}{m\sqrt{p_n(\pi_{n,k}(t))}}\overline{\Psi}_n(\pi_{n,k}(t,\beta))\right| \label{eq:cont1} \\
&\quad+\sup_{t\in[0,T],\beta\in B_n(d_0)}\left|\sqrt{p_n(\pi_{n,k}(t))}\mu_n(\pi_{n,k}(t,\beta))-\sqrt{p_n(t)}\mu_n(t;\beta)\right| \label{eq:cont2} \\
&\quad+\sup_{t\in[0,T],\beta\in B_n(d_0)}\left|\frac{1}{m\sqrt{p_n(\pi_{n,k}(t))}}\overline{\Psi}_n(\pi_{n,k}(t,\beta))-\sqrt{p_n(\pi_{n,k}(t))}\mu_n(\pi_{n,k}(t,\beta))\right|. \label{eq:finsup}
\end{align}
Let now $d_0>0$ arbitrary but fixed. We note firstly that, by H\"{o}lder continuity of $p_n$ and $\mu_n$ and the choice of $k=k_0$, we get that \eqref{eq:cont2} converges to zero with the prescribed rate for any $d_1>0$. For \eqref{eq:cont1}, we note that
\begin{align*}
&\left|\frac{1}{m\sqrt{p_n(t)}}\overline{\Psi}_n(t;\beta)-\frac{1}{m\sqrt{p_n(\pi_{n,k}(t))}}\overline{\Psi}_n(\pi_{n,k}(t,\beta))\right| \\
=&\frac{1}{m}\left|\overline{\Psi}_n(t;\beta)\right|\cdot\left|\frac{1}{\sqrt{p_n(t)}}-\frac{1}{\sqrt{p_n(\pi_{n,k}(t))}}\right|+\frac{1}{m\sqrt{p_n(\pi_{n,k}(t))}}\left|\overline{\Psi}_n(t;\beta)-\overline{\Psi}_n(\pi_{n,k}(t,\beta))\right|.
\end{align*}
The first part converges to zero when $k=k_0$ by the assumption of H\"{o}lder continuity of $p_n$ and since
$$\frac{1}{m}\overline{\Psi}_n\leq\sup_{\beta\in B_n(d_0)}\|\Psi(\cdot;\beta)\|_{\infty}$$
is bounded by assumption. For the second part we note that for general $t_1,t_2\in[0,T]$ and $\beta_1,\beta_2\in B_n(d_0)$ (below $\#\textrm{jumps}(C_n)[t_1,t_2]$ denotes the number of jumps of the process $C_n$ in the interval $[t_1,t_2]$)
\begin{align*}
&|\overline{\Psi}_n(t_1;\beta_1)-\overline{\Psi}_n(t_2,\beta_2)| \\
\leq&\sum_{(i,j)\in V_n\times V_n}\Bigg[\left|C_{n,ij}(t_1)-C_{n,ij}(t_2)\right|\Psi(X_{n,ij}(t_1),\beta_1) \\
&\quad+C_{n,ij}(t_2)\left|\Psi(X_{n,ij}(t_1),\beta_1)-\Psi(X_{n,ij}(t_1),\beta_2)+\Psi(X_{n,ij}(t_1),\beta_2)-\Psi(X_{n,ij}(t_2),\beta_2)\right|\Bigg] \\
\leq&\left\|\Psi(\cdot;\beta_1)\right\|_{\infty}\cdot\#\textrm{jumps}(C_n)[t_1,t_2]+mL_{\Psi}\|\beta_1-\beta_2\| \\
&\quad+\sum_{(i,j)\in V_n\times V_n}L_{\Psi}(\beta_2)\tilde{\omega}_{\Psi}\left(\left\|X_{n,ij}(t_1)-X_{n,ij}(t_2)\right\|\right).
\end{align*}
Hence we obtain
\begin{align*}
&\sup_{t\in [0,T],\beta\in B(d_0)}\frac{1}{m\sqrt{p_n(\pi_{n,k}(t))}}\left|\overline{\Psi}_n(t;\beta)-\overline{\Psi}_n(\pi_{n,k}(t,\beta))\right| \\
\leq&\sup_{\beta\in B_n(d_0)}\frac{\left\|\Psi(\cdot;\beta_1)\right\|_{\infty}\cdot\#\textrm{jumps}(C_n)[n^{-k}]}{m\sqrt{p_n}}+\frac{L_{\Psi}n^{-k}}{\sqrt{p_n}} \\
&\quad+\sup_{\beta\in B_n(d_0)}L_{\Psi}(\beta)\cdot\sup_{t_1,t_2:|t_1-t_2|\leq n^{-k}}\sum_{(i,j)\in V_n\times V_n}\frac{\tilde{\omega}_{\Psi}\left(\left\|X_{n,ij}(t_1)-X_{n,ij}(t_2)\right\|\right)}{m\sqrt{p_n}}.
\end{align*}
By assumption, we may choose for $k=k_0$ the number $d_1>0$ large enough to obtain that the above exceeds $d_1\sqrt{\log m/m}$ with probability at most $\delta$. The supremum \eqref{eq:finsup} can be handled by the exponential inequality in Lemma 2.16 in \citet{K20} (we can apply this Lemma by Assumption (Mixing)). We have that for $d_1$ (after possibly increasing it further) chosen large enough (note that $\log mp_n(t)\leq\log m$)
\begin{align*}
&\IP\left(\underset{\beta\in B_n(d_0)}{\sup_{t\in[0,T],}}\sqrt{p_n(\pi_{n,k}(t))}\left|\frac{1}{mp_n(\pi_{n,k}(t,\beta))}\overline{\Psi}_n(\pi_{n,k}(t))-\mu_n(\pi_{n,k}(t,\beta))\right|>d_1\sqrt{\frac{\log m}{m}}\right) \\
\leq&\# T_{n,k}\sup_{t\in[0,T],\beta\in B_n(d_0)}\IP\left(\left|\frac{1}{mp_n(t)}\overline{\Psi}_n(t,\beta)-\mu_n(t;\beta)\right|>d_1\sqrt{\frac{\log m}{mp_n(t)}}\right)\to0.
\end{align*}
So we have proved the convergence rate of the process $\overline{\Psi}_n$ itself. For the convergence of the reciprocals denote
$$\eta_0:=\inf_{\beta\in B_n(d_0)}\left\|\mu_n(\cdot;\beta)\right\|.$$
We make the observation that on $\tilde{\mathcal{A}}_n(d_1,d_0)$
\begin{align*}
&\sqrt{p_n(s)}\left|\left(\frac{1}{mp_n(s)}\overline{\Psi}_n(s;\beta)\right)^{-1}-\mu_n(s;\beta)^{-1}\right| \\
=&\left(\frac{1}{mp_n(s)}\overline{\Psi}_n(s;\beta)\right)^{-1}\mu_n(s;\beta)^{-1}\sqrt{p_n(s)}\left|\frac{1}{mp_n(s)}\overline{\Psi}_n(s;\beta)-\mu_n(s;\beta)\right| \\
\leq&\left(\eta_0-d_1\sqrt{\frac{\log m}{mp_n(s)}}\right)^{-1}\eta_0^{-1}\sqrt{p_n(s)}\left|\frac{1}{mp_n(s)}\overline{\Psi}_n(s;\beta)-\mu_n(s;\beta)\right| \\
\leq&\left(\eta_0-d_1\sqrt{\frac{\log m}{mp_n(s)}}\right)^{-1}\eta_0^{-1}d_1\sqrt{\frac{\log m}{m}}
\end{align*}
which implies the statement because $\eta_0^{-1}=O(1)$.
\end{proof}

\section{Discussion of Assumptions}
\label{supp:discussion_assumptions}
\subsection{Motivation for Assumption (mDep)}
\label{app:mDep_mot}
In order to motivate why (mDep) is necessary, we consider the following. Let $\epsilon_1,...,\epsilon_n$ be iid random variables with finite variance and zero expectation and let $X_{n,i}$ be further random variables. We are looking for conditions such that
\begin{equation}
\label{eq:mDep_mot}
\frac{1}{\sqrt{n}}\sum_{i=1}^nX_{n,i}\epsilon_i=o_P(1).
\end{equation}
Suppose that $(X_{n,i},\epsilon_i)_{i=1,...,n}$ is an iid sequence with $\IE(\epsilon_1|X_{n,1})=0$ and $\IE(\epsilon_1^2|X_{n,1})=\sigma^2$. Then, $\IE(X_{n,i}X_{n,j}\epsilon_i\epsilon_j)=\IE(X_{n,i}X_{n,j}\epsilon_i\IE(\epsilon_j|X_{n,j}))=0$ when $i\neq j$. Hence,
$$\IE\left(\left(\frac{1}{\sqrt{n}}\sum_{i=1}^n\epsilon_iX_{n,i}\right)^2\right)=\frac{1}{n}\sum_{i,j=1}^n\IE(X_{n,i}X_{n,j}\epsilon_i\epsilon_j)=\IE(X_{n,1}^2)\sigma^2.$$
Thus, $\IE(X_{n,1}^2)\to0$ is a sufficient condition for \eqref{eq:mDep_mot}. However, this arguments relies on $\IE(\epsilon_1|X_{n,1})=0$. If this is not the case, it is difficult to find a general treatment of the above problem. Suppose for example that $X_{n,i}:=f_i(\epsilon_1,...,\epsilon_n)$ is a function of all $\epsilon_1,...,\epsilon_n$. Hence, in the following we will no longer assume that $(X_{n,i},\epsilon_i)_{i=1,...,n}$ is an iid sequence, but $(\epsilon_i)_{i=1,...,n}$ remains an iid sequence. We will take the following route to remedy this issue: Let $\tilde{\epsilon}^{ij}_k:=\epsilon_k$ for $k\neq i,j$ and $\tilde{\epsilon}^{ij}_k:=0$ for $k=i,j$. With this we put
$$\tilde{X}_{n,k}^{ij}:=f_k(\tilde{\epsilon}^{ij}_1,...,\tilde{\epsilon}^{ij}_n).$$
Then, $\IE(\tilde{X}_{n,i}^{ij}\tilde{X}_{n,j}^{ij}\epsilon_i\epsilon_j)=0$ for $i\neq j$. Hence, \eqref{eq:mDep_mot} follows if $\sup_{i=1,...,n}\IE\left(\left(\tilde{X}_i^{ii}\right)^2\epsilon_i^2\right)\to0$ and $X_{n,i}X_{n,j}-\tilde{X}_{n,i}^{ij}X_{n,j}^{ij}$ is very small in a suitable sense. Such ideas have been transferred to counting processes by \citet{MN07} and to counting processes on networks in \citet{K20}. Suppose to this end that for each $t,n$, we have $d_t^n:V_n^2\times V_n^2\to[0,\infty)$, a random distance function between pairs. For $I \subseteq V_n \times V_n$, set $d_t^n(ij,I):=\inf_{(k,l)\in I}d_t^n(ij,kl)$,  with the convention $d_t^n(ij,\emptyset):=\infty$. In order to be able to define the correct analogue to uncorrelatedness as mentioned above we introduce the following augmented $\sigma$-field for any sets $I,J\subseteq V_n\times V_n$ and any $M>0$, where we denote by $\mathcal{F}_1\wedge\mathcal{F}_2$ the $\sigma$-field generated by the union of the $\sigma$-fields $\mathcal{F}_1$ and $\mathcal{F}_2$:
\begin{align*}
&\tilde{\mathcal{F}}_{I,t}^{n,J,M}:=\sigma\left(X_{n,ij}(\tau):(i,j)\in I, \tau\leq t\right) \\
&\quad\quad\wedge\sigma\left(\mathcal{N}_{n,ij}(r)\Ind(d_s^n((i,j),J)\geq M):s\leq \max(0,t-4h), r\leq s+6h, (i,j)\in V_n\times V_n\right), \\
&\mathcal{N}_{n,ij}(r):=\left(N_{n,ij}(r),X_{n,ij}(r),C_{n,ij}(r)\right).
\end{align*} 
Intuitively speaking the above filtration contains all information of the covariates belonging to pairs in $I$ and it contains extra (future) information about pairs that are far from $J$. We illustrate below how this can be applied. Consider to this end the functions $\phi_{n,ij,kl}^I(s_1,s_2)$ as defined in \eqref{eq:def_phi} and the sentence below (there $f_n(r,s)=\int_0^ThK_{h,t}(s)K_{h,t}(r)w(t)dt$ is symmetric). The exact meaning of these functions is at this point not important and the following discussion remains correct for all functions with the following measurability and summing properties: The functions $\phi_{n,ij,kl}(s_1,s_2)$ will play the role of $X_{n,i}X_{n,j}$ in the initial example and $\phi_{n,ij,kl}^{\{ij,kl\}}(s,r)$ will be the approximation $\tilde{X}_{n,i}^{ij}\tilde{X}_{n,j}^{ij}$. Rather than multiplying with an iid sequence we will later integrate with respect to the counting process martingales $M_{n,ij}$ and $M_{n,kl}$ and consider
$$\int_0^T\int_{s_1-2h}^{s_1-}2\phi_{n,ij,kl}(s_1,s_2)dM_{n,kl}(s_2)dM_{n,ij}(s_1).$$
When replacing the integrand $\phi_{n,ij,kl}(s_1,s_2)$ by the approximation $\phi_{n,ij,kl}^{\{ij,kl\}}(s_1,s_2)$, we see that $s_1\mapsto\int_{s_1-2h}^{s_1-}2\tilde{\phi}_{n,ij,kl}^{\{ij,kl\}}(s_1,s_2)dM_{n,kl}(s_2)$ is predictable with respect to $\tilde{\mathcal{F}}_{\{ij,kl\},t}^{m,\{ij,kl\},M}$ and Assumption (mDep) makes sure that $M_{n,ij}$ is a martingale with respect to a (slightly larger) filtration. Hence, we may use martingale results in order to conclude that the expectation of the stochastic integral equals zero.

To finish the discussion, we have to argue why the approximation is good enough. Consulting the definition of $\phi_{n,ij,kl}(s_1,s_2)$ we see that it is a sum and $\phi_{n,ij,kl}^I$ is obtained by removing those pairs from the sum which are close to $I$. Hence, the quality of the approximation is closely related to the appearance of hubs, i.e., pairs which have many close neighbours. Therefore we need to balance the size of hubs (the number of their close neighbouring pairs) and the likelihood of their appearance. This is the content of the next subsection.

\subsection{Hubs}
\label{app:hubs}

In this section we will be precise about the assumptions which a network with hubs needs to fulfill. Intuitively speaking, we will require that hub-size and hub-frequency need to balance each other. In order to make this precise we consider random variables $H_{\rm hub}^{ij}$ for $i,j\in V_n$ with
$$H_{\rm hub}^{ij}\geq \Ind(K_M^{ij} \geq n_{\rm hub}).$$
We call $H_{\rm hub}^{ij}$ the hub-ability of $(i,j)$. Note that we may simply put $H_{\rm hub}^{ij}=\Ind(K_M^{ij}\geq n_{\rm hub})$ or even $H_{\rm hub}^{ij}\equiv1$. However, in this case it might be difficult to justify the measurability and other properties which we will impose. We think therefore of $H_{\rm hub}^{ij}$ more like an external variable which determines whether the pair $(i,j)$ has the potential of becoming a hub rather than the indicator which says if $(i,j)$ has actually become a hub. The assumption (HSR) below should be understood as adding further constraints to $H_{\rm hub}^{ij}$ making trivial choices possibly infeasible. We define furthermore for $A\subseteq V_n\times V_n$ the variables $H_{\rm hub}^A:=\sup_{(i,j)\in A}H_{\rm hub}^{ij}$ and $K_M:=\max_{k,l\in V_n}K_M^{kl}$. With these definitions we denote
$$B^{ij}_n := n_{\rm hub} + H^{ij}_{\rm hub}K_M.$$
Under the no-hubs assumption (NH) from the main text we have $B_n^{ij}=n_{\rm hub}$ and $H_{\rm hub}^{ij}\equiv0$. Therefore the following condition (HSR) is trivially implied by (NH). In order to write down the condition, some notation is required:
\begin{align*}
\Phi_n(t,r)&:=\int_t^{r+2h}\frac{1}{hmp_n(s)}\bar{\Psi}_n(s;\beta_0)ds, \\
E_n(t)&:=\sup_{s\in[t,t+2h]}\frac{1}{N}\sum_{i,j\in V_n}C_{n,ij}(s),\qquad E_n:=\max\left(1,\sup_{t\in[0,T]}E_n(t)\right).
\end{align*}
$\Phi_n$ is a smoothed version of the overall hazard $\overline{\Psi}_n$ and $E_n(t)$ and $E_n$ are average numbers of observed edges. For simplicity of notation we will write
\begin{equation}
\label{eq:def_Jkl}
\mathcal{J}_{kl}(t):=C_0\left(4h\sup_{\rho\in[t-2h,t+2h]}C_{n,kl}(\rho)+N_{n,kl}[t-2h,t+2h]\right)
\end{equation}
with $C_0:=\|\alpha_0\|_{\infty}\|\Psi(\cdot;\beta_0)\|_{\infty}$. Let finally $F_{ij}(t):=\{(k,l): d_t^n(ij,kl)\geq M\}$ be the set of all pairs which are \emph{far away} from $(i,j)$ at time $t$.

\textbf{Assumption (HSR)}: (Hub Size Restriction) \\
\emph{Let $B_n^{ij}$ be measurable with respect to $\mathcal{F}_0^n$ and suppose that $n_{\rm hub}\leq K_M$. Suppose furthermore that there is a deterministic sequence $\kappa_n\geq K_M$ such that
\begin{align}
&\underset{i,j,k,l\in V_n}{\sup_{s,t\in[0,T]}}\frac{\kappa_n^2}{N^2}\IP\left(H_{\rm hub}^{\{ij,kl\}}=1\big|C_{n,ij}(t)=1,C_{n,kl}(s)=1\right)\to0, \label{eq:HSR1} \\
&\underset{i,j,k,l\in V_n}{\sup_{s,t\in[0,T]}}\kappa_n^3\IP\left(H_{\rm hub}^{ij}=1\big|C_{n,ij}(t)=1,C_{n,kl}(s)=1\right)=O(1), \label{eq:HSR2} \\
&\underset{i,j,k,l\in V_n}{\sup_{s,t\in[0,T]}}\kappa_n^6\IP\left(H_{\rm hub}^{ij}=1,H_{\rm hub}^{kl}=1\big|C_{n,ij}(t)=1,C_{n,kl}(s)=1\right)=O(1), \label{eq:HSR3} \\
&\left(\kappa_n^2+\frac{\kappa_n^4}{N}\right)\IE\left(\left(\sum_{k,l\in V_n}\int_0^T\frac{H_{\rm hub}^{kl}}{mp_n(s)}d|M_{n,kl}|(s)\right)^2\right)=O(1), \label{eq:HSR4} \\
&\underset{s\in[0,T]}{\sup_{i,j\in V_n}}\kappa_n^2\IE\left(\left(\underset{(k,l)\neq(i,j)}{\sum_{k,l\in V_n}}\int_{s-2h}^{s-}\frac{H_{\rm hub}^{\{ij,kl\}}}{mp_n(t)}d|M_{n,kl}|(t)\right)^2\Big|C_{n,ij}(s)=1\right)\to0, \label{eq:HSR5} \\
&\frac{\kappa_n^3}{m^4p_n^4}\underset{ij\neq i'j'}{\sum_{ij,i'j'}}\IE\Bigg[\sup_{s\in[0,T]}\sum_{k,l\in V_n}C_{n,kl}(s)\int_0^T\int_{t-2h}^{t-}\sum_{k',l'\in V_n}H_{\rm hub}^{k'l'}\mathcal{J}_{k'l'}(t)  \nonumber \\
&\qquad\times\sup_{(a,b)\notin F_{ij}(t-2h)}H_{\rm hub}^{ab}\mathcal{J}_{ab}(t)d|M_{n,i'j'}|(r)d|M_{n,ij}|(t)\Bigg]\to0. \label{eq:HSR6} \\
&\kappa_n^2\int_0^T\IE\left(\left(\sum_{k,l\in V_n}\int_{t-2h}^{t-}\frac{H_{\rm hub}^{kl}}{mp_n(r)}\Phi_n(t,r)d|M_{n,kl}|(r)\right)^2\right)dt\to0, \label{eq:HSR7} \\
&\kappa_n^2\underset{s\in[0,T]}{\sup_{i,j\in V_n}}\IE\left(\left(\underset{(k',l')\neq(k,l)}{\sum_{k,l,k',l'\in V_n}}\frac{H_{\rm hub}^{\{kl,k'l'\}}}{N^2}\int_0^T\int_{r-2h}^{r-}d|M_{n,k'l'}|(t)d|M_{n,kl}|(r)\right)^2\Big|C_{n,ij}(s)=1\right)=O(1), \label{eq:HSR8} \\
&\kappa_n^2\sup_{i,j,k,l\in V_n}\IE\Big(E_n^2\left(H_{\rm hub}^{ij}+H_{\rm hub}^{kl}+H_{\rm hub}^{ij}H_{\rm hub}^{kl}\right) \nonumber \\
&\qquad\left(1+N_{n,ij}[0,T]+N_{n,ij}[0,T]N_{n,kl}[0,T]\right)\big|\tilde{C}_{n,ij}=1,\tilde{C}_{n,kl}=1\Big)=O(1) \label{eq:HSR9}
\end{align}}

The above assumptions relate the frequency of hub appearance $\IP(H_{\rm hub}^{ij}=1)$ and the hub-size $\kappa_n$. In all assumptions we see a trade-off stating that larger hubs, i.e., large sequences $\kappa_n$, are permitted if $H_{\rm hub}^{ij}=0$ sufficiently often. The Assumption (NH) from the main text may be replaced by (HSR).

\subsection{Low Level Assumptions}
\label{app:lla}
In the main text we presented the Assumptions (WC), (BM) and (NH). In this section we present a set of assumptions which is in conjunction with (HSR) from Section \ref{app:hubs} sufficient for proving the main result. Hence, the Assumptions (WC), (BM), (NH) may be replaced by (HSR) and the assumptions from this section. We will state in the following the assumptions as equations and then prove that they follow from (WC) and (BM). Thus, (HSR), (WC), (BM) and all other assumptions from Section \ref{subsec:assumptions} with the exception of (NH) are supposed to hold true in this section.

Recall the notation $\tilde{C}_{n,ij}:=\sup_{t\in[0,T]}C_{n,ij}(t)$ and $\tilde{p}_n(ij,kl):=\IP(\tilde{C}_{n,ij}=1,\tilde{C}_{n,kl}=1)$.

\begin{equation}
\label{eq:wc1}
\IE\left(\left(\frac{1}{N}\sum_{i,j\in V_n}\int_0^TB_n^{ij}E_n(t)d|M_{n,ij}|(t)\right)^2\right)=O(1).
\end{equation}
\begin{proof}
In order to show \eqref{eq:wc1}, we compute by using that (VX) implies that there are only three different summands
\begin{align}
\eqref{eq:wc1}=&\frac{1}{N^2}\sum_{i,j,k,l\in V_n}\IE\left(\int_0^TB_n^{ij}E_n(t)d|M_{n,ij}|(t)\int_0^TB_n^{kl}E_n(t)d|M_{n,kl}|(t)\right) \nonumber \\
=&\frac{1}{N^2}O(n^2)\IE\left(\left(\int_0^TB_n^{12}E_n(t)d|M_{n,12}|(t)\right)^2\right) \label{eq:wc11} \\
&+\frac{1}{N^2}O(n^3)\IE\left(\int_0^TB_n^{12}E_n(t)d|M_{n,12}|(t)\int_0^TB_n^{23}E_n(t)d|M_{n,23}|(t)\right) \label{eq:wc12} \\
&+\frac{1}{N^2}O(n^4)\IE\left(\int_0^TB_n^{12}E_n(t)d|M_{n,12}|(t)\int_0^TB_n^{34}E_n(t)d|M_{n,34}|(t)\right). \label{eq:wc13}
\end{align}
It holds that
$$\int_0^TB_n^{ij}E_n(t)d|M_{n,ij}|(t)=\tilde{C}_{n,ij}\int_0^TB_n^{ij}E_n(t)d|M_{n,ij}|(t)$$
because $\tilde{C}_{n,ij}=0$ implies that the integral equals zero as well. Let
$$S:=T\|\alpha_0\|_{\infty}\|\Psi(\cdot;\beta_0)\|_{\infty}.$$
We get
\begin{align*}
&\IE\left(\int_0^TB_n^{ij}E_n(t)d|M_{n,ij}|(t)\int_0^TB_n^{kl}E_n(t)d|M_{n,kl}|(t)\right) \\    
=&\tilde{p}_n(ij,kl)\IE\left(\int_0^TB_n^{ij}E_n(t)d|M_{n,ij}|(t)\int_0^TB_n^{kl}E_n(t)d|M_{n,kl}|(t)\Big|\tilde{C}_{n,ij}=1,\tilde{C}_{n,kl}=1\right) \\
\leq&\tilde{p}_n(ij,kl)\IE\Bigg(B_n^{ij}B_n^{kl}E_n^2\big(N_{n,kl}[0,T]N_{n,ij}[0,T] \\
&\quad+S\left(S+N_{n,ij}[0,T]+N_{n,kl}[0,T]\right)\big)\Big|\tilde{C}_{n,ij}=1,\tilde{C}_{n,kl}=1\Bigg) \\
=&\tilde{p}_n(ij,kl)\Bigg(\IE\big(B_n^{ij}B_n^{kl}E_n^2N_{n,kl}[0,T]N_{n,ij}[0,T]\big|\tilde{C}_{n,ij}=1,\tilde{C}_{n,kl}=1\big)\\
&\qquad+2S\IE\big(B_n^{ij}B_n^{kl}E_n^2N_{n,ij}[0,T]\big|\tilde{C}_{n,ij}=1,\tilde{C}_{n,kl}=1\big) \\
&\qquad+S^2\IE\big(B_n^{ij}B_n^{kl}E_n^2\big|\tilde{C}_{n,ij}=1,\tilde{C}_{n,kl}=1\big)\Bigg).
\end{align*}
We argue now that we can find a constant $K_0>0$ such that the above is bounded by $K_0\tilde{p}_n(ij,kl)$. Firstly, by replacing the definition of $B_n^{ij}$, we get from Assumption (HSR, \ref{eq:HSR9}) that we may ignore all $B_n^{ij}$ and $B_n^{kl}$. Then we apply the Cauchy Schwarz Inequality and bound the terms $\IE\big(N_{n,kl}[0,T]^2N_{n,ij}[0,T]^2\big|\tilde{C}_{n,ij}=1,\tilde{C}_{n,kl}=1\big)$ and $\IE\big(N_{n,ij}[0,T]^2\big|\tilde{C}_{n,ij}=1,\tilde{C}_{n,kl}=1\big)$ uniformly in $i,j,k,l$ by (BM, \ref{eq:BM1}). The term $\IE(E_n^4|\tilde{C}_{n,ij}=1,\tilde{C}_{n,kl}=1)$ is bounded as follows by (WC, \ref{eq:swc4})
\begin{align*}
&\sup_{k,l,k',l'}\IE\left(E_n^4\big|\tilde{C}_{n,kl}=1,\tilde{C}_{n,k'l'}=1\right)\leq\sup_{k,l,k',l'}\IE\left(\left(\frac{1}{N}\sum_{i,j\in V_n}\tilde{C}_{n,ij}\right)^2\bigg|\tilde{C}_{n,kl}=1,\tilde{C}_{n,k'l'}=1\right) \\
=&\sup_{k,l,k',l'}\frac{1}{N^4}\underset{a=1,...,4}{\sum_{i_a,j_a\in V_n}}\IP\left(\prod_{a=1}^4\tilde{C}_{n,i_aj_a}=1\Big|\tilde{C}_{n,kl}=1,\tilde{C}_{n,k'l'}=1\right)=O(1).
\end{align*}
By (SP) we have that $mp_n/N=O(1)$. Therefore, we may replace $N$ by $mp_n=p_nn(n-1)/2$ and so we get by using the above bound and (WC, \ref{eq:swc1}) (below $\approx$ means equality up to a constant)
\begin{align*}
\eqref{eq:wc11}\approx&\frac{n^2}{n^4p_n^2}\tilde{p}_n(12,12)=\frac{\tilde{p}_n(12,12)}{n^2p_n^2}=O(1), \\
\eqref{eq:wc12}\approx&\frac{n^3}{n^4p_n^2}\tilde{p}_n(12,23)=\frac{\tilde{p}_n(12,23)}{np_n^2}=O(1), \\
\eqref{eq:wc13}\approx&\frac{n^4}{n^4p_n^2}\tilde{p}_n(12,34)=\frac{\tilde{p}_n(12,34)}{p_n^2}=O(1) 
\end{align*}
by the assumptions in (WC). Thus \eqref{eq:wc1} follows.
\end{proof}

\begin{equation}
\label{eq:wc2}
\IE\left(\left(\frac{1}{N}\sum_{i,j\in V_n}\int_0^Td|M_{n,ij}|(s)\right)^2\right)=O(1).
\end{equation}
\begin{proof}
\eqref{eq:wc2} follows by the same arguments as for \eqref{eq:wc1}. In fact, we simply have to remove all appearances of $E_n$ and $B_n^{ij}$ in the previous derivation. We can then use the same argument because, by definition, $E_n\geq1$.
\end{proof}

\begin{equation}
\label{eq:wc3}
\sum_{i,j,k,l\in V_n}\int_0^T\int_0^T\frac{\IP(C_{n,ij}(t)=1,\,C_{n,kl}(s)=1)}{m^2p_n(t)p_n(s)}dtds=O(1).
\end{equation}
\begin{proof}
The proof of \eqref{eq:wc3} is straight forward using the exchangeability from Assumption (VX) and the fact that
$$\IP(C_{n,ij}(t)=1,\,C_{n,kl}(s)=1)\leq\IP(\tilde{C}_{n,ij}=1,\,\tilde{C}_{n,kl}(s)=1)$$
for all $t,s\in[0,T]$. Hence, ($\approx$ means again equality up to a constant)
\begin{align*}
\eqref{eq:wc3}\leq&T^2\sum_{i,j,k,l\in V_n}\frac{\tilde{p}_n(ij,kl)}{m^2p_n^2}\approx\frac{\tilde{p}_n(12,12)}{n^2p_n^2}+\frac{\tilde{p}_n(12,23)}{np_n^2}+\frac{\tilde{p}_n(12,34)}{p_n^2}. \\
\end{align*}
\eqref{eq:wc3} follows now from (WC, \ref{eq:swc1}).
\end{proof}

\begin{equation}
\label{eq:wc4}
\sup_{i,j\in V_n,\,s\in[0,T]}\IE\left(\left(\frac{1}{N}\underset{(k,l)\neq(i,j)}{\sum_{k,l\in V_n}}\int_{s-2h}^{s-}d|M_{n,kl}|(t)\right)^2\Bigg|C_{n,ij}(s)=1\right)\to0.
\end{equation}
\begin{proof}
In order to prove \eqref{eq:wc4} we follow a similar strategy as for \eqref{eq:wc1}. We begin by bounding the conditional expectation
\begin{align}
&\IE\left(\int_{s-2h}^{s-}d|M_{n,k_1l_1}|(t)\int_{s-2h}^{s-}d|M_{n,k_2l_2}|(t)\Big|C_{n,ij}(s)=1,\tilde{C}_{n,k_1l_1}=1,\tilde{C}_{n,k_2l_2}=1\right) \nonumber \\
\leq&\IE\Big(\left(N_{n,k_1l_1}[s-2h,s]+2h\|\alpha_0\|_{\infty}\|\Psi(\cdot;\beta_0)\|_{\infty}\right) \nonumber \\
&\qquad\times\left(N_{n,k_2l_2}[s-2h,s]+2h\|\alpha_0\|_{\infty}\|\Psi(\cdot;\beta_0)\|_{\infty}\right)\Big|C_{n,ij}(s)\tilde{C}_{n,k_1l_1}\tilde{C}_{n,k_2l_2}=1\Big) \nonumber \\
\leq&\epsilon_n \label{eq:wc41}
\end{align}
for a suitable sequence $\epsilon_n\to0$ which exists by Assumptions (BM, \ref{eq:BM2}, \ref{eq:BM3}), (C) and $h\to0$ by (KBW). Introduce the notation $\tilde{p}_n(k_1l_1,k_2l_2|ij,s):=\IP(\tilde{C}_{n,k_1l_1}\tilde{C}_{n,k_2l_2}=1|C_{n,ij}(s)=1)$. Note that $\tilde{p}_n(k_1l_1,k_2l_2|ij,s)$ depends only on $|\{k_1,l_1\}\cap\{k_2,l_2\}|$ by the exchangeability assumption in (VX) as long as $(k_1,l_1),(k_2,l_2)\neq(i,j)$. We can now bound \eqref{eq:wc4} as follows by using \eqref{eq:wc41} (we use $\approx$ to indicate equality up to a constant):
\begin{align*}
\eqref{eq:wc4}=&\sup_{i,j\in V_n,\,s\in[0,T]}\frac{1}{N^2}\underset{(k_1,l_1),(k_2,l_2)\neq(i,j)}{\sum_{k_1,l_1,k_2,l_2\in V_n}}\IP(\tilde{C}_{n,k_1l_1}\tilde{C}_{n,k_2l_2}=1|C_{n,ij}(s)=1) \\
&\qquad\times\IE\left(\int_{s-2h}^{s-}d|M_{n,k_1l_1}|(t)\int_{s-2h}^{s-}d|M_{n,k_2l_2}|(t)\Bigg|C_{n,ij}(s)\tilde{C}_{n,k_1l_1}\tilde{C}_{n,k_2l_2}=1\right) \\
\leq&\sup_{i,j\in V_n,\,s\in[0,T]}\frac{1}{N^2}\underset{(k_1,l_1),(k_2,l_2)\neq(i,j)}{\sum_{k_1,l_1,k_2,l_2\in V_n}}\IP(\tilde{C}_{n,k_1l_1}\tilde{C}_{n,k_2l_2}=1|C_{n,ij}(s)=1)\epsilon_n \\
\leq&\sup_{i,j\in V_n,\,s\in[0,T]}\IE\left(\left(\frac{1}{N}\sum_{k,l\in V_n}\tilde{C}_{n,k_1l_1}\tilde{C}_{n,k_2l_2}\right)^2\bigg|C_{n,ij}(s)=1\right)\epsilon_n\to0
\end{align*}
since $\epsilon_n\to0$ and since the expectation is bounded by Assumption (WC, \ref{eq:swc3}).
\end{proof}

\begin{equation}
\label{eq:wc5}
\underset{s\in[0,T]}{\sup_{i,j\in V_n,}}\IE\left(\left(\frac{1}{N^2}\underset{(k',l')\neq(k,l)}{\sum_{k,l,k',l'\in V_n}}\int_0^T\int_{r-2h}^{r-}d|M_{n,k'l'}|(t)d|M_{n,kl}|(r)\right)^2\Bigg|C_{n,ij}(s)=1\right)=O(1).
\end{equation}
\begin{proof}
We can show \eqref{eq:wc5} by using relatively strong but simple bounds
\begin{align*}
&\eqref{eq:wc5} \\
\leq&\underset{s\in[0,T]}{\sup_{i,j\in V_n,}}\IE\left(\left(\frac{1}{N^2}\left(\sum_{k,l\in V_n}\int_0^Td|M_{n,kl}|(r)\tilde{C}_{n,kl}\right)^2\right)^2\Bigg|C_{n,ij}(s)=1\right) \\
\leq&\underset{s\in[0,T]}{\sup_{i,j\in V_n,}}\frac{1}{N^4}\underset{a=1,...,4}{\sum_{k_a,l_a,\in V_n}}\IE\left(\prod_{a=1}^4\left(N_{n,k_al_a}[0,T]+T\|\alpha_0\|_{\infty}\|\Psi(\cdot;\beta_0)\|_{\infty}\right)\tilde{C}_{n,k_al_a}\Bigg|C_{n,ij}(s)=1\right) \\
\leq&\underset{s\in[0,T]}{\sup_{i,j\in V_n,}}\sup_{k_1,l_1,...,k_4,l_4\in V_n}\IE\left(\prod_{a=1}^4\left(N_{n,k_al_a}[0,T]+T\|\alpha_0\|_{\infty}\|\Psi(\cdot;\beta_0)\|_{\infty}\right)\Bigg|\prod_{a=1}^4\tilde{C}_{n,k_al_a}C_{n,ij}(s)=1\right) \\
&\quad\times\underset{s\in[0,T]}{\sup_{i,j\in V_n,}}\frac{1}{N^4}\underset{a=1,...,4}{\sum_{k_a,l_a,\in V_n}}\IP\left(\prod_{a=1}^4\tilde{C}_{n,k_al_a}=1\Big|C_{n,ij}(s)=1\right).
\end{align*}
The above is bounded by Assumptions (C), (WC, \ref{eq:swc3}) and (BM, \ref{eq:BM4}).
\end{proof}

\begin{equation}
\label{eq:wc7}
\sup_{t,s\in[0,T]}\IE\left(\Phi_n(t,s)^2\frac{1}{m^2p_n(s)^2}\underset{(i,j)\neq(k,l)}{\sum_{i,j,k,l\in V_n}}C_{n,ij}(t)C_{n,kl}(s)\right)=O(1)
\end{equation}
\begin{proof}
In order to see that \eqref{eq:wc7} holds, we firstly note that by definition of $N$ and Assumption (SP) there is a constant $c>0$ such that $1/mp_n\leq c/N$. We can then show that
\begin{align}
&\underset{i,j,k,l\in V_n}{\sup_{t\in[0,T], s\in[t-2h,t],}}\IE\left(\Phi_n(t,s)^2\big|\tilde{C}_{n,ij}=1,\tilde{C}_{n,kl}=1\right) \nonumber \\
=&\underset{s\in[t-2h,t],}{\underset{i,j,k,l\in V_n}{\sup_{t\in[0,T],}}}\IE\left(\left(\int_t^{s+2h}\frac{1}{hmp_n(r)}\sum_{a,b\in V_n}C_{n,ab}(r)\Psi(X_{n,ab}(r);\beta_0)dr\right)^2\Big|\tilde{C}_{n,ij}=1,\tilde{C}_{n,kl}=1\right) \nonumber \\
\leq&4c^2\|\Psi(\cdot;\beta_0)\|_{\infty}^2\sup_{i,j,k,l\in V_n}\IE\left(\left(\frac{1}{N}\sum_{a,b\in V_n}\tilde{C}_{n,ab}\right)^2\Big|\tilde{C}_{n,ij}=1,\tilde{C}_{n,kl}=1\right) \nonumber \\
\leq&4c^2\|\Psi(\cdot;\beta_0)\|_{\infty}^2\sup_{i,j,k,l\in V_n}\IE\left(\left(\frac{1}{N}\sum_{a,b\in V_n}\tilde{C}_{n,ab}\right)^4\Big|\tilde{C}_{n,ij}=1,\tilde{C}_{n,kl}=1\right)^{\frac{1}{2}} \nonumber \\
=&4c^2\|\Psi(\cdot;\beta_0)\|_{\infty}^2\sup_{k,l,k',l'\in V_n}\left(\frac{1}{N^4}\underset{a=1,...,4}{\sum_{i_a,j_a\in V_n}}\IP\left(\prod_{a=1}^4\tilde{C}_{n,i_aj_a}=1\Big|\tilde{C}_{n,kl}=1,\tilde{C}_{n,k'l'}=1\right)\right)^{\frac{1}{2}} \label{eq:wc1+2}
\end{align}
which is bounded by Assumptions (C) and (WC, \ref{eq:swc4}). We can also compute
\begin{align*}
\eqref{eq:wc7}\leq&\sup_{t,s\in[0,T]}\frac{1}{m^2p_n^2}\sum_{i,j,k,l\in V_n}\IE\left(\Phi_n(t,s)^2\Big|\tilde{C}_{n,ij}=1,\tilde{C}_{n,kl}=1\right)\IP(\tilde{C}_{n,ij}=1,\tilde{C}_{n,kl}=1).
\end{align*}
The expectation is uniformly bounded by \eqref{eq:wc1+2} and the sum over the probabilities remains bounded by assumption (WC, \ref{eq:swc1}).
\end{proof}

\begin{equation}
\label{eq:wc8}
\sup_{t\in[0,T]}\IE\left(\left(\frac{1}{N}\overline{\Psi}_n(t,\beta_0)\right)^2\right)=O(1).
\end{equation}
\begin{proof}
We can show \eqref{eq:wc8} by simple bounds:
\begin{align*}
\eqref{eq:wc8}\leq&\IE\left(\left(\frac{1}{N}\sum_{i,j\in V_n}\tilde{C}_{n,ij}\right)^2\right)\|\Psi(\cdot;\beta_0)\|_{\infty}^2
\end{align*}
which is bounded by the assumptions (C), (SP) and (WC, \ref{eq:swc1}).
\end{proof}

\begin{equation}
\label{eq:wc9}
\int_0^T\IE\left(\left(\frac{1}{N}\sum_{k,l\in V_n}\int_{t-2h}^{t-}\Phi_n(t,r)d|M_{n,kl}|(r)\right)^2\right)dt\to0.
\end{equation}
\begin{proof}
In order to prove \eqref{eq:wc9}, we firstly bound $\Phi_n(t,r)$
\begin{align}
\Phi_n(t,r)=&\int_t^{r+2h}\frac{1}{hmp_n(s)}\sum_{i,j\in V_n}C_{n,ij}(s)\Psi(X_{n,ij}(s);\beta_0)ds \nonumber \\
\leq&\|\Psi(\cdot;\beta_0)\|_{\infty}\frac{r+2h-t}{h}\cdot\frac{N}{mp_n}\cdot\frac{1}{N}\sum_{i,j\in V_n}\tilde{C}_{n,ij}. \label{eq:phi_bound1}
\end{align}
By definition of $N$ and Assumption (SP) we may bound $N/mp_n$ from above by a constant $\alpha_2$. Using this bound, we obtain with $R:=\|\alpha_0\|_{\infty}\|\Psi(\cdot;\beta_0)\|_{\infty}$
\begin{align*}
\eqref{eq:wc9}\leq&\int_0^T\IE\left(\left(\frac{1}{N}\sum_{k,l\in V_n}\int_{t-2h}^{t-}\|\Psi(\cdot;\beta_0)\|_{\infty}\frac{r+2h-t}{h}\cdot\alpha_2\cdot\frac{1}{N}\sum_{i,j\in V_n}\tilde{C}_{n,ij}d|M_{n,kl}|(r)\right)^2\right)dt \\
\leq&4\|\Psi(\cdot;\beta_0)\|_{\infty}^2\alpha_2^2\int_0^T\IE\left(\left(\frac{1}{N^2}\sum_{i,j,k,l\in V_n}\int_{t-2h}^{t-}d|M_{n,kl}|(r)\tilde{C}_{n,ij}\tilde{C}_{n,kl}\right)^2\right)dt \\
=&\frac{4\|\Psi(\cdot;\beta_0)\|_{\infty}^2\alpha_2^2}{N^4} \\
&\times\underset{a=1,...,4}{\sum_{i_a,j_a\in V_n}}\int_0^T\IE\Big((N_{n,i_2j_2}[t-2h,t]+2hR)(N_{n,i_4j_4}[t-2h,t]+2hR) \\
&\qquad\qquad\times\tilde{C}_{n,i_1j_1}\tilde{C}_{n,i_2j_2}\tilde{C}_{n,i_3j_3}\tilde{C}_{n,i_4j_4}\Big)dt \\
\leq&\underset{a=1,...,4}{\sup_{i_a,j_a\in V_n}}\int_0^T\IE\left((N_{n,i_2j_2}[t-2h,t]+2hR)(N_{n,i_4j_4}[t-2h,t]+2hR)\Big|\prod_{a=1}^4\tilde{C}_{n,i_aj_a}=1\right)dt \\
&\times\frac{4\|\Psi(\cdot;\beta_0)\|_{\infty}^2\alpha_2^2}{N^4}\underset{a=1,...,4}{\sum_{i_a,j_a\in V_n}}\IP\left(\prod_{a=1}^4\tilde{C}_{n,i_aj_a}=1\right)
\end{align*}
The first line converges to zero by (BM, \ref{eq:BM5}) and the second line is bounded by (WC, \ref{eq:swc1}).
\end{proof}

\begin{equation}
\label{eq:wc1+1}
\int_0^T\IE\left(E_n(t)^2\Big|C_{n,12}(t)=1\right)+\IE\left(E_n(t)^3\right)dt=O(1).
\end{equation}
\begin{proof}
Assumption \eqref{eq:wc1+1} can be proven as follows:
\begin{align*}
\eqref{eq:wc1+1}\leq&\int_0^T\IE\left(\left(\frac{1}{N}\sum_{i,j\in V_n}\tilde{C}_{n,ij}\right)^2\Big|C_{n,12}(t)=1\right)+\IE\left(\left(\frac{1}{N}\sum_{i,j\in V_n}\tilde{C}_{n,ij}\right)^3\right)dt \\
\leq&T\sup_{t\in[0,T]}\IE\left(\left(\frac{1}{N}\sum_{i,j\in V_n}\tilde{C}_{n,ij}\right)^4\Big|C_{n,12}(t)=1\right)^{\frac{1}{2}}+T\IE\left(\left(\frac{1}{N}\sum_{i,j\in V_n}\tilde{C}_{n,ij}\right)^4\right)^{\frac{3}{4}}
\end{align*}
both of which remain bounded by Assumption (WC, \ref{eq:swc2}, \ref{eq:swc3}).
\end{proof}

\begin{equation}
\label{eq:wc1+3}
\underset{i,j,k,l\in V_n}{\sup_{t,s\in[0,T],}}\sup_{r\in[t,t+2h]}\IE\left(\left(\frac{1}{N}\overline{\Psi}_n(r;\beta_0)\right)^3\big|C_{n,ij}(t)C_{n,kl}(s)=1\right)=O(1).
\end{equation}
\begin{proof}
We have
\begin{align*}
\eqref{eq:wc1+3}=&\underset{k,l,l'l'\in V_n}{\sup_{t,s\in[0,T],}}\sup_{r\in[t,t+2h]}\IE\left(\left(\frac{1}{N}\sum_{i,j\in V_n}C_{n,ij}(r)\Psi(X_{n,ij}(r);\beta_0)\right)^3\big|C_{n,kl}(t)C_{n,k'l'}(s)=1\right) \\
\leq&\|\Psi\cdot;\beta_0)\|_{\infty}^3\underset{k,l,l'l'\in V_n}{\sup_{t,s\in[0,T],}}\frac{1}{N^3}\underset{a=1,...,3}{\sum_{i_a,j_a\in V_n}}\IP\left(\prod_{a=1}^3\tilde{C}_{n,i_aj_a}=1\big|C_{n,kl}(t)=1,C_{n,k'l'}(s)=1\right)
\end{align*}
which is bounded by Assumptions (C) and and (WC, \ref{eq:swc5}).
\end{proof}

Recall the definition of $\mathcal{J}_{ab}$ from \eqref{eq:def_Jkl}.
\begin{align}
&\frac{1}{m^4p_n^4}\underset{ij\neq i'j'}{\sum_{ij,i'j'}}\IE\Bigg[\underset{s\in[\xi,\xi+2h]}{\sup_{\xi\in[0,T]}}\sum_{k,l\in V_n}C_{n,kl}(s)\int_0^T\int_{t-2h}^{t-}\sum_{k',l'\in V_n}\mathcal{J}_{k'l'}(t) \nonumber \\
&\qquad\times\sup_{(a,b)\notin F_{ij}(t-2h)}\mathcal{J}_{ab}(t)d|M_{n,i'j'}|(r)d|M_{n,ij}|(t)\Bigg]\to0. \label{eq:wc1+4}
\end{align}
\begin{proof}
To show \eqref{eq:wc1+4} we note that $\mathcal{J}_{k'l'}(t)=\mathcal{J}_{k'l'}(t)\tilde{C}_{n,k'l'}$ because $\tilde{C}_{n,k'l'}=0$ implies $\mathcal{J}_{k'l'}(t)=0$. Then we bound \eqref{eq:wc1+4} by using bound $C_{n,kl}\leq\tilde{C}_{n,kl}$, collecting the sums in the front, and using that $\int_{t-2h}^{t-}d|M_{n,i'j'}|(r)\leq\mathcal{J}_{i'j}(t)$
\begin{align*}
\eqref{eq:wc1+4}\leq&\frac{1}{m^4p_n^4}\sum_{ij,i'j',kl,k'l'}\IE\Bigg[\int_0^T\int_{t-2h}^{t-}\mathcal{J}_{k'l'}(t)\sup_{(a,b)\notin F_{ij}(t-2h)}\mathcal{J}_{ab}(t)d|M_{n,i'j'}|(r)d|M_{n,ij}|(t) \\
&\qquad\tilde{C}_{n,kl}\tilde{C}_{n,ij}\tilde{C}_{n,i'j'}\tilde{C}_{n,k'l'}\Bigg] \\
\leq&\frac{1}{m^4p_n^4}\sum_{ij,i'j',kl,k'l'}\IE\Bigg[\int_0^T\mathcal{J}_{k'l'}(t)\mathcal{J}_{i'j'}(t)\sup_{(a,b)\notin F_{ij}(t-2h)}\mathcal{J}_{ab}(t)d|M_{n,ij}|(t) \\
&\qquad\tilde{C}_{n,kl}\tilde{C}_{n,ij}\tilde{C}_{n,i'j'}\tilde{C}_{n,k'l'}\Bigg] \\
\leq&\frac{1}{m^4p_n^4}\sum_{ij,i'j',kl,k'l'}\IP\left(\tilde{C}_{n,kl}\tilde{C}_{n,ij}\tilde{C}_{n,i'j'}\tilde{C}_{n,k'l'}=1\right) \\
&\times\underset{a=1,...,4}{\sup_{i_a,j_a\in V_n}}\IE\Bigg[\int_0^T\mathcal{J}_{i_4j_4}(t)\mathcal{J}_{i_2j_2}(t)\underset{d_{t-2h}(uv,i_1j_1)<M}{\sup_{u,v\in V_n}}\mathcal{J}_{uv}(t)d|M_{n,i_1j_1}|(t)\Big|\prod_{a=1}^4\tilde{C}_{n,i_aj_a}=1\Bigg].
\end{align*}
The first line remains bounded by Assumption (WC, \ref{eq:swc2}). For the second line we get
\begin{align*}
&\underset{a=1,...,4}{\sup_{i_a,j_a\in V_n}}\IE\Bigg[\int_0^T\mathcal{J}_{i_4j_4}(t)\mathcal{J}_{i_2j_2}(t)\underset{d_{t-2h}(uv,i_1j_1)<M}{\sup_{u,v\in V_n}}\mathcal{J}_{uv}(t)d|M_{n,i_1j_1}|(t)\Big|\prod_{a=1}^4\tilde{C}_{n,i_aj_a}=1\Bigg] \\
\leq&\underset{a=1,...,4}{\sup_{i_a,j_a\in V_n}}\IE\Bigg[\int_0^T\left(4h+N_{n,i_4j_4}[t-2h,t+2h]\right)\left(4h+N_{n,i_2j_2}[t-2h,t+2h]\right) \\
&\quad\times\underset{d_{t-2h}(uv,i_1j_1)<M}{\sup_{u,v\in V_n}}\left(4h+N_{n,uv}[t-2h,t+2h]\right)d|M_{n,i_1j_1}|(t)\bigg|\prod_{a=1}^4\tilde{C}_{n,i_aj_a}=1\Bigg]
\end{align*}
which is assumed to converge to zero in (BM, \ref{eq:BM6}).
\end{proof}

\section{Further Simulation Results}
\label{supp:simulation}

Note firstly, that according to the definition of $\Delta_n$ in Theorem \ref{thm:T1}, in our situation we have
$$\Delta_n(t)=\frac{1}{c_n}(\alpha_0(t)-\alpha(\theta_0;t))=\frac{\rho}{c_n}(\alpha_1(t)-\bar{\alpha}(t)).$$
Thus, here we violate our assumption of bounded $\Delta_n$. Figures \ref{fig:test_527} and \ref{fig:test_300} therefore show the histograms of
$$\frac{N\sqrt{h}\left(T_n-\frac{A_n}{Nh}-\int_0^T\left(\int_0^TK_{h,t}(s)\rho(\alpha_1(s)-\alpha_0(s))ds\right)^2w(t)dt\right)}{\sqrt{B_n}}.$$
According to Theorem \ref{thm:T1} this quantity should have asymptotically a standard normal distribution if we are in an asymptotic regime, where the deviation from the null hypothesis is of the order $c_n=(N\sqrt{h})^{-1/2}$. We can see that for small values of $\rho$ the empirically observed distribution is very close to the standard normal distribution. It seems that small values of $\rho$ can compensate for the unbounded $\Delta_n$. However, when $\rho$ increases, the approximation becomes less good which indicates that deviations of the corresponding size are too large to be adequately captured by the asymptotics. Note finally that for two sample sizes $n_1<n_2$, we have
$$\frac{\Delta_{n_2}(t)}{\Delta_{n_1}(t)}=\frac{c_{n_1}}{c_{n_2}}.$$
Thus, we see that the \emph{unboundedness} of $\Delta_n(t)$ gets worse when $n$ increases. In Figures \ref{fig:test_527} and \ref{fig:test_300} it appears at least visually that for larger $\rho$ the normal approximation is indeed slightly better for $n=300$. This is in-line with the previous discussion.

\begin{figure}
\includegraphics[width=\textwidth]{./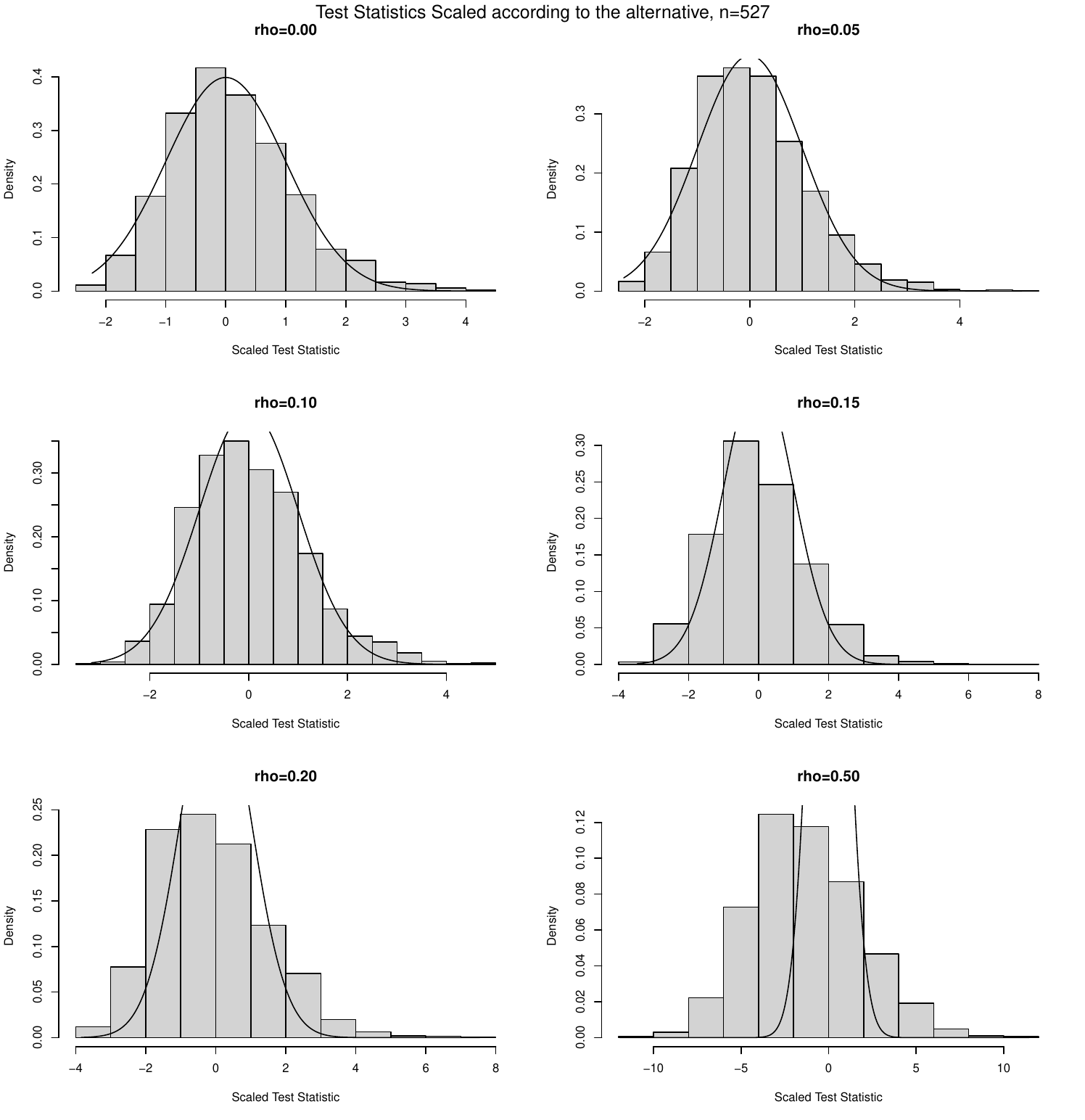}
\caption{Histograms of test statistics after scaling according to the corresponding alternative for networks of size $n=527$. The solid lines are densities of standard normal distributions.}
\label{fig:test_527}
\end{figure}

\begin{figure}
\includegraphics[width=\textwidth]{./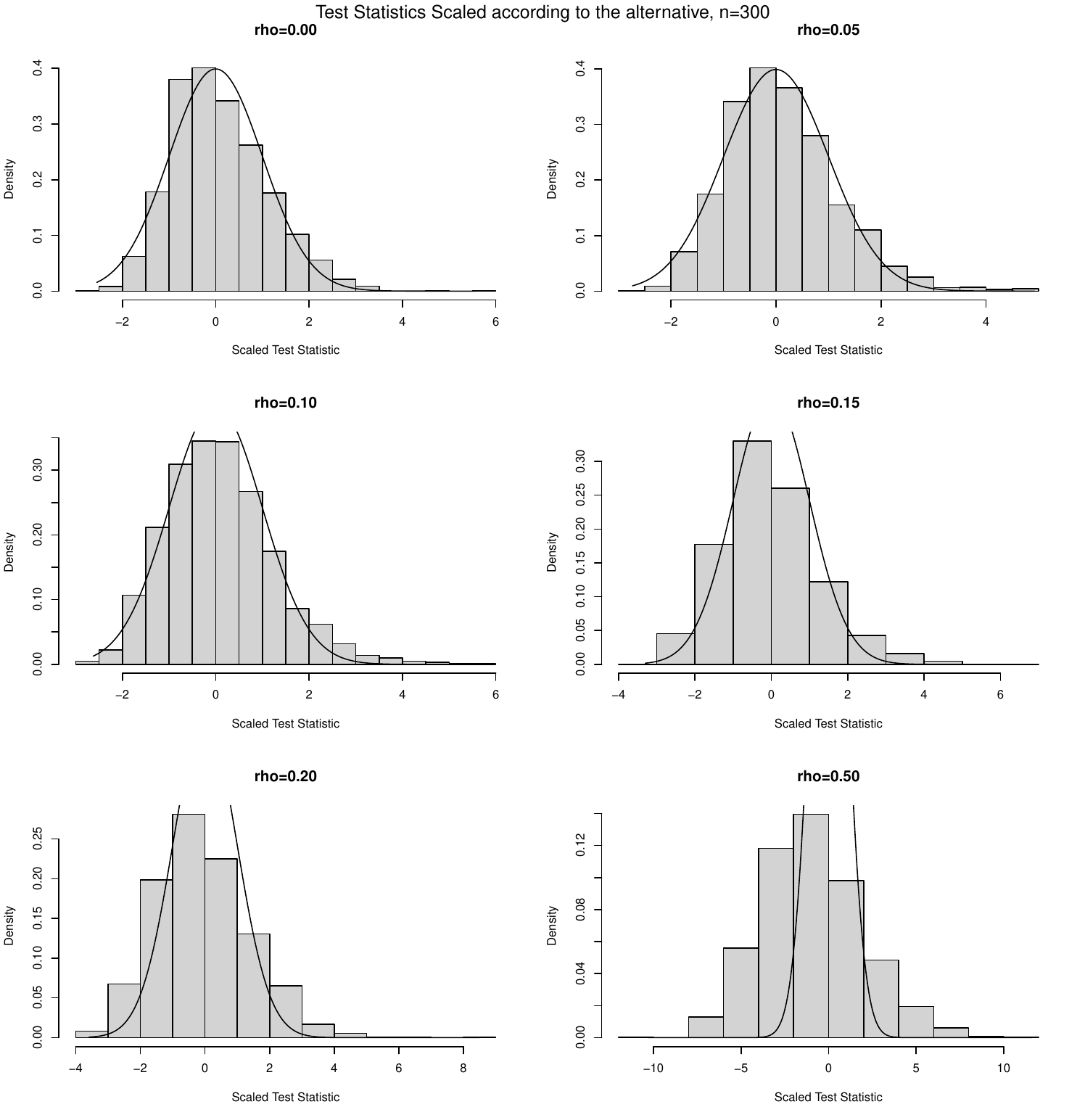}
\caption{Histograms of test statistics after scaling according to the corresponding alternative for networks of size $n=300$. The solid lines are densities of standard normal distributions.}
\label{fig:test_300}
\end{figure}

\section{Proofs}
\label{supp:main_proof}
\subsection{Prerequisites}
\label{app:prerequisites}
We collect here some consequences of the assumptions from Sections \ref{app:hubs} and \ref{app:lla}. Denote the neighbouring events ${\rm NE}_{ij}(t)$ of a pair $(i,j)$ and all events ${\rm AE}(t)$ as follows
\begin{align*}
{\rm NE}_{ij}(t):=&\sum_{k,l\in V_n}B_n^{kl}\mathcal{J}_{kl}(t)\Ind(kl\notin F_{ij}(t-2h)), \qquad{\rm AE}(t):=\sum_{k',l'\in V_n}B_n^{k'l'}\mathcal{J}_{k'l'}(t).
\end{align*}

\begin{lemma}
\label{lem:con_assump}
Let (KBW), (HSR), (SP), (C) and the assumptions from Section \ref{app:lla} hold. Then,
\begin{align}
&\frac{1}{N}\sup_{s,t\in[0,T],\,i,j,k,l\in V_n}\IE\left(\Phi_n(t,t)B_n^{\{ij,kl\}}\big|C_{n,ij}(t)=1,C_{n,kl}(s)=1\right)\to0, \label{eq:assump1} \\
&\frac{1}{N^2}\sup_{i,j\in V_n}\int_0^T\IE\left(B_n^{ij}\Phi_n(t,t)\big|C_{n,ij}(t)=1\right)dt\to0, \label{eq:assump2} \\
&\sup_{t,s\in[0,T],\,i,j,k,l\in V_n}\IE\left(\Phi_n(t,t)\Phi_n(s,s)B_n^{ij}B_n^{kl}\big|C_{n,ij}(t)=1,C_{n,kl}(s)=1\right)=O(1), \label{eq:assump3} \\
&\frac{1}{N}\sup_{t\in[0,T]}\IE\left(\left(B_n^{ij}\right)^2\Big|C_{n,ij}(t)=1\right)\to0, \label{eq:assump4} \\
&\frac{1}{N^3}\int_0^T\IE\left(\left(B_n^{ij}\right)^4\Big|C_{n,ij}(t)=1\right)dt\to0, \label{eq:assump5} \\
&\frac{1}{N^2}\sup_{i,j,k,l\in V_n,\,t,s\in[0,T]}\IE\left(\left(B_n^{ij}B_n^{kl}\right)^2\Big|C_{n,ij}(t)=1,\,C_{n,ij}(s)=1\right)\to0, \label{eq:assump6} \\
&\sup_{t,s\in[0,T],\,i,j,k,l\in V_n}\IE\left(B_n^{ij}B_n^{kl}\big|C_{n,ij}(t)=1,C_{n,kl}(s)=1\right)=O(1), \label{eq:assump7} \\
&\IE\left(\left(\sum_{k,l\in V_n}\int_0^T\frac{B_n^{kl}}{mp_n(s)}d|M_{n,kl}|(s)\right)^2\right)=O(1), \label{eq:assump8} \\
&\frac{1}{N}\IE\left(\left(\sum_{i,j\in V_n}\int_0^T\frac{\left(B_n^{ij}\right)^2}{mp_n(t)}d|M_{n,ij}|(t)\right)^2\right)=O(1) \label{eq:assump9} \\
&\sup_{i,j\in V_n,\,s\in[0,T]}\IE\left(\left(\frac{1}{N}\underset{(i',j')\neq(i,j)}{\sum_{i',j'\in V_n}}\int_{s-2h}^{s-}B_n^{\{ij,i'j'\}}d|M_{n,i'j'}|(t)\right)^2\Bigg|C_{n,ij}(s)=1\right)\to0, \label{eq:assump10} \\
&\frac{1}{m^3p_n^3}\underset{ij\neq i'j'}{\sum_{ij,i'j'}}\IE\Bigg[\underset{\rho\in[\xi-2h,\xi]}{\sup_{\xi\in[0,T]}}\Phi_n(\xi,\rho)\int_0^T\int_{t-2h}^{t-}{\rm AE}(t){\rm NE}_{ij}(t)d|M_{n,i'j'}|(r)d|M_{n,ij}|(t)\Bigg]\to0, \label{eq:assump11} \\
&\int_0^T\IE\left(\left(\sum_{k,l\in V_n}\int_{t-2h}^{t-}\frac{B_n^{kl}}{mp_n(r)}\Phi_n(t,r)d|M_{n,kl}|(r)\right)^2\right)dt\to0, \label{eq:assump12} \\
&\underset{s\in[0,T]}{\sup_{i,j\in V_n,}}\IE\left(\left(\underset{(k',l')\neq(k,l)}{\sum_{k,l,k',l'\in V_n}}\int_0^{s-}\int_{r-2h}^{r-}\frac{B_n^{\{kl,k'l'\}}}{N^2}d|M_{n,k'l'}|(t)d|M_{n,kl}|(r)\right)^2\Bigg|C_{n,ij}(s)=1\right)=O(1). \label{eq:assump13}
\end{align}
\end{lemma}
\begin{proof}
Throughout the proof we will use that by Assumption (SP), we have that $N/mp_n$ and $mp_n/N$ are bounded.

In order to show \eqref{eq:assump1}, we make use of the Cauchy-Schwarz Inequality as follows:
\begin{align*}
\eqref{eq:assump1}=&\frac{1}{N}\underset{i,j,k,l\in V_n}{\sup_{s,t\in[0,T],}}\int_t^{t+2h}\frac{1}{h}\IE\left(\frac{1}{mp_n(r)}\overline{\Psi}_n(r;\beta_0)B_n^{\{ij,kl\}}\big|C_{n,ij}(t)C_{n,kl}(s)=1\right)dr \\
\leq&2\underset{i,j,k,l\in V_n}{\sup_{s,t\in[0,T],}}\sup_{r\in[t,t+2h]}\IE\left(\left(\frac{1}{mp_n(r)}\overline{\Psi}_n(r;\beta_0)\right)^2\big|C_{n,ij}(t)C_{n,kl}(s)=1\right)^{\frac{1}{2}} \\
&\quad\times\underset{i,j,k,l\in V_n}{\sup_{s,t\in[0,T],}}\frac{1}{N}\IE\left(\left(B_n^{\{ij,kl\}}\right)^2\big|C_{n,ij}(t)C_{n,kl}(s)=1\right)^{\frac{1}{2}}.
\end{align*}
The first line remains bounded by \eqref{eq:wc1+3}. For the second line, we note that since $N\to\infty$ by (KBW), the $n_{\rm hub}$ in $B_n^{ij,kl}$ plays no role. For $H_{\rm hub}^{ij,kl}\kappa_n$ we obtain by (HSR, \ref{eq:HSR1})
\begin{align*}
&\underset{i,j,k,l\in V_n}{\sup_{s,t\in[0,T],}}\frac{1}{N}\IE\left(\left(B_n^{\{ij,kl\}}\right)^2\big|C_{n,ij}(t)C_{n,kl}(s)=1\right)^{\frac{1}{2}} \\
\leq&\underset{i,j,k,l\in V_n}{\sup_{s,t\in[0,T],}}\frac{\kappa_n}{N}\IP\left(H_{\rm hub}^{\{ij,kl\}}=1\big|C_{n,ij}(t)C_{n,kl}(s)=1\right)^{\frac{1}{2}}\to0.
\end{align*}
Statement \eqref{eq:assump2} is implied by \eqref{eq:assump1}. For \eqref{eq:assump3} we apply the Cauchy-Schwarz Inequality twice in the following way (in the first step with exponents $3/2$ and $3$ and in the second step with exponents $2$):
\begin{align*}
\eqref{eq:assump3}=&\underset{i,j,k,l\in V_n}{\sup_{t,s\in[0,T],}}\int_t^{t+2h}\int_s^{s+2h}\IE\left(\frac{\overline{\Psi}_n(r;\beta_0)\overline{\Psi}_n(q;\beta_0)B_n^{ij}B_n^{kl}}{h^2m^2p_n(r)p_n(q)}\big|C_{n,ij}(t)C_{n,kl}(s)=1\right)drdq \\
\leq&4\underset{i,j,k,l\in V_n}{\sup_{t,s\in[0,T],}}\underset{q\in[s,s+2h]}{\sup_{r\in[t,t+2h],}}\IE\left(\frac{\overline{\Psi}_n(r;\beta_0)\overline{\Psi}_n(q;\beta_0)B_n^{ij}B_n^{kl}}{m^2p_n(r)p_n(q)}\big|C_{n,ij}(t)C_{n,kl}(s)=1\right) \\
\leq&4\underset{i,j,k,l\in V_n}{\sup_{t,s\in[0,T],}}\underset{q\in[s,s+2h]}{\sup_{r\in[t,t+2h],}}\IE\left(\left(\frac{\overline{\Psi}_n(r;\beta_0)\overline{\Psi}_n(q;\beta_0)}{m^2p_n(r)p_n(q)}\right)^{\frac{3}{2}}\big|C_{n,ij}(t)C_{n,kl}(s)=1\right)^{\frac{2}{3}} \\
&\quad\times\IE\left(\left(B_n^{ij}B_n^{kl}\right)^3\big|C_{n,ij}(t)C_{n,kl}(s)=1\right)^{\frac{1}{3}} \\
\leq&4\underset{i,j,k,l\in V_n}{\sup_{t,s\in[0,T],}}\underset{q\in[s,s+2h]}{\sup_{r\in[t,t+2h],}}\IE\left(\left(\frac{1}{mp_n(r)}\overline{\Psi}_n(r;\beta_0)\right)^3\big|C_{n,ij}(t)C_{n,kl}(s)=1\right)^{\frac{1}{3}} \\
&\quad\times\IE\left(\left(\frac{1}{mp_n(q)}\overline{\Psi}_n(q;\beta_0)\right)^3\big|C_{n,ij}(t)C_{n,kl}(s)=1\right)^{\frac{1}{3}} \\
&\quad\times\IE\left(\left(B_n^{ij}B_n^{kl}\right)^3\big|C_{n,ij}(t)C_{n,kl}(s)=1\right)^{\frac{1}{3}}.
\end{align*}
The first two lines are identical and remain bounded by \eqref{eq:wc1+3}. The last line remains bounded by assumption (HSR, \ref{eq:HSR3}).

It is simple to see that \eqref{eq:assump4}-\eqref{eq:assump7} are direct consequences of (HSR \ref{eq:HSR2}) and (HSR \ref{eq:HSR3}).

For \eqref{eq:assump8}, we make the following decomposition based on the definition of $B_n^{ij,kl}$:
\begin{align*}
\eqref{eq:assump8}\leq&2n_{\rm hub}^2\IE\left(\left(\sum_{k,l\in V_n}\int_0^T\frac{1}{mp_n(s)}d|M_{n,kl}|(s)\right)^2\right) \\
&\quad+2\kappa_n^2\IE\left(\left(\sum_{k,l\in V_n}\int_0^T\frac{H_{\rm hub}^{kl}}{mp_n(s)}d|M_{n,kl}|(s)\right)^2\right)
\end{align*}
The first line is bounded by \eqref{eq:wc2} and the second line follows from (HSR, \ref{eq:HSR4}). \eqref{eq:assump9} can be seen through the same arguments. For \eqref{eq:assump10} we use a similar bound (and note that by (SP) we have $N\geq cmp_n(t)$ for a suitable $c>0$)
\begin{align*}
\eqref{eq:assump10}\leq&\frac{2n_{\rm hub}^2}{c}\sup_{i,j\in V_n,\,s\in[0,T]}\IE\left(\left(\underset{(i',j')\neq(i,j)}{\sum_{i',j'\in V_n}}\int_{s-2h}^{s-}\frac{1}{mp_n(t)}d|M_{n,i'j'}|(t)\right)^2\Bigg|C_{n,ij}(s)=1\right) \\
&\quad+\frac{\kappa_n^2}{c}\sup_{i,j\in V_n,\,s\in[0,T]}\IE\left(\left(\underset{(i',j')\neq(i,j)}{\sum_{i',j'\in V_n}}\int_{s-2h}^{s-}\frac{H_{\rm hub}^{\{ij,i'j'\}}}{mp_n(t)}d|M_{n,i'j'}|(t)\right)^2\Bigg|C_{n,ij}(s)=1\right).
\end{align*}
The first line converges to zero by \eqref{eq:wc4} and the second line by (HSR, \ref{eq:HSR5}).

In order to show \eqref{eq:assump11} we firstly note that $\mathcal{J}_{kl}(t)=\mathcal{J}_{kl}(t)\sup_{\rho\in[t-2h,t+2h]}C_{n,kl}(\rho)$ because $\sup_{\rho\in[t-2h,t+2h]}C_{n,kl}(\rho)=0$ implies that $\mathcal{J}_{kl}(t)=0$. Below we firstly replace the definitions of ${\rm NE}$ and ${\rm AE}$, then in the second step we take the supremum over all $B_n^{kl}\mathcal{J}_{kl}$ in the neighbourhood of $(i,j)$ and use the definition of $K_M^{ij}$, and finally we write out $\Phi_n$ and bound it by standard bounds
\begin{align*}
\eqref{eq:assump11}=&\frac{1}{m^3p_n^3}\underset{ij\neq i'j'}{\sum_{ij,i'j'}}\IE\Bigg[\underset{\rho\in[\xi-2h,\xi]}{\sup_{\xi\in[0,T]}}\Phi_n(\xi,\rho)\int_0^T\int_{t-2h}^{t-}\sum_{k',l'\in V_n}B_n^{k'l'}\mathcal{J}_{k'l'}(t) \\
&\qquad\times\sum_{k,l\in V_n}B_n^{kl}\mathcal{J}_{kl}(t)\sup_{\rho\in[t-2h,t+2h]}C_{n,kl}(\rho)\Ind(kl\notin F_{ij}(t-2h))d|M_{n,i'j'}|(r)d|M_{n,ij}|(t)\Bigg]  \\
\leq&\frac{1}{m^3p_n^3}\underset{ij\neq i'j'}{\sum_{ij,i'j'}}\IE\Bigg[\underset{\rho\in[\xi-2h,\xi]}{\sup_{\xi\in[0,T]}}\Phi_n(\xi,\rho)\int_0^T\int_{t-2h}^{t-}\sum_{k',l'\in V_n}B_n^{k'l'}\mathcal{J}_{k'l'}(t) \\
&\qquad\times\sup_{(a,b)\notin F_{ij}(t-2h)}B_n^{ab}\mathcal{J}_{ab}(t)K_M^{ij}d|M_{n,i'j'}|(r)d|M_{n,ij}|(t)\Bigg]  \\
\leq&\frac{1}{m^3p_n^3}\underset{ij\neq i'j'}{\sum_{ij,i'j'}}\IE\Bigg[\sup_{s\in[0,T]}\frac{2}{mp_n}\sum_{k,l\in V_n}C_{n,kl}(s)\|\Psi(\cdot;\beta_0)\|_{\infty}\int_0^T\int_{t-2h}^{t-}\sum_{k',l'\in V_n}B_n^{k'l'}\mathcal{J}_{k'l'}(t) \\
&\qquad\times\sup_{(a,b)\notin F_{ij}(t-2h)}B_n^{ab}\mathcal{J}_{ab}(t)K_M^{ij}d|M_{n,i'j'}|(r)d|M_{n,ij}|(t)\Bigg]
\end{align*}
which converges to zero by Assumption (C), \eqref{eq:wc1+4} and (HSR, \ref{eq:HSR6}).

It is clear that \eqref{eq:assump12} follows from \eqref{eq:wc9} and (HSR, \ref{eq:HSR7}), and that \eqref{eq:assump13} follows from \eqref{eq:wc5} and (HSR, \ref{eq:HSR8}) (note that the integrand in \eqref{eq:assump13} is non-negative).
\end{proof}

\subsection{Rebolledo's Martingale Central Limit Theorem}
\label{app:rebolledo}
For easier reference we state here the exact form of Rebolledo's Martingale Central Limit Theorem. We state here a version based on Theorem II.5.1 in \citet{ABGK93}, the original work is \citet{R80}.

Let $M^n=(M_1^n,...,M_k^n)$ be a vector of sequences of locally square integrable martingales on an interval $\mathcal{T}$. For $\epsilon>0$ we denote by $M_{\epsilon}^n$ a vector of locally square integrable martingales that contain all jumps of components of $M^n$ which are larger in absolute value than $\epsilon$, i.e., $M_i^n-M_{\epsilon,i}^n$ is a local square integrable martingale for all $i=1,...,k$ and $|\Delta M_i^n-\Delta M_{\epsilon,i}^n|\leq\epsilon$. Furthermore, we denote by $\langle M^n\rangle:=\left(\langle M_i^n,M_j^n\rangle\right)_{i,j=1,...,k}$ the $k\times k$ matrix of quadratic covariations.

Moreover, we denote by $M$ a multivariate, continuous Gaussian martingale with $\langle M\rangle_t=V_t$, where $V:\mathcal{T}\to\IR^{k\times k}$ is a continuous deterministic $k\times k$ positive semi-definite matrix valued function on $\mathcal{T}$ such that its increments $V_t-V_s$ are also positive semi-definite for $s\leq t$, then $M_t-M_s\sim \mathcal{N}(0,V_t-V_s)$ is independent of $(M_r:\,r\leq s)$. Given such a function $V$, such a Gaussian process $M$ always exists. We can now formulate the central limit theorem for martingales.

\begin{theorem}
\label{thm:Rebolledo}
Let $\mathcal{T}_0\subseteq\mathcal{T}$. Assume that for all $t\in\mathcal{T}_0$ as $n\to\infty$ we have $\langle M^n\rangle_t\overset{\IP}{\rightarrow} V_t$ and $\langle M_{\epsilon}^n\rangle_t\overset{\IP}{\rightarrow}0$ for all $\epsilon>0$. Then, $M^n_t\overset{d}{\rightarrow} M_t$ as $n\to\infty$ for all $t\in\mathcal{T}_0$.
\end{theorem}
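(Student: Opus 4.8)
The plan is to prove the result through characteristic functions, targeting the finite-dimensional marginal $M^n_t$ for each fixed $t\in\mathcal{T}_0$ directly. Since the assertion is convergence of the $k$-vector $M^n_t$ in distribution, I would first reduce to the scalar case via the Cram\'er--Wold device: for a fixed $\lambda\in\IR^k$ the linear combination $\lambda'M^n$ is again a locally square integrable martingale, its predictable quadratic variation is $\lambda'\langle M^n\rangle_t\lambda\overset{\IP}{\to}\lambda'V_t\lambda$, and the large-jump condition transfers because a jump of $\lambda'M^n$ exceeding a threshold forces some component jump to be non-negligible, so that $\langle(\lambda'M^n)_\epsilon\rangle_t\overset{\IP}{\to}0$ for every $\epsilon>0$. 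Thus it suffices to show, for a scalar locally square integrable martingale $N^n$ with $\langle N^n\rangle_t\overset{\IP}{\to}v_t$ (continuous, deterministic) and $\langle N^n_\epsilon\rangle_t\overset{\IP}{\to}0$, that $\IE(e^{iuN^n_t})\to e^{-u^2 v_t/2}$ for every $u\in\IR$.

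Next I would apply It\^o's formula for semimartingales to the bounded functional $s\mapsto e^{iuN^n_s}$ on $[0,t]$, giving
\begin{align*}
e^{iuN^n_t}=1+\int_0^t iu\,e^{iuN^n_{s-}}dN^n_s-\frac{u^2}{2}\int_0^t e^{iuN^n_{s-}}d\langle N^{n,c}\rangle_s+\sum_{s\le t}e^{iuN^n_{s-}}\left(e^{iu\Delta N^n_s}-1-iu\Delta N^n_s\right),
\end{align*}
where $N^{n,c}$ is the continuous martingale part. After a localization argument (stopping at times where $N^n$ is genuinely square integrable) the stochastic integral term has zero mean, so taking expectations leaves an identity for $\phi_n(t):=\IE(e^{iuN^n_t})$ in terms of the quadratic-variation and jump terms. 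The integrands $e^{iuN^n_{s-}}$ are bounded by $1$, which is what keeps all expectations finite and supplies the uniform integrability needed later.

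The heart of the argument is to show that the combination of the continuous-variation term and the jump-correction term converges to $-\tfrac{u^2}{2}\int_0^t\phi(s)\,dv_s$. Here the Lindeberg-type hypothesis $\langle N^n_\epsilon\rangle_t\to0$ becomes decisive: splitting the jumps into those bounded by $\epsilon$ and those exceeding $\epsilon$, the Taylor bound $|e^{iux}-1-iux|\le u^2x^2/2$ lets me merge the small-jump contribution with the continuous part into $-\tfrac{u^2}{2}\int_0^t e^{iuN^n_{s-}}d\langle N^n\rangle_s$ up to an error controlled by $\langle N^n_\epsilon\rangle_t$, while the large-jump contribution is itself dominated by a multiple of $\langle N^n_\epsilon\rangle_t\to0$. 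Using $\langle N^n\rangle_t\overset{\IP}{\to}v_t$ together with the continuity and monotonicity of $v$ (which upgrades pointwise convergence on $\mathcal{T}_0$ to convergence of the associated Stieltjes measures), one obtains in the limit the deterministic integral equation $\phi(t)=1-\tfrac{u^2}{2}\int_0^t\phi(s)\,dv_s$, whose unique solution is $\phi(t)=e^{-u^2 v_t/2}$. This is exactly the characteristic function of $\mathcal{N}(0,v_t)$, and with $v_t=\lambda'V_t\lambda$ the Cram\'er--Wold reduction then delivers $M^n_t\overset{d}{\to}\mathcal{N}(0,V_t)=M_t$.

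The \textbf{main obstacle} I anticipate is the rigorous interchange of limit and expectation, since the hypotheses provide only convergence \emph{in probability} of $\langle M^n\rangle$ and $\langle M^n_\epsilon\rangle$, not in $L^1$ or almost surely. I would handle this by passing to subsequences along which the convergences hold almost surely, exploiting the uniform bound $|e^{iuN^n_{s-}}|=1$ together with dominated convergence to move the limit through both the expectation and the $d\langle N^n\rangle_s$ integral, and then noting that the limit is the same along every subsequence. The secondary technical points---localization to reduce locally square integrable martingales to truly square integrable ones, and the verification that the jump split behaves uniformly in $s\in[0,t]$---are routine but must be carried out with care. This reproduces the statement, which is the classical result of \citet{R80} (see also Theorem II.5.1 in \citet{ABGK93}).
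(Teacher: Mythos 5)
The paper offers no proof of this statement at all: it is quoted as Rebolledo's martingale CLT from \citet{R80} and Theorem II.5.1 of \citet{ABGK93}. So your attempt can only be measured against the classical proofs, and your outline does follow the classical characteristic-function route (Cram\'er--Wold, It\^o's formula for $e^{iuN^n_s}$, an integral equation for $\phi_n(t):=\IE(e^{iuN^n_t})$, uniqueness/Gronwall). Most steps you call routine (localization, the Taylor splitting of jumps at level $\epsilon$, the transfer of the jump condition to $\lambda'M^n$ at a rescaled jump level) can indeed be carried out. The genuine gap sits exactly at the point you yourself flag as the ``main obstacle'', and the fix you propose does not work. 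Even after passing to a subsequence along which $\sup_{s\le t}\left|\langle N^n\rangle_s-v_s\right|\to0$ almost surely, you still must show
\begin{equation*}
\IE\left[\int_0^t e^{iuN^n_{s-}}\,d\langle N^n\rangle_s\right]-\int_0^t \phi_n(s)\,dv_s\longrightarrow 0.
\end{equation*}
This is not a dominated-convergence statement: the integrating measures $d\langle N^n\rangle$ \emph{and} the integrands $e^{iuN^n_{s-}}$ both move with $n$, and the integrands converge pointwise to nothing (they keep oscillating); only the pairing converges. Weak convergence of measures tested against merely uniformly bounded, non-equicontinuous integrands can fail to converge (let $\mu_n$ be atoms merging into Lebesgue measure and $f_n$ oscillate in phase with the atoms). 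The argument that closes this gap must exploit that the oscillation of the integrand is controlled by the very measure being integrated: partition $[0,t]$ so that $v$ increases by at most $\delta$ on each piece, freeze the integrand at the left endpoints, and bound $\sup_{s\in[t_j,t_{j+1}]}|N^n_s-N^n_{t_j}|$ via Doob's or Lenglart's inequality by the quadratic-variation increment, which is at most $\delta+o_P(1)$; then let $n\to\infty$, $\delta\to0$, and finish with Gronwall. Alternatively, one avoids the issue entirely by working with the stochastic exponential $\mathcal{E}(iuN^n)$, which is how proofs of Rebolledo's theorem are usually organized.

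A second, smaller gap: the statement allows an \emph{arbitrary} subset $\mathcal{T}_0\subseteq\mathcal{T}$ and assumes convergence of $\langle M^n\rangle_t$ only at points $t\in\mathcal{T}_0$, whereas your integral equation over $[0,t]$ needs $\langle N^n\rangle_s\to v_s$ at (a dense set of) all $s\le t$ in order to identify the limiting Stieltjes measure $dv$ on $[0,t]$. If $\mathcal{T}_0$ is, say, a single time point, your proof does not apply as written, although the conclusion is still true (it is the continuous-time analogue of the fact that the discrete-time martingale CLT only needs the \emph{total} conditional variance to converge); in that generality one needs a time-change or stable-convergence argument rather than a pointwise integral equation. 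In short: right family of ideas and a correct skeleton, but the crucial interchange of limits is unjustified as proposed, and the role of $\mathcal{T}_0$ is glossed over.
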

 
\subsection{Details for the Proof of Theorem \ref{thm:T1}}
\label{app:proof}
\begin{proposition}
\label{prop:mmd}
In the setting of Theorem \ref{thm:T1} assume that Assumptions (VX), (B), (KBW), (C), (mDep), (HSR), (LL), (SP), (P) and all assumptions from Section \ref{app:lla} hold true and suppose that $\sup_{n\in\IN}\|\Delta_n\|_{\infty}<\infty$. Then,
\begin{equation}
\label{eq:I1}
B_n^{-\frac{1}{2}}\left(N\sqrt{h}\int_0^TI_1(t)^2w(t)dt-h^{-\frac{1}{2}}A_n\right)\overset{d}{\to}\mathcal{N}(0,1),
\end{equation}
where
\begin{align*}
A_n&:=N\sum_{i,j\in V_n}\int_0^T\frac{f_n(r,r)}{m^2p_n(r)^2\mu_n(r,\beta_0)^2}dN_{n,ij}(r), \\
B_n&:=4K^{(2)}\int_0^T\left(\frac{w(s)\alpha_0(s)}{\gamma\pi(s)\mu_n(s;\beta_0)}\right)^2ds, \\
K^{(2)}&:=\int_0^2\left(\int_{-1}^1K(u+v)K(u)du\right)^2dv.
\end{align*}
\end{proposition}

\begin{proof}[Proof of Proposition \ref{prop:mmd}]
By Assumption (B), we have that $\inf_{n\in\IN}B_n>0$. By Lemma \ref{lem:I1} from Section \ref{sec:support_lemmas} it is hence sufficient to prove
\begin{equation}
\label{eq:martconv}
B_n^{-\frac{1}{2}}\left(N\sqrt{h}\int_0^T\left(\int_0^TK_{h,t}(s)\frac{1}{mp_n(s)}\mu_n(s;\beta_0)^{-1}dM_n(s)\right)^2w(t)dt-h^{-\frac{1}{2}}A_n\right)\overset{d}{\to} N(0,1).
\end{equation}
Recall that
$$f_n(r,s):=\int_0^ThK_{h,t}(s)K_{h,t}(r)w(t)dt$$
is symmetric. Then, we can rewrite the integral in \eqref{eq:martconv} as follows
\begin{align}
&N\sqrt{h}\int_0^T\left(\int_0^TK_{h,t}(s)\frac{1}{mp_n(s)}\mu_n(s,\beta_0)^{-1}dM_n(s)\right)^2w(t)dt \nonumber \\
=&2Nh^{-\frac{1}{2}}\int_0^T\int_0^{r-}f_n(r,s)\frac{1}{mp_n(s)}\mu_n(s;\beta_0)^{-1}\frac{1}{mp_n(r)}\mu_n(r;\beta_0)^{-1}dM_n(s)dM_n(r) \label{eq:distr} \\
&\quad+Nh^{-\frac{1}{2}}\int_0^T\int_{\{s=r\}}f_n(r,s)\frac{1}{mp_n(r)}\mu_n(r;\beta_0)^{-1}\frac{1}{mp_n(s)}\mu_n(s;\beta_0)^{-1}dM_n(s)dM_n(r). \label{eq:bias}
\end{align}
Note that the two martingales $M_{n,ij}$ and $M_{n,kl}$ do not jump simultaneously when $(i,j)\neq (k,l)$ and that the height of each jump is exactly one. We hence obtain
\begin{align*}
\eqref{eq:bias}&=Nh^{-\frac{1}{2}}\sum_{(i,j)\in V_n\times V_n}\int_0^T\frac{f_n(r,r)}{m^2p_n(r)^2}\mu_n(r;\beta_0)^{-2}dN_{n,ij}(r)=h^{-\frac{1}{2}}A_n.
\end{align*}
In the next step, we split the double summation in \eqref{eq:distr} (from writing out the definition $M_n=\sum_{(i,j)\in V_n\times V_n}M_{n,ij}$) in two steps as well,
\begin{align}
&\eqref{eq:distr} \nonumber \\
=&2N\underset{(i,j)\neq (k,l)}{\sum_{(i,j),(k,l)\in V_n\times V_n}}h^{-\frac{1}{2}}\int_0^T\int_0^{r-}\frac{f_n(r,s)}{m^2p_n(s)p_n(r)}\mu_n(s;\beta_0)^{-1}\mu_n(r;\beta_0)^{-1}dM_{n,ij}(s)dM_{n,kl}(r) \label{eq:dummylabel} \\
&+2N\sum_{(i,j)\in V_n\times V_n}h^{-\frac{1}{2}}\int_0^T\int_0^{r-}\frac{f_n(r,s)}{m^2p_n(s)p_n(r)}\mu_n(s,\beta_0)^{-1}\mu_n(r,\beta_0)^{-1}dM_{n,ij}(s)dM_{n,ij}(r). \label{eq:distr2}
\end{align}
By assumption (KBW) $hK_{h,t}(s)\leq \|K\|_{\infty}<\infty$. Using this and the bounded support of $K$ we can prove the useful bound $\int_0^T\int_0^Tf_n(r,s)^2\frac{1}{m^2p_n(r)p_n(s)}dsdr\leq \|K\|_{\infty}^2N^{-2}$. Note, moreover, that we achieved by the splitting that \eqref{eq:distr2} is a martingale and thus we may compute the expectation of the square as the expectation of the quadratic variation. Therefore, we get for $n\to\infty$ with $\eta_0:=\inf_{t\in[0,T]}\mu_n(t;\beta_0)$
\begin{align*}
&\IE(\eqref{eq:distr2}^2) \\
=&\frac{4N^2}{h}\sum_{(i,j)\in V_n\times V_n}\int_0^T\IE\Bigg(\left(\int_0^{r-}\frac{f_n(r,s)}{m^2p_n(s)p_n(r)\mu_n(s,\beta_0)\mu_n(r,\beta_0)}dM_{n,ij}(s)\right)^2 \\
&\quad\quad\quad\quad\times C_{n,ij}(r)\alpha_0(r)\Psi(X_{n,ij}(r);\beta_0)\Bigg)dr \\
\leq&\frac{4N^2\|\Psi(\cdot;\beta_0)\|_{\infty}\|\alpha_0\|_{\infty}}{h} \\
&\qquad\times\sum_{(i,j)\in V_n\times V_n}\int_0^T\IE\Bigg(\left(\int_0^{r-}\frac{f_n(r,s)}{m^2p_n(s)p_n(r)\mu_n(s,\beta_0)\mu_n(r,\beta_0)}dM_{n,ij}(s)\right)^2\Bigg)dr \\
=&\frac{4N^2\|\Psi(\cdot;\beta_0)\|_{\infty}\|\alpha_0\|_{\infty}}{h}\sum_{(i,j)\in V_n\times V_n}\int_0^T\int_0^r\frac{f_n(r,s)^2\IE\left(\Psi(X_{n,ij}(s);\beta_0)C_{n,ij}(s)\right)\alpha_0(s)}{m^4p_n(s)^2p_n(r)^2\mu_n(s,\beta_0)^2\mu_n(r,\beta_0)^2}dsdr \\
\leq&\frac{4N^2\|\Psi(\cdot;\beta_0)\|_{\infty}\|\alpha_0\|_{\infty}^2}{hmp_n\eta_0^3}\int_0^T\int_0^r\frac{f_n(r,s)^2}{m^2p_n(s)p_n(r)}dsdr\leq\frac{4\|\Psi(\cdot;\beta_0)\|_{\infty}\|\alpha_0\|_{\infty}^2}{hmp_n\eta_0^3}\|K\|_{\infty}^2\to0
\end{align*}
because $hmp_n\to\infty$ and $\eta_0>0$ and all $\infty$-norms are bounded by (C). Recalling that $B_n>\epsilon>0$ for all $n\in\IN$ we see from the previous results that the proof of the proposition is complete if we can show that
\begin{equation}
\label{eq:distr1}
\frac{2N}{\sqrt{hB_n}}\underset{(i,j)\neq (k,l)}{\underset{(k,l)\in V_n\times V_n}{\sum_{(i,j)\in V_n\times V_n,}}}\int_0^T\int_0^{r-}\frac{f_n(r,s)}{m^2p_n(s)p_n(r)\mu_n(s;\beta_0)\mu_n(r;\beta_0)}dM_{n,ij}(s)dM_{n,kl}(r)\overset{d}{\to} N(0,1)
\end{equation}
We will do this by showing the assumptions of Rebolledo's martingale central limit theorem (cf. Theorem \ref{thm:Rebolledo}). We note to this end that \eqref{eq:distr1} is indeed a martingale evaluated at $T$. We therefore have to study the variation process as well as the jump process. This will be tedious. To simplify the notation, we introduce the following functions for $s_2\leq s_1$, $(i,j),(k,l)\in V_n\times V_n$ and $I\subseteq V_n\times V_n$ (the $M$ is the same as in (mDep)):
\begin{align}
&\phi_{n,ij,i'j'}^I(s_1,s_2) \label{eq:def_phi} \\
:=&\frac{4mN^2}{h}\int_{s_1}^{s_2+2h}f_n(r,s_1)f_n(r,s_2)\frac{\mu_n(r;\beta_0)^{-2}}{mp_n(r)}\alpha_0(r)\frac{\mu_n(s_1;\beta_0)^{-1}}{mp_n(s_1)}\frac{\mu_n(s_2;\beta_0)^{-1}}{mp_n(s_2)} \nonumber \\
&\quad\quad\quad\times\frac{1}{mp_n(r)}\underset{(k,l)\neq(i,j),(i',j')}{\sum_{(k,l)\in V_n\times V_n}}\Psi(X_{n,kl}(r);\beta_0)C_{n,kl}(r)\Ind(d_{s_1-4h}^n(kl,I)\geq M)dr \nonumber
\end{align}
We will write $\phi_{n,ij}^I(r)$ instead of $\phi_{n,ij,ij}^I(r,r)$ and let $\phi_{n,ij,kl}(s_1,s_2):=\phi^{\emptyset}_{n,ij,kl}(s_1,s_2)$. For completeness we let $\phi_{n,ij,i'j'}^ I(s_1,s_2)=0$ if $s_2>s_1$.  Note finally that $f_n(r,s)$ functions as a kernel: Since $K$ is supported on $[-1,1]$, $f_n(r,s)$ is different from zero only if $|r-s|\leq 2h$. This implies that $\phi_{n,ij,i'j'}^ I(s_1,s_2)=0$ whenever $s_2<s_1-2h$ With this notation the variation process can be written as follows:
\begin{align}
&\left\langle\eqref{eq:distr1}\right\rangle \nonumber \\
=&\frac{1}{mB_n}\sum_{(i,j),(i',j')\in V_n\times V_n}\int_0^T\int_0^T\phi_{n,ij,i'j'}(s_1,s_2)dM_{n,i'j'}(s_2)dM_{n,ij}(s_1) \nonumber \\
=&\frac{1}{mB_n}\sum_{(i,j),(i',j')\in V_n\times V_n}\int_0^T\int_{s_1-2h}^{s_1-}2\phi_{n,ij,i'j'}(s_1,s_2)dM_{n,i'j'}(s_2)dM_{n,ij}(s_1) \label{eq:zero_var} \\
&\quad+\frac{1}{mB_n}\sum_{(i,j),(i',j')\in V_n\times V_n}\int_0^T\int_{\{s_1\}}\phi_{n,ij,i'j'}(s_1,s_2)dM_{n,i'j'}(s_2)dM_{n,ij}(s_1). \label{eq:nonzero_var}
\end{align}
Note that the splitting in \eqref{eq:distr} and \eqref{eq:bias} relied on symmetry of $f_n$. Here we use that $\phi_{n,ij,i'j'}(s_1,s_2)=\phi_{n,i'j',ij}(s_1,s_2)$ and the fact that we sum over all $(i,j),(i',j')$. We will prove that \eqref{eq:zero_var} converges to zero and \eqref{eq:nonzero_var} converges to one which is going to be the asymptotic variance of the normal distribution. For the latter we note again that for $(i,j)\neq (i',j')$ the martingales $M_{n,ij}$ and $M_{n,i'j'}$ never jump simultaneously and that their jump height always equals one. So we get
\begin{align}
\eqref{eq:nonzero_var}=&\frac{1}{mB_n}\sum_{(i,j)\in V_n\times V_n}\int_0^T\phi_{n,ij}(s)dM_{n,ij}(s) \label{eq:abcdef} \\
&\quad\quad+\frac{1}{mB_n}\sum_{(i,j)\in V_n\times V_n}\int_0^T\phi_{n,ij}(s)\Psi(X_{n,ij}(s);\beta_0)C_{n,ij}(s)\alpha_0(s)ds. \label{eq:abcdefg}
\end{align}
We show that \eqref{eq:abcdef} converges to zero by Proposition 2.9 from \citet{K20}: We apply the proposition with $\tilde{\phi}_{n,ij}^J(s):=\frac{1}{mB_n}\phi_{n,ij}^J(s)$. These functions are of leave-$M$-out type, that is, they are measurable with respect to $\tilde{\mathcal{F}}_{ij,t}^{n,J,M}$, cf. Section \ref{app:mDep_mot}. In the following, we show that the three terms which appear in the upper bound there converge to zero. Firstly, we obtain by Lemma \ref{lem:phi_bound} and the definition of $\Phi_n$ (cf. Lemma \ref{lem:phi_bound})
\begin{align*}
&\frac{1}{m^2B_n^2}\sum_{ij}\int_0^T\IE\left(\phi_{n,ij}^{ij}(t)^2C_{n,ij}(t)\alpha_0(t)\Psi(X_{n,j}(t),\beta_0)\right)dt \\
\leq&\frac{C^2\|\Psi(\cdot;\beta_0)\|_{\infty}\|\alpha_0\|_{\infty}}{m^2B_n^2}\IE\left(\sum_{ij}\int_0^T\frac{1}{p_n(t)^2}\Phi_n(t,t)^2C_{n,ij}(t)dt\right) \\
=&\frac{C^2\|\Psi(\cdot;\beta_0)\|_{\infty}\|\alpha_0\|_{\infty}}{m^2B_n^2}\sum_{ij}\int_0^T\frac{1}{p_n(t)^2}\IE\left(\left(\int_t^{t+2h}\frac{1}{hmp_n(s)}\overline{\Psi}_n(s;\beta_0)ds\right)^2C_{n,ij}(t)\right)dt \\
\leq&\frac{4C^2\|\Psi(\cdot;\beta_0)\|_{\infty}^3\|\alpha_0\|_{\infty}}{mp_nB_n^2}\frac{1}{m}\sum_{ij}\int_0^T\sup_{s\in[t,t+2h]}\IE\left(\left(\frac{1}{mp_n(s)}\sum_{kl}C_{n,kl}(s)\right)^2\Big|C_{n,ij}(t)=1\right)dt
\end{align*}
which converges to zero since the $\infty$-norms are bounded by (C), $mp_n\to\infty$ by (KBW), $\inf B_n>0$ by (B) and \eqref{eq:wc1+1}. For the second part of the upper bound we introduce the notation $B_n^A:=\sup_{(a,b)\in A}B_n^{ab}$ for index sets $A\subseteq V_n\times V_n$ and where $B_n^{ab}$ is defined in Section \ref{app:hubs}. Since $B_n^{ab}\geq1$ for all pairs $(a,b)$ we have that $B_n^{\{ij,kl\}}\leq B_n^{ij}B_n^{kl}$. Moreover, we note that for any process $X:[0,T]\to\IR$ we have for the stochastic integral
\begin{align*}
\left|\int X(t)dM_{n,ij}(t)\right|\leq&\int |X(t)|dN_{n,ij}(t)+\int |X(t)|\lambda_{n,ij}(t)dt \\
=&\int |X(t)|dM_{n,ij}(t)+2\int |X(t)|\lambda_{n,ij}(t)dt
=:\int |X(t)|d|M_{n,ij}|(t).    
\end{align*}
We keep this notation in mind and bound the second term from Proposition 2.9 in \citet{K20} as follows by Lemma \ref{lem:phi_bound}:
\begin{align}
&\left|\frac{1}{m^2B_n^2}\sum_{ij,kl}\IE\left(\int_0^T\phi_{n,ij}^{ij,kl}(t)dM_{n,ij}(t)\int_0^T\left(\phi_{n,kl}(t)-\phi_{n,kl}^{ij,kl}(t)\right)dM_{n,kl}(t)\right)\right| \nonumber \\
\leq&\frac{1}{m^2B_n^2}\sum_{ij,kl}\IE\left(\int_0^T\left|\phi_{n,ij}^{ij,kl}(t)\right|d|M_{n,ij}|(t)\int_0^T\left|\phi_{n,kl}(t)-\phi_{n,kl}^{ij,kl}(t)\right|d|M_{n,kl}|(t)\right) \nonumber \\
\leq&\frac{1}{m^2B_n^2}\sum_{ij,kl}\IE\left(\int_0^T\left|\phi_{n,ij}^{ij,kl}(t)\right|d|M_{n,ij}|(t)\int_0^T\frac{2CB_n^{\{ij,kl\}}}{mP_n(t)^2}d|M_{n,kl}|(t)\right) \nonumber \\
=&\frac{1}{m^2B_n^2}\sum_{ij,kl}\IE\left(\int_0^T\left|\phi_{n,ij}^{ij,kl}(t)\right|dM_{n,ij}(t)\int_0^T\frac{2CB_n^{\{ij,kl\}}}{mp_n(t)^2}dM_{n,kl}(t)\right) \label{eq:bp1} \\
&+\frac{2}{m^2B_n^2}\sum_{ij,kl}\IE\left(\int_0^T\left|\phi_{n,ij}^{ij,kl}(t)\right|dM_{n,ij}(t)\int_0^T\frac{2CB_n^{\{ij,kl\}}}{mp_n(t)^2}\lambda_{n,kl}(t)dt\right) \label{eq:bp2} \\
&+\frac{2}{m^2B_n^2}\sum_{ij,kl}\IE\left(\int_0^T\frac{CB_n^{ij}}{p_n(t)}\Phi_n(t,t)\lambda_{n,ij}(t)dt\int_0^T\frac{2CB_n^{kl}}{mp_n(t)^2}dM_{n,kl}(t)\right) \label{eq:bp3} \\
&+\frac{4}{m^2B_n^2}\sum_{ij,kl}\IE\left(\int_0^T\frac{CB_n^{ij}}{p_n(t)}\Phi_n(t,t)\lambda_{n,ij}(t)dt\int_0^T\frac{2CB_n^{kl}}{mp_n(t)^2}\lambda_{n,kl}(t)dt\right). \label{eq:bp4}
\end{align}
In order to handle \eqref{eq:bp1}, we observe that by (mDep) the processes $\phi_{n,ij}^{ij,kl}(t)$ are predictable with respect to the augmented filtration $\tilde{\mathcal{F}}_{V_n\times V_n,t}^{n,\{ij,kl\},m}$ and that $M_{n,ij},M_{n,kl}$ are both uncorrelated martingales with respect to the same filtration. Therefore, we may compute (using also the measurability properties of $B_n^{\{ij,kl\}}$ and Lemma \ref{lem:phi_bound})
\begin{align*}
\eqref{eq:bp1}=&\frac{1}{m^2B_n^2}\sum_{ij}\IE\left(\int_0^T\left|\phi_{n,ij}^{ij}(t)\right|\frac{2CB_n^{ij}}{mp_n(t)^2}\lambda_{n,ij}(t)dt\right) \\
\leq&\frac{2C^2\|\alpha_0\|_{\infty}\|\Psi(\cdot;\beta_0)\|_{\infty}}{m^2p_n^2B_n^2}\int_0^T\IE\left(B_n^{ij}\Phi_n(t,t)|C_{n,ij}(t)=1\right)dt\to0
\end{align*}
because the $\infty$-norms  are bounded, $\inf_{n\in\IN}B_n>0$, $N^2/m^2p_n^2=O(1)$ by (SP) and the integral has the right order by \eqref{eq:assump2}. For the next term we have to split the integration and use that the expectation of the stochastic integral with respect to $M_{n,ij}$ is zero as well as (mDep) again:
\begin{align*}
\eqref{eq:bp2}=&\frac{2}{m^2B_n^2}\sum_{ij,kl}\IE\left(\int_0^T\int_0^t\left|\phi_{n,ij}^{ij,kl}(t)\right|\frac{2CB_n^{\{ij,kl\}}}{mp_n(s)^2}\lambda_{n,kl}(s)dsdM_{n,ij}(t)\right) \\
&+\frac{2}{m^2B_n^2}\sum_{ij,kl}\IE\left(\int_0^T\int_0^s\left|\phi_{n,ij}^{ij,kl}(t)\right|\frac{2CB_n^{\{ij,kl\}}}{mp_n(s)^2}\lambda_{n,kl}(s)dM_{n,ij}(t)ds\right) \\
\leq&\frac{4C\|\alpha_0\|_{\infty}\|\Psi(\cdot;\beta_0)\|_{\infty}}{m^3B_n^2}\sum_{ij,kl}\IE\left(\int_0^T\int_0^s\left|\phi_{n,ij}^{ij,kl}(t)\right|\frac{B_n^{\{ij,kl\}}C_{n,kl}(s)}{p_n(s)^2}dN_{n,ij}(t)ds\right) \\
\leq&\frac{4C\|\alpha_0\|_{\infty}^2\|\Psi(\cdot;\beta_0)\|_{\infty}^2}{m^3B_n^2}\sum_{ij,kl}\IE\left(\int_0^T\int_0^s\left|\phi_{n,ij}^{ij,kl}(t)\right|\frac{B_n^{\{ij,kl\}}C_{n,kl}(s)C_{n,ij}(t)}{p_n(s)^2}dtds\right) \\
\leq&\frac{4C^2\|\alpha_0\|_{\infty}^2\|\Psi(\cdot;\beta_0)\|_{\infty}^2}{mp_nB_n^2}\cdot\frac{1}{m^2}\sum_{ij,kl}\int_0^T\int_0^T\IE\left(\Phi_n(t,t)B_n^{\{ij,kl\}}\frac{C_{n,kl}(s)C_{n,ij}(t)}{p_n(t)p_n(s)}\right)dtds \\
\leq&\frac{4C^2\|\alpha_0\|_{\infty}^2\|\Psi(\cdot;\beta_0)\|_{\infty}^2}{mp_nB_n^2}\cdot\underset{t,s\in[0,T]}{\sup_{i,j,k,l\in V_n,}}\IE\left(\Phi_n(t,t)B_n^{\{ij,kl\}}|C_{n,ij}(t)C_{n,kl}(s)=1\right) \\
&\times\frac{1}{m^2}\sum_{ij,kl}\int_0^T\int_0^T\frac{\IP(C_{n,kl}(s)C_{n,ij}(t)=1)}{p_n(t)p_n(s)}dtds\to0
\end{align*}
again by boundedness of the $\infty$-norms by (C), since $N/mp_m=O(1)$ by (SP) and due to \eqref{eq:assump1} and \eqref{eq:wc3}. For the third term, we apply the Cauchy-Schwarz Inequality:
\begin{align*}
\eqref{eq:bp3}\leq&\frac{2}{mp_nB_n^2}\left(\IE\left(\sum_{ij}\int_0^T\frac{CB_n^{ij}}{mp_n(t)}\Phi_n(t,t)\lambda_{n,ij}(t)dt\right)^2\IE\left(\sum_{kl}\int_0^T\frac{2CB_n^{kl}}{mp_n(t)}dM_{n,kl}(t)\right)^2\right)^{\frac{1}{2}}
\end{align*}
For the first term we get
\begin{align*}
&\IE\left(\sum_{ij}\int_0^T\frac{CB_n^{ij}}{mp_n(t)}\Phi_n(t,t)\lambda_{n,ij}(t)dt\right)^2 \\
\leq&C^2\|\alpha_0\|_{\infty}^2\|\Psi(\cdot;\beta_0)\|_{\infty}^2\frac{1}{m^2}\sum_{ij,kl}\int_0^T\int_0^T\IE\left(B_n^{ij}B_n^{kl}\Phi_n(t,t)\Phi_n(s,s)\frac{C_{n,ij}(t)C_{n,kl}(s)}{p_n(t)p_n(s)}\right)dtds \\
\leq&C^2\|\alpha_0\|_{\infty}^2\|\Psi(\cdot;\beta_0)\|_{\infty}^2 \\
&\quad\times\frac{1}{m^2}\sum_{ij,kl}\int_0^T\int_0^T\IE\left(B_n^{ij}B_n^{kl}\Phi_n(t,t)\Phi_n(s,s)|C_{n,ij}(t)C_{n,kl}(s)=1\right) \\
&\quad\qquad\times\frac{\IP(C_{n,ij}(t)C_{n,kl}(s)=1)}{p_n(t)p_n(s)}dtds \\
=&O(1)
\end{align*}
by the boundedness assumptions (C), \eqref{eq:assump3}, \eqref{eq:wc3} and for the second part we can use the uncorrelated martingales to obtain
\begin{align*}
&\IE\left(\sum_{kl}\int_0^T\frac{2CB_n^{kl}}{mp_n(t)}dM_{n,kl}(t)\right)^2=\sum_{kl}\IE\left(\int_0^T\frac{4C^2\left(B_n^{kl}\right)^2}{m^2p_n(t)^2}\lambda_{n,kl}(t)dt\right) \\
\leq&\frac{4C^2\|\alpha_0\|_{\infty}\|\Psi(\cdot;\beta_0)\|_{\infty}}{mp_n}\int_0^T\IE\left(\left(B_n^{kl}\right)^2|C_{n,kl}(t)=1\right)dt\to0
\end{align*}
by the boundedness assumptions (C) and \eqref{eq:assump4}. Overall we conclude that $\eqref{eq:bp3}\to0$. Finally, for \eqref{eq:bp4} we see that
\begin{align*}
\eqref{eq:bp4}\leq&\frac{4}{mp_nB_n^2}\left(\IE\left(\sum_{ij}\int_0^T\frac{CB_n^{ij}}{mp_n(t)}\Phi_n(t,t)\lambda_{n,ij}(t)dt\right)^2\IE\left(\sum_{kl}\int_0^T\frac{2CB_n^{kl}}{mp_n(t)}\lambda_{n,kl}(t)dt\right)^2\right)^{\frac{1}{2}}.
\end{align*}
The first term above is identical to the first term in the bound of \eqref{eq:bp3} and we have just proven that it remains bounded. The second term remains bounded by (C), \eqref{eq:assump7} and \eqref{eq:wc3}. Convergence to zero follows since $mp_nB_n^2\to\infty$. Thus we have proven overall that also the second term from Proposition 2.9 in \citet{K20} converges to zero. Finally, we have to deal with the third expression. This can be estimated as follows using Lemma \ref{lem:phi_bound}
\begin{align}
&\frac{1}{m^2B_n^2}\sum_{ij,kl}\IE\left(\int_0^T\left(\phi_{n,ij}(t)-\phi_{n,ij}^{ij,kl}(t)\right)dM_{n,ij}(t)\int_0^T\left(\phi_{n,kl}(t)-\phi_{n,kl}^{ij,kl}(t)\right)dM_{n,kl}(t)\right) \nonumber \\
\leq&\frac{1}{m^2B_n^2}\sum_{ij,kl}\IE\left(\int_0^T\frac{2C B_n^{\{ij,kl\}}}{mp_n(t)^2}d|M_{n,ij}|(t)\int_0^T\frac{2C B_n^{\{ij,kl\}}}{r_np_n(t)^2}d|M_{n,kl}|(t)\right) \nonumber \\
\leq&\frac{4C^2}{m^2p_n^2}\IE\left(\left(\sum_{ij}\int_0^T\frac{\left(B_n^{ij}\right)^2}{mp_n(t)}d|M_{n,ij}|(t)\right)^2\right) \nonumber \\
\leq&\frac{8C^2}{m^2p_n^2}\IE\left(\left(\sum_{ij}\int_0^T\frac{\left(B_n^{ij}\right)^2}{mp_n(t)}dM_{n,ij}(t)\right)^2\right)+\frac{32C^2}{m^2p_n^2}\IE\left(\left(\sum_{ij}\int_0^T\frac{\left(B_n^{ij}\right)^2}{mp_n(t)}\lambda_{n,ij}(t)dt\right)^2\right). \label{eq:prop29cond3}
\end{align}
The two expressions above are identical to the two terms which appear in the upper-bound of \eqref{eq:bp3} but $B_n^{ij}$ has to be replaced by $\left(B_n^{ij}\right)^2$ and $\Phi_n(t,t)$ has to be removed. But these changes do not affect the general argument and so we can repeat the estimates following \eqref{eq:bp3} and obtain
\begin{align*}
\eqref{eq:prop29cond3}\leq&\frac{8C^2}{m^2p_n^2}\cdot\frac{\|\alpha_0\|_{\infty}\|\Psi(\cdot;\beta_0)\|_{\infty}}{mp_n}\int_0^T\IE\left(\left(B_n^{kl}\right)^4|C_{n,kl}(t)=1\right)dt \\
&+\frac{32C^2}{m^2p_n^2}\cdot \|\alpha_0\|_{\infty}^2\|\Psi(\cdot;\beta_0)\|_{\infty}^2 \\
&\quad\times\frac{1}{m^2}\sum_{ij,kl}\int_0^T\int_0^T\IE\left(\left(B_n^{ij}B_n^{kl}\right)^2|C_{n,ij}(t)C_{n,kl}(s)=1\right)\frac{\IP(C_{n,ij}(t)C_{n,kl}(s)=1)}{p_n(t)p_n(s)}dtds
\end{align*}
which converges to zero by boundedness of the $\infty$-norms in (C) and the supposed behavior of $B_n^{ij}$ in \eqref{eq:assump5} and \eqref{eq:assump6} as well as \eqref{eq:wc3}. Hence, we can employ Proposition 2.9 in \citet{K20} to have convergence of \eqref{eq:abcdef} to zero.

For \eqref{eq:abcdefg} we denote $\ell_n(s):=4K^{(2)}w(s)^2\mu_n(s;\beta_0)^{-3}\alpha_0(s)$ and rewrite \eqref{eq:abcdefg} as follows (the first line is the definition as a reminder)
\begin{align}
&\eqref{eq:abcdefg} \nonumber \\
=&\frac{1}{mB_n}\sum_{(i,j)\in V_n\times V_n}\int_0^T\phi_{n,ij}(s)\Psi(X_{n,ij}(s);\beta_0)C_{n,ij}(s)\alpha_0(s)ds \nonumber \\
=&\frac{1}{B_n}\sum_{(i,j)\in V_n\times V_n}\int_0^T\frac{N^2}{m^3p_n(s)^3}\left(\frac{m^2p_n(s)^3}{N^2}\phi_{n,ij}(s)-\ell_n(s)\right)\Psi(X_{n,ij}(s);\beta_0)C_{n,ij}(s)\alpha_0(s)ds \label{eq:pimlico1} \\
&+\frac{1}{B_n}\int_0^T\frac{N^2}{m^2p_n(s)^2}\left(\frac{1}{mp_n(s)}\overline{\Psi}_n(s;\beta_0)-\mu_n(s;\beta_0)\right)\ell_n(s)\alpha_0(s)ds \label{eq:pimlico2} \\
&+\frac{1}{B_n}\int_0^T\frac{N^2}{m^2p_n(s)^2}\mu_n(s;\beta_0)\ell_n(s)\alpha_0(s)ds. \label{eq:pimlico3}
\end{align}
The three parts above can be estimated as follows
\begin{align*}
|\eqref{eq:pimlico1}|\leq\frac{N^2}{B_nm^2p_n^2}\sup_{ij,s}\left|\frac{m^2p_n(s)^3}{N^2}\phi_{n,ij}(s)-\ell_n(s)\right|\int_0^T\frac{1}{mp_n(s)}\overline{\Psi}_n(s;\beta_0)\alpha_0(s)ds\to0
\end{align*}
because we have $\inf_{n\in\IN} B_n>0$, $N^2/m^2p_n^2=O(1)$ by (SP), by Lemma \ref{lem:phi_convergence} $\frac{m^2p_n(s)^3}{N^2}\phi_{n,ij}(s)$ converges uniformly in $s$ and $(i,j)$ to $\ell_n(s)$ and $\frac{1}{mp_n(s)}\overline{\Psi}_n(s;\beta_0)$ is uniformly bounded since $\IP(\mathcal{A}_n(c_n,0))\to1$ for some suitable $c_n\to\infty$. Next we estimate
$$|\eqref{eq:pimlico2}|\leq\frac{T\|\ell_n\|_{\infty}\|\alpha_0\|_{\infty}}{B_n}\frac{N^2}{m^2p_n^2}\left\|\frac{1}{mp_n(\cdot)}\overline{\Psi}_n(\cdot;\beta_0)-\mu_n(\cdot;\beta_0)\right\|_{\infty}=o_P(1)$$
by the same arguments. Finally, by definition $\eqref{eq:pimlico3}=1$ and we conclude that $\eqref{eq:abcdefg}\to1$ in probability. We have thus completed the proof of $\eqref{eq:nonzero_var}\to1$ in probability. We have to show now that \eqref{eq:zero_var} converges to zero.

To this end, we split \eqref{eq:zero_var} in two parts and obtain
\begin{align}
\eqref{eq:zero_var}=&\frac{1}{mB_n}\underset{(i,j)\neq (i',j')}{\sum_{(i,j),(i',j')\in V_n\times V_n}}\int_0^T\int_{s_1-2h}^{s_1-}2\phi_{n,ij,i'j'}(s_1,s_2)dM_{n,i'j'}(s_2)dM_{n,ij}(s_1) \label{eq:zero_var1} \\
&\quad+ \frac{1}{mB_n}\sum_{(i,j)\in V_n\times V_n}\int_0^T\int_{s_1-2h}^{s_1-}2\phi_{n,ij,ij}(s_1,s_2)dM_{n,ij}(s_2)dM_{n,ij}(s_1). \label{eq:zero_var2}
\end{align}
We show that \eqref{eq:zero_var1} converges to zero by Theorem 2.11 in \citet{K20} where we put $\tilde{\phi}_{n,ij,kl}^J=\frac{1}{B_n}\phi_{n,ij,kl}^J$ and $\delta_n=h$. The momentary m-Dependence holds by assumption and we have left to show the five conditions (4)-(8). We begin with condition (5): We apply firstly Lemma \ref{lem:phi_bound} and use then Itô's Lemma for semi-martingales (cf. e.g. \citet{CE15} Theorem 14.2.4 together with Definition 11.1.4) followed by some basic bounds and the Cauchy-Schwarz Inequality
\begin{align}
&|(5)| \nonumber \\
=&\Bigg|\IE\Bigg(\frac{1}{m^2B_n^2}\underset{ij\neq i'j', kl\neq k'l'}{\sum_{ij,i'j',kl,k'l'}}\int_0^T\int_{t-2h}^{t-}\left(\phi_{n,ij,i'j'}^{\{ij,i'j'\}}(t,r)-\phi_{n,ij,i'j'}^{\{ij,i'j',kl,k'l'\}}(t,r)\right)dM_{n,i'j'}(r)dM_{n,ij}(t) \nonumber \\
&\quad\times\int_0^T\int_{t-2h}^{t-}\left(\phi_{n,ij,i'j'}^{\{kl,k'l'\}}(t,r)-\phi_{n,kl,k'l'}^{\{ij,i'j',kl,k'l'\}}(t,r)\right)dM_{n,k'l'}(r)dM_{n,kl}(t)\Bigg)\Bigg| \nonumber \\
\leq&\IE\Bigg(\frac{1}{m^2B_n^2}\underset{ij\neq i'j', kl\neq k'l'}{\sum_{ij,i'j',kl,k'l'}}\int_0^T\int_{t-2h}^{t-}\frac{2C}{mp_n(t)p_n(r)}B_n^{\{kl,k'l'\}}d|M_{n,i'j'}|(r)d|M_{n,ij}|(t) \nonumber \\
&\quad\times\int_0^T\int_{t-2h}^{t-}\frac{2C}{mp_n(t)p_n(r)}B_n^{\{ij,i'j'\}}d|M_{n,k'l'}|(r)d|M_{n,kl}|(t)\Bigg) \nonumber \\
\leq&\IE\left(\left(\underset{ij\neq i'j'}{\sum_{ij,i'j'}}\int_0^T\int_{t-2h}^{t-}\frac{2C\cdot B_n^{\{ij,i'j'\}}}{m^2p_n(t)p_n(r)B_n}d|M_{n,i'j'}|(r)d|M_{n,ij}|(t)\right)^2\right) \label{eq:four} \\
=&\IE\Bigg(\sum_{ij,kl}\int_0^T2\int_0^{s-}\sum_{k'l'\neq kl}\int_{r-2h}^{r-}\frac{2C\cdot B_n^{\{kl,k'l'\}}}{m^2p_n(t)p_n(r)B_n}d|M_{n,k'l'}|(t)d|M_{n,kl}|(r) \nonumber \\
&\quad\times\sum_{i'j'\neq ij}\int_{s-2h}^{s-}\frac{2C\cdot B_n^{\{ij,i'j'\}}}{m^2p_n(t)p_n(r)B_n}d|M_{n,i'j'}|(t) \lambda_{n,ij}(s)ds\Bigg) \nonumber \\
&+\IE\left(\sum_{ij}\int_0^T\left(\sum_{i'j'\neq ij}\int_{r-2h}^{r-}\frac{2C\cdot B_n^{\{ij,i'j'\}}}{m^2p_n(r)p_n(t)B_n}d|M_{n,i'j'}|(t)\right)^2dN_{n,ij}(r)\right) \nonumber \\
\leq&\frac{4C^2\|\alpha_0\|_{\infty}\|\Psi(\cdot;\beta_0)\|_{\infty}}{B_n^2} \nonumber \\
&\times\Bigg\{\sum_{ij}\int_0^T\frac{2p_n(s)}{mp_n}\sqrt{\IE\left(\left(\frac{1}{mp_n}\underset{i'j'\neq ij}{\sum_{i'j'}}\int_{s-2h}^{s-}B_n^{\{ij,i'j'\}}d|M_{n,i'j'}|(t)\right)^2\Bigg|C_{n,ij}(s)=1\right)} \nonumber \\
&\quad\times\sqrt{\IE\left(\left(\frac{1}{m^2p_n^2}\underset{k'l'\neq kl}{\sum_{kl,k'l'}}\int_0^{s-}\int_{r-2h}^{r-}B_n^{\{kl,k'l'\}}d|M_{n,k'l'}|(t)d|M_{n,kl}|(r)\right)^2\Bigg|C_{n,ij}(s)=1\right)}ds \nonumber \\
&+\sum_{ij}\int_0^T\frac{p_n(r)}{m^2p_n^2}\IE\left(\left(\frac{1}{mp_n}\underset{i'j'\neq ij}{\sum_{i'j'}}\int_{r-2h}^{r-}B_n^{\{ij,i'j'\}}d|M_{n,i'j'}|(t)\right)^2\Bigg|C_{n,ij}(r)=1\right)dr\Bigg\} \nonumber
\end{align}
which converges to zero by Assumption (C), \eqref{eq:assump10}, and \eqref{eq:assump13}. We continue with (4): By Lemma \ref{lem:phi_bound} we may estimate for any $\epsilon>0$ by Markov's Inequality
\begin{align*}
&\IP\left(|(4)|>\epsilon\right) \\
=&\IP\left(\left|\frac{1}{mB_n}\underset{ij\neq kl}{\sum_{ij,kl}}\int_0^T\int_{t-2h}^{t-}\left(\phi_{n,ij,kl}(t,r)-\phi_{n,ij,kl}^{\{ij,kl\}}(t,r)\right)dM_{n,kl}(r)dM_{n,ij}(t)\right|>\epsilon\right) \\
\leq&\IP\left(\frac{1}{mB_n}\underset{ij\neq kl}{\sum_{ij,kl}}\int_0^T\int_{t-2h}^{t-}\frac{2C}{mp_n(t)p_n(r)}B_n^{\{ij,kl\}}d|M_{n,kl}|(r)d|M_{n,ij}|(t)>\epsilon\right) \\
\leq&\frac{1}{\epsilon^2}\IE\left(\left(\underset{ij\neq kl}{\sum_{ij,kl}}\int_0^T\int_{t-2h}^{t-}\frac{2CB_n^{\{ij,kl\}}}{m^2p_n(t)p_n(r)B_n}d|M_{n,kl}|(r)d|M_{n,ij}|(t)\right)^2\right).
\end{align*}
The above equals \eqref{eq:four} which we have just proven to converge to zero. We turn now to (6). Denote to this end by $F_{ij}(t)=\{(k,l): d_t^n(ij,kl)\geq M\}$ the set of all pairs which are \emph{far away} from $(i,j)$ at time $t$ and let $C_0:=\|\alpha_0\|_{\infty}\|\Psi(\cdot;\beta_0)\|_{\infty}$. For simplicity of notation we will write
$\mathcal{J}_{kl}(t):=C_0\left(4h\sup_{\rho\in[t-2h,t+2h]}C_{n,kl}(\rho)+N_{n,kl}[t-2h,t+2h]\right)$. With this we denote the neighbouring events ${\rm NE}_{ij}(t)$ of a pair $(i,j)$ and all events ${\rm AE}(t)$ as follows
\begin{align*}
{\rm NE}_{ij}(t):=&\sum_{k,l\in V_n}B_n^{kl}\mathcal{J}_{kl}(t)\Ind(kl\notin F_{ij}(t-2h)), \qquad{\rm AE}(t):=\sum_{k',l'\in V_n}B_n^{k'l'}\mathcal{J}_{k'l'}(t).
\end{align*}
Let moreover $\neg$ denote the negation operator. With this notation we obtain for all $t\in[0,T]$ and all $i,j\in V_n$ by using that $B_n^{kl,k'l'}\leq B_n^{kl}B_n^{k'l'}$
\begin{align*}
&\sum_{kl,k'l'}\int_t^{t+2h}\int_{\xi-2h}^{\xi}B_n^{kl,k'l'}\Ind(\neg kl,k'l'\in F_{ij}(t-2h))d|M_{n,k'l'}|(\rho)d|M_{n,kl}|(\xi) \\
\leq&C_0^2\sum_{kl,k'l'}B_n^{kl,k'l'}\mathcal{J}_{kl}(t)\mathcal{J}_{k'l'}(t)\Ind(\neg kl,k'l'\in F_{ij}(t-2h)) \\
\leq&C_0^2\sum_{kl}B_n^{kl}\mathcal{J}_{kl}(t)\Ind( kl\notin F_{ij}(t-2h))\sum_{k'l'}B_n^{k'l'}\mathcal{J}_{k'l'}(t)\Ind(k'l'\in F_{ij}(t-2h)) \\
&+C_0^2\sum_{kl}B_n^{kl}\mathcal{J}_{kl}(t)\Ind(kl\in F_{ij}(t-2h))\sum_{k'l'}B_n^{k'l'}\mathcal{J}_{k'l'}(t)\Ind(k'l'\notin F_{ij}(t-2h)) \\
&+C_0^2\sum_{kl}B_n^{kl}\mathcal{J}_{kl}(t)\Ind(kl\notin F_{ij}(t-2h))\sum_{k'l'}B_n^{k'l'}\mathcal{J}_{k'l'}(t)\Ind(k'l'\notin F_{ij}(t-2h)) \\
\leq&C_0^2\left({\rm NE}_{ij}(t){\rm AE}(t)+{\rm AE}(t){\rm NE}_{ij}(t)+{\rm NE}_{ij}(t)^2\right)\leq 3C_0^2{\rm AE}(t){\rm NE}_{ij}(t)
\end{align*}
because by definition ${\rm NE}_{ij}(t)\leq{\rm AE}(t)$. Using all of this we can bound expression (6) of Theorem 2.11 in \citet{K20} as follows by using firstly Lemma \ref{lem:phi_bound}
\begin{align*}
&|(6)| \\
=&\Bigg|\frac{2}{m^2B_n^2}\underset{ij\neq i'j', kl\neq k'l'}{\sum_{ij,i'j',kl,k'l'}}\IE\Bigg[\int_0^T\int_{t-2h}^{t-}\left(\phi_{n,ij,i'j'}^{\{ij,i'j'\}}(t,r)-\phi_{n,ij,i'j'}^{\{ij,i'j',kl,k'l'\}}(t,r)\right)dM_{n,i'j'}(r)\\
&\times\int_t^{t+2h}\int_{\xi-2h}^{\xi-}\phi_{n,kl,k'l'}^{\{ij,i'j',kl,k'l'\}}(\xi,\rho)dM_{n,k'l'}(\rho)dM_{n,kl}(\xi)\Ind(\neg kl,k'l'\in F_{ij}(t-2h))dM_{n,ij}(t)\Bigg]\Bigg| \\
\leq&\frac{4C^2}{m^3p_n^3B_n^2}\underset{ij\neq i'j', kl\neq k'l'}{\sum_{ij,i'j',kl,k'l'}}\IE\Bigg[\int_0^T\int_{t-2h}^{t-}B_n^{\{kl,k'l'\}}d|M_{n,i'j'}|(r) \\
&\times\int_t^{t+2h}\int_{\xi-2h}^{\xi-}\Phi_n(\xi,\rho)d|M_{n,k'l'}|(\rho)d|M_{n,kl}|(\xi)\cdot\Ind(\neg kl,k'l'\in F_{ij}(t-2h))d|M_{n,ij}|(t)\Bigg] \\
\leq&\frac{4C^2}{m^3p_n^3B_n^2}\underset{ij\neq i'j'}{\sum_{ij,i'j'}}\IE\Bigg[\underset{\rho\in[\xi-2h,\xi]}{\sup_{\xi\in[0,T]}}\Phi_n(\xi,\rho)\int_0^T\int_{t-2h}^{t-}d|M_{n,i'j'}|(r) \\
&\times\underset{kl\neq k'l'}{\sum_{kl,k'l'}}\int_t^{t+2h}\int_{\xi-2h}^{\xi-}B_n^{kl,k'l'}\Ind(\neg kl,k'l'\in F_{ij}(t-2h))d|M_{n,k'l'}|(\rho)d|M_{n,kl}|(\xi)d|M_{n,ij}|(t)\Bigg] \\
\leq&\frac{12C^2}{m^3p_n^3B_n^2}\underset{ij\neq i'j'}{\sum_{ij,i'j'}}\IE\Bigg[\underset{\rho\in[\xi-2h,\xi]}{\sup_{\xi\in[0,T]}}\Phi_n(\xi,\rho)\int_0^T\int_{t-2h}^{t-}d|M_{n,i'j'}|(r)C_0^2{\rm AE}(t){\rm NE}_{ij}(t)d|M_{n,ij}|(t)\Bigg].
\end{align*}
Hence, (6) converges to zero by \eqref{eq:assump11}. We turn now to expression (7) in Theorem 2.11 of \citet{K20}. We bound (7) firstly by noting that the indicator $\Ind(kl\in F_{ij}(t-2h))$ is often equal to one and hence we may simply ignore it and by applying the usual bounds on $\lambda_{n,ij}$ and from Lemma \ref{lem:phi_bound}
\begin{align*}
&|(7)| \\
=&\underset{ij\neq kl}{\sum_{ij,kl}}\int_0^T\int_{t-2h}^{t-}\IE\left[\frac{\phi_{n,ij,kl}^{\{ij,kl\}}(t,r)^2}{m^2B_n^2}C_{n,ij}(t)\lambda_{n,ij}(t)C_{n,kl}(r)\lambda_{n,kl}(r)\Ind(kl\in F_{ij}(t-2h))\right]drdt \\
\leq&\frac{C\|\alpha_0\|_{\infty}^2\|\Psi(\cdot;\beta_0)\|_{\infty}^2}{B_n^2}\int_0^T\int_{t-2h}^{t-}\IE\left[\Phi_n(t,r)^2\cdot\frac{1}{m^2p_n(r)^2}\underset{ij\neq kl}{\sum_{ij,kl}}C_{n,ij}(t)C_{n,kl}(r)\right]drdt.
\end{align*}
The above converges to zero because by assumption $\inf_{n\in\IN}B_n>0$ and the supremum over all $n,t,r$ of the expectation is bounded by \eqref{eq:wc7}. Lastly, we have to deal with (8) in Theorem 2.11 of \citet{K20}. To this end we note firstly that
$$\sum_{ij}C_{n,ij}(t)\Ind(\neg i'j',kl\in F_{ij}(t-2h))\leq K_M^{i'j'}+K_M^{kl}\leq B_n^{i'j'}B_n^{kl}.$$
Using this and Lemma \ref{lem:phi_bound} we obtain
\begin{align*}
&|(8)| \\
=&\Bigg|\frac{1}{m^2B_n^2}\underset{ij\neq i'j'}{\sum_{ij,i'j'}}\underset{kl\neq i'j'}{\sum_{kl}}\int_0^T\IE\Bigg[\int_{t-2h}^{t-}\phi_{n,ij,i'j'}^{\{ij,i'j',kl\}}(t,r)dM_{n,i'j'}(r)\int_{t-2h}^{t-}\phi_{n,ij,kl}^{\{ij,i'j',l\}}(t,r')dM_{n,kl}(r') \\
&\quad\quad\times C_{n,ij}(t)\lambda_{n,ij}(t)\Ind(\neg i'j',kl\in F_{ij}(t-2h))\Bigg]dt\Bigg| \\
\leq&\frac{C^2\|\alpha_0\|_{\infty}\|\Psi(\cdot;\beta_0)\|_{\infty}}{m^2B_n^2}\sum_{i'j'}\sum_{kl}\int_0^T\IE\Bigg[\sum_{ij}C_{n,ij}(t)\Ind(\neg i'j',kl\in F_{ij}(t-2h)) \\
&\quad\quad\times\int_{t-2h}^{t-}\frac{1}{p_n(r)}\Phi_n(t,r)d|M_{n,i'j'}|(r)\int_{t-2h}^{t-}\frac{1}{p_n(r')}\Phi_n(t,r')d|M_{n,kl}|(r')\Bigg]dt \\
\leq&\frac{C^2\|\alpha_0\|_{\infty}\|\Psi(\cdot;\beta_0)\|_{\infty}}{B_n^2}\int_0^T\IE\Bigg[\left(\sum_{kl}\int_{t-2h}^{t-}\frac{B_n^{kl}}{mp_n(r')}\Phi_n(t,r')d|M_{n,kl}|(r')\right)^2\Bigg]dt
\end{align*}
which converges to zero by Assumption (C) and \eqref{eq:assump12}. Thus we have shown all conditions of Theorem 2.11 in \citet{K20} and we can hence use this theorem to conclude that \eqref{eq:zero_var1} converges to zero in probability.

In order to see that \eqref{eq:zero_var2} converges to zero we apply the bounds from Lemma \ref{lem:phi_bound}. More precisely, let $\gamma,\delta>0$ be arbitrary and choose $\kappa>0$ such that $\IP(\sup_{t\in[0,T]}\Phi_n(t)>\kappa)\leq\delta$. Then, we firstly make crude bounds in the probability such that we can apply Markov's Inequality and use martingale properties:
\begin{align*}
&\IP(|\eqref{eq:zero_var2}|>\gamma) \\
=&\IP\left(\left|\frac{1}{mB_n}\sum_{ij}\int_0^T\int_{s_1-2h}^{s_1-}2\phi_{n,ij,ij}(s_1,s_2)dM_{n,ij}(s_2)dM_{n,ij}(s_1)\right|>\gamma\right) \\
\leq&\IP\left(\frac{1}{mB_n}\sum_{(i,j)}\int_0^T\int_{s_1-2h}^{s_1-}\frac{2C}{p_n(s_2)}\Phi_n(s_1,s_2)d|M_{n,ij}|(s_2)d|M_{n,ij}|(s_1)>\gamma\right) \\
\leq&\IP\left(\frac{2C\kappa}{mB_n}\sum_{(i,j)}\int_0^T\int_{s_1-2h}^{s_1-}\frac{1}{p_n(s_2)}d|M_{n,ij}|(s_2)d|M_{n,ij}|(s_1)>\gamma\right)+\delta \\
\leq&\frac{2C\kappa}{mB_n\gamma}\IE\left(\sum_{(i,j)}\int_0^T\int_{s_1-2h}^{s_1-}\frac{1}{p_n(s_2)}d|M_{n,ij}|(s_2)d|M_{n,ij}|(s_1)\right)+\delta \\
\leq&\frac{2C\kappa}{B_n\gamma}\IE\Bigg(\int_0^T\left(\int_{s_1-2h}^{s_1-}\frac{1}{p_n(s_2)}dM_{n,ij}(s_2)+\int_{s_1-2h}^{s_1}\frac{2C_{n,ij}(s_2)}{p_n(s_2)}\|\alpha_0\|_{\infty}\|\Psi(\cdot;\beta_0)\|_{\infty}ds_2\right) \\
&\quad\quad\times\|\alpha_0\|_{\infty}\|\Psi(\cdot;\beta_0)\|_{\infty}ds_1\Bigg)+\delta \\
=&\frac{8TC\kappa\|\alpha_0\|_{\infty}^2\|\Psi(\cdot;\beta_0)\|_{\infty}^2}{B_n\gamma}h+\delta.
\end{align*}
The above converges to $\delta$ since $h\to0$. Since $\delta$ is arbitrary we conclude that $\eqref{eq:zero_var2}=o_P(1)$ and hence we have finally shown that \eqref{eq:zero_var} converges to zero in probability. From this we see that the variation process of \eqref{eq:distr1} converges to one which is the first condition of Rebolledo's Martingale Centrale Limit Theorem (Theorem \ref{thm:Rebolledo}). The second condition is concerned with the jump process of \eqref{eq:distr1} and will be our concern next.

For an arbitrary but fixed $\eta>0$ the jump process associated with \eqref{eq:distr1} is given by
\begin{align*}
M_{\eta}(t):=&\sum_{kl}\int_0^T\Ind\left(\left|\frac{2N}{\sqrt{hB_n}}\underset{ij\neq kl}{\sum_{ij}}\int_0^{r-}\frac{f_n(r,s)}{m^2p_n(s)p_n(r)\mu_n(s;\beta_0)\mu_n(r;\beta_0)}dM_{n,ij}(s)\right|>\eta\right) \\
&\quad\quad\times\frac{2N}{\sqrt{hB_n}}\underset{ij\neq kl}{\sum_{ij}}\int_0^{r-}\frac{f_n(r,s)}{m^2p_n(s)p_n(r)\mu_n(s;\beta_0)\mu_n(r;\beta_0)}dM_{n,ij}(s)dM_{n,kl}(r).
\end{align*}
because the martingales do not jump simultaneously, its corresponding variation process equals
\begin{align*}
\langle M_{\eta}\rangle(t):=&\sum_{kl}\int_0^T\Ind\left(\left|\frac{2N}{\sqrt{hB_n}}\underset{ij\neq kl}{\sum_{ij}}\int_0^{r-}\frac{f_n(r,s)}{m^2p_n(s)p_n(r)\mu_n(s;\beta_0)\mu_n(r;\beta_0)}dM_{n,ij}(s)\right|>\eta\right) \\
&\quad\quad\times\frac{4N^2}{hB_n}\left(\underset{ij\neq kl}{\sum_{ij}}\int_0^{r-}\frac{f_n(r,s)}{m^2p_n(s)p_n(r)\mu_n(s;\beta_0)\mu_n(r;\beta_0)}dM_{n,ij}(s)\right)^2\lambda_{n,kl}(r)dr.
\end{align*}
Note that up to the indicator function the above expression is identical to the variation process of \eqref{eq:distr1} which we have just proven to converge. Hence, we have $\langle M_{\eta}\rangle_T=o_P(1)$ if we can prove that (mind the supremum in the indicator)
$$\Ind\left(\underset{r\in[0,T]}{\sup_{k,l\in V_n}}\left|\frac{2N}{\sqrt{hB_n}}\underset{ij\neq kl}{\sum_{ij}}\int_0^{r-}\frac{f_n(r,s)}{m^2p_n(s)p_n(r)\mu_n(s;\beta_0)\mu_n(r;\beta_0)}dM_{n,ij}(s)\right|>\eta\right)=o_P(1).$$
Proving this is significantly easier than handling the previous expressions because we have no double integral with respect to two martingales. We can make the following estimate: (recall the notation $\eta_0:=\inf_{t\in[0,T]}\mu_n(t;\beta_0)$)
\begin{align}
&\underset{r\in[0,T]}{\sup_{k,l\in V_n}}\left|\frac{2N}{\sqrt{hB_n}}\underset{ij\neq kl}{\sum_{ij}}\int_0^{r-}\frac{f_n(r,s)}{m^2p_n(s)p_n(r)\mu_n(s;\beta_0)\mu_n(r;\beta_0)}dM_{n,ij}(s)\right| \nonumber \\
\leq&\sup_{r\in[0,T]}\frac{2N}{\sqrt{hB_n}}\sum_{ij}\int_0^{r-}\frac{f_n(r,s)}{m^2p_n(s)p_n(r)\mu_n(s;\beta_0)\mu_n(r;\beta_0)}d|M_{n,ij}|(s) \nonumber \\
\leq&\sup_{r\in[0,T]}\frac{2N}{\sqrt{hB_n}}\sum_{ij}\int_0^{r-}\frac{f_n(r,s)}{m^2p_n(s)p_n(r)\mu_n(s;\beta_0)\mu_n(r;\beta_0)}dM_{n,ij}(s) \nonumber \\
&+\sup_{r\in[0,T]}\frac{2N}{\sqrt{hB_n}}\sum_{ij}\int_0^{r-}\frac{2\lambda_{n,ij}(s) f_n(r,s)}{m^2p_n(s)p_n(r)\mu_n(s;\beta_0)\mu_n(r;\beta_0)}ds \nonumber \\
=&\sup_{r\in[0,T]}\frac{2N}{\sqrt{hB_n}}\sum_{ij}\int_0^{r-}\frac{f_n(r,s)}{m^2p_n(s)p_n(r)\mu_n(s;\beta_0)\mu_n(r;\beta_0)}dM_{n,ij}(s) \label{eq:last1} \\
&+\frac{4N\|\alpha_0\|_{\infty}}{mp_n\cdot\eta_0^2\sqrt{hB_n}}\sup_{s\in[0,T]}\frac{\overline{\Psi}_n(s;\beta_0)}{mp_n(s)}\sup_{r\in[0,T]}\int_0^rf_n(r,s)ds \label{eq:last2}
\end{align}
We note firstly that \eqref{eq:last2} is $o_P(1)$ for two main reasons: Firstly, $N/mp_n=O(1)$ and, secondly, we assume that for any given $\delta>0$ there is $c>0$ such that $\IP(\tilde{\mathcal{A}}(c,0))\geq1-\delta$ which in turn implies
$$\sup_{s\in[0,T]}\frac{1}{mp_n(t)}\overline{\Psi}_n(s;\beta_0)\leq\|\mu_n(\cdot;\beta_0)\|_{\infty}+c\sqrt{\frac{\log m}{m}}$$
with probability larger than or equal to $1-\delta$. The boundedness of the infinity norm and $\log m/m\to0$ together with the fact that $f_n$ localizes like a kernel and causes the integral to be uniformly $O(h)$ shows that $\eqref{eq:last2}=o_P(1)$.

For \eqref{eq:last1} we note that we take the supremum of a martingale. Hence, we can apply Doobs Martingale inequality which applies to all right continuous martingales (cf. Theorem 5.1.3 in \citet{CE15}) to obtain:
\begin{align*}
&\IE\left(\left(\sup_{r\in[0,T]}\frac{2N}{\sqrt{hB_n}}\sum_{ij}\int_0^{r-}\frac{f_n(r,s)}{m^2p_n(s)p_n(r)\mu_n(s;\beta_0)\mu_n(r;\beta_0)}dM_{n,ij}(s)\right)^2\right) \\
\leq&4\sup_{r\in[0,T]}\IE\left(\left(\frac{2N}{\sqrt{hB_n}}\sum_{ij}\int_0^{r-}\frac{f_n(r,s)}{m^2p_n(s)p_n(r)\mu_n(s;\beta_0)\mu_n(r;\beta_0)}dM_{n,ij}(s)\right)^2\right) \\
\leq&\frac{16N^2\|\alpha_0\|_{\infty}\|\Psi(\cdot;\beta_0)\|_{\infty}}{m^2p_n^2hB_n\eta_0^4}\int_0^T\frac{f_n(r,s)^2}{mp_n(s)}ds.
\end{align*}
The above converges to zero by boundedness of $N^2/m^2p_n^2$ by (SP) and the infinity norms together with the fact that $f_n$ acts like a kernel and compensates $h^{-1}$. Moreover, $\inf_{n\in\IN}B_n>0$ and $mp_n\to\infty$. Hence, the proof of $\langle M_{\eta}\rangle_T=o_P(1)$ is complete and we may apply Rebolledo's Martingale Central Limit Theorem (Theorem \ref{thm:Rebolledo}) which concludes the proof of the proposition.
\end{proof}

\begin{lemma}
\label{lem:contXbar}
Suppose that $\left|\Psi(x;\beta_1)-\Psi(x;\beta_2)\right|\leq L_{\Psi}\|\beta_1-\beta_2\|$. Then,
$$\left|\mu_n(t;\beta_1)-\mu_n(t;\beta_2)\right|\leq L_{\Psi}\|\beta_1-\beta_2\|.$$
\end{lemma}
\begin{proof}[Proof of Lemma \ref{lem:contXbar}]
We have
\begin{align*}
&\left|\mu_n(t;\beta_1)-\mu_n(t;\beta_2)\right|\leq\IE\left(\left|\Psi(X_{n,ij}(t),\beta_1)-\Psi(X_{n,ij}(t),\beta_2)\right|\big|C_{n,ij}(t)=1\right) \\
\leq&L_{\Psi}\|\beta_1-\beta_2\|
\end{align*}
\end{proof}

\begin{lemma}
\label{lem:phi_bound}
Let (SP), (B), (KBW), (C) and (LL) hold and denote
$$\Phi_n(t,r):=\int_{t}^{r+2h}\frac{1}{hmp_n(s)}\overline{\Psi}_n(s;\beta_0)ds.$$
There is a constant $C>0$ such that for $I\subseteq J\subseteq V_n\times V_n$
\begin{align}
\left|\phi_{n,ij,i'j'}^I(t,r)-\phi_{n,ij,i'j'}^J(t,r)\right|&\leq C\frac{|J\setminus I|}{mp_n(t)p_n(r)}\sup_{(a,b)\in J\setminus I}B_n^{ab}, \label{eq:phibound1}\\
\left|\phi_{n,ij,i'j'}^I(t,r)\right|&\leq\frac{C}{p_n(r)}\Phi_n(t,r). \label{eq:phibound2}
\end{align}
Moreover, $\IP(\sup\Phi_n(t,r)>C)\to0$, where the $\sup$ is taken over all $t,r\in[0,T]$ with $t-2h\leq r\leq t$.
\end{lemma}
\begin{proof}
Since, $I\subseteq J$, we get for any $(k,l)\in V_n\times V_n$
\begin{align*}
&|\Ind(d_{t-4h}^n(kl,I)\geq M)-\Ind(d_{t-4h}^n(kl,J)\geq M)| \\
\leq& \Ind(\exists (i,j)\in J\setminus I: d_{t-4h}^n(kl,ij)\leq M)\leq\sum_{(i,j)\in J\setminus I}\Ind(d_{t-4h}^n(kl,ij)\leq M).
\end{align*}
Then, we obtain for all $t\in[4h,T-2h]$ and all $s\in[t-4h,t+2h]$
\begin{align*}
&\underset{(k,l)\neq (i,j),(i',j')}{\sum_{(k,l)\in V_n\times V_n}}C_{n,kl}(s)\left|\Ind(d_{t-4h}^n(kl,I)\geq M)-\Ind(d_{t-4h}^n(kl,J)\geq M)\right| \\
\leq&|J\setminus I|\sup_{(a,b)\in J\setminus I}\left(\Ind\left(H_{\rm hub}^{ab}=0\right)+\Ind\left(H_{\rm hub}^{ab}=1\right)\right) \\
&\quad\quad\quad\quad\times\underset{{k,l}\neq (i,j),(i',j')}{\sum_{(k,l)\in V_n\times V_n}}C_{n,kl}(s)\Ind(d_{t-4h}^n(kl,ab)\leq M) \\
\leq&|J\setminus I|\cdot \sup_{(a,b)\in J\setminus I}\left(n_{\rm hub}+H_{\rm hub}^{ab}K_m\right)=|J\setminus I|\sup_{(a,b)\in J\setminus I}B_n^{ab},
\end{align*}
recalling the definition of $B_n^{ab}$. Using this and $|f_n(r,s)|\leq\|K\|_{\infty}$, we get
\begin{align*}
&\left|\phi_{n,ij,i'j'}^I(t,r)-\phi_{n,ij,i'j'}^J(t,r)\right| \\
\leq&\frac{4N^2\|K\|_{\infty}^2\|\alpha_0\|_{\infty}\|\Psi(\cdot;\beta_0)\|_{\infty}}{hm^3p_n^2p_n(t)p_n(r)\eta_0^4} \\
&\quad\quad\quad\quad\times\int_{t}^{r+2h}\underset{(k,l)\neq(i,j),(i',j')}{\sum_{(k,l)\in V_n\times V_n}}C_{n,kl}(s)\left|\Ind(d_{t-4h}^n(kl,I)\geq M)-\Ind(d_{t-4h}^n(kl,J)\geq M)\right|ds \\
\leq&\frac{8N^2\|K\|_{\infty}^2\|\alpha_0\|_{\infty}\|\Psi(\cdot;\beta_0)\|_{\infty}}{m^3p_n^2p_n(t)p_n(r)\eta_0^4}|J\setminus I|\sup_{(a,b)\in J\setminus I}B_n^{ab},
\end{align*}
where $\eta_0:=\inf_{t\in[0,T]}\mu_n(t;\beta_0)>0$ by (B). Use now that by (SP) $N/m^2p_n^2=O(1)$ and that by (C) and (KBW) all $\infty$-norms are bounded to conclude \eqref{eq:phibound1}.

In order to show \eqref{eq:phibound2} we bound the $\phi_{n,ij,i'j'}$ by the direct bounds:
\begin{align*}
\left|\phi_{n,ij,i'j'}^I(t,r)\right|\leq\frac{4N^2\|K\|_{\infty}^2\|\alpha_0\|_{\infty}}{m^2p_n^2p_n(r)\eta_0^4}\int_{t}^{r+2h}\frac{1}{hmp_n(s)}\overline{\Psi}_n(s;\beta_0)ds.
\end{align*}
We conclude \eqref{eq:phibound2} by the same boundedness assumptions as for \eqref{eq:phi_tilde_bound1} and replacing the definition of $\Phi_n$.

By Assumption (LL), we find sequences $\delta_n\to0$ and $c_n\to\infty$ such that $\IP(\tilde{\mathcal{A}}_n(c_n,0))\geq1-\delta_n$ and $c_n\sqrt{\log m/mp_n}\to0$. We have on $\mathcal{A}_n(c,0)$ for all $t,r\in[0,T]$ with $t-2h\leq r\leq t$
\begin{align*}
\Phi_n(t,r)\leq 4\sup_{s\in[0,T]}\frac{1}{mp_n(s)}\overline{\Psi}_n(s;\beta_0)\leq4\left(\left\|\mu_n(\cdot;\beta_0)\right\|_{\infty}+c\sqrt{\frac{\log m}{mp_n}}\right)=O(1).
\end{align*}
This completes the proof.
\end{proof}

We also need a result on the asymptotic behavior of the $\phi$'s.
\begin{lemma}
\label{lem:phi_convergence}
Let (LL), (C), (KBW), (SP) and (B) hold. Then we have that
$$\sup_{s\in[0,T]}\sup_{(i,j)\in V_n\times V_n}\left|\frac{m^2p_n^3(s)}{N^2}\phi_{n,ij}(s)-\frac{4w(s)^2\alpha_0(s)}{\mu_n(s;\beta_0)^3}K^{(2)}\right|=o_P(1),$$
where
$$K^{(2)}:=\frac{1}{2}\left\|\left(K\star K\right)\right\|_2^2.$$
\end{lemma}
\begin{proof}
Let $\delta>0$ be arbitrary and let $c>0$ be so large such that $\IP(\tilde{\mathcal{A}}_n(c,0))\geq1-\delta$. Now, on $\tilde{\mathcal{A}}_n(c,0)$ we have for any $r\in[0,T]$ and any $(i,j)\in V_n\times V_n$
\begin{align*}
&\left|\frac{1}{mp_n(r)}\sum_{(k,l)\neq(i,j)}\Psi(X_{n,kl}(r);\beta_0)C_{n,kl}(r)-\mu_n(r;\beta_0)\right| \\
\leq&\left|\frac{1}{mp_n(r)}\overline{\Psi}_n(r;\beta_0)-\mu_n(r;\beta_0)\right|+\frac{C_{n,ij}(r)}{mp_n(r)}\Psi(X_{n,ij}(r);\beta_0) \\
\leq&c\sqrt{\frac{\log m}{mp_n(r)}}+\frac{1}{mp_n(r)}\left\|\Psi(\cdot;\beta_0)\right\|_{\infty}\leq c^*\sqrt{\frac{\log m}{mp_n}},
\end{align*}
where $c^*>0$ is a suitable constant depending on $c$ and $\|\Psi(\cdot;\beta_0)\|_{\infty}$ (recall that $mp_n\to\infty$). Recall moreover the notation $\eta_0:=\inf_{t\in[0,T]}\mu_n(t;\beta_0)$. Thus, by conditioning on $\tilde{\mathcal{A}}_n(c,0)$, for any $\gamma>0$
\begin{align*}
&\IP\left(\sup_{s\in[0,T]}\sup_{(i,j)\in V_n\times V_n}\left|\frac{m^2p_n^3(s)}{N^2}\phi_{n,ij}(s)-\frac{4w(s)^2\alpha_0(s)}{\mu_n(s;\beta_0)^3}K^{(2)}\right|>\gamma\right) \\
\leq&\IP\Bigg(\sup_{s\in[0,T]}\sup_{(i,j)\in V_n\times V_n}\frac{4p_n(s)}{h}\int_s^{s+2h}f_n(r,s)^2\frac{\mu_n(r;\beta_0)^{-2}\mu_n(s;\beta_0)^{-2}\alpha_0(r)}{p_n(r)} \\
&\quad\times\left|\frac{1}{mp_n(r)}\sum_{(k,l)\neq(i,j)}\Psi(X_{n,kl}(r);\beta_0)C_{n,kl}(r)-\mu_n(r;\beta_0)\right|dr \\
&+\sup_{s\in[0,T]}\Bigg|\frac{4p_n(s)}{h}\int_s^{s+2h}f_n(r,s)^2\frac{\mu_n(r;\beta_0)^{-1}\mu_n(s;\beta_0)^{-2}\alpha_0(r)}{p_n(r)}dr \\
&\quad\quad\quad\quad-\frac{4w(s)^2\alpha_0(s)}{\mu_n(s;\beta_0)^3}K^{(2)}\Bigg|>\gamma\Bigg) \\
\leq&\IP\Bigg(\sup_{s\in[0,T]}\frac{4\|\pi\|_{\infty}\|\alpha_0\|_{\infty}}{h\eta_0^4}\int_s^{s+2h}f_n(r,s)^2c^*\sqrt{\frac{\log m}{mp_n}}dr \\
&+\sup_{s\in[0,T]}\Bigg|\frac{4p_n(s)}{h}\int_s^{s+2h}f_n(r,s)^2\frac{\mu_n(r;\beta_0)^{-1}\mu_n(s;\beta_0)^{-2}\alpha_0(r)}{p_n(r)}dr \\
&\quad\quad\quad\quad-\frac{4w(s)^2\alpha_0(s)}{\mu_n(s;\beta_0)^3}K^{(2)}\Bigg|>\gamma\Bigg)+\delta.
\end{align*}
Both lines in the probability above are actually not random, so we finish the proof by proving convergence to zero of both lines. For the first line, use $|f_n(r,s)|\leq \|K\|_{\infty}<\infty$ as well as $\|\alpha_0\|_{\infty}<\infty$ and continuity of $\pi$ with $\log m/mp_n\to0$.

The second line follows from standard kernel arguments: Under (SP) and symmetrie of $K$ it holds that
\begin{align}
&\frac{4p_n(s)}{h}\int_s^{s+2h}\frac{f_n(r,s)^2\mu_n(r;\beta_0)^{-1}\mu_n(s;\beta_0)^{-2}\alpha_0(r)}{p_n(r)}dr \nonumber \\
=&\frac{4\pi(s)}{h\mu_n(s;\beta_0)^2}\int_s^{s+2h}f_n(r,s)^2\frac{\alpha_0(r)}{\pi(r)\mu_n(r;\beta_0)}dr \nonumber \\
=&\frac{4\pi(s)}{\mu_n(s;\beta_0)^2}\int_0^2\left(\int_{-\frac{s}{h}}^{\frac{T-s}{h}}K\left(v\right)K\left(u-v\right)w(s+vh)dv\right)^2\frac{\alpha_0(s+uh)}{\pi(s+uh)\mu_n(s+uh;\beta_0)}du \label{eq:kernelexp}
\end{align}
Since $w$ is continuous (and hence uniformly continuous on $[0,T]$) and in addition supported on $[\delta,T-\delta]$ for some $\delta>0$, we see that for $h\leq\delta$
\begin{align*}
&\int_{-\frac{s}{h}}^{\frac{T-s}{h}}K\left(v\right)K\left(u-v\right)w(s+vh)dv \\
=&w(s)\int_{-\frac{s}{h}}^{\frac{T-s}{h}}K\left(v\right)K\left(u-v\right)dv+\int_{-\frac{s}{h}}^{\frac{T-s}{h}}K\left(v\right)K\left(u-v\right)(w(s+vh)-w(s))dv \\
=&w(s)\int_{-1}^1K\left(v\right)K\left(u-v\right)dv+\int_{-\frac{s}{h}}^{\frac{T-s}{h}}K\left(v\right)K\left(u-v\right)(w(s+vh)-w(s))dv
\end{align*}
and the second term converges uniformly to zero by continuity of $w$ and $h\to0$. We also have
$$\left|\alpha_0(s+uh)-\alpha_0(s)\right|\leq\left|\alpha(\theta_0,s+uh)-\alpha(\theta_0,s)\right|+c_n\left|\Delta_n(s+uh)-\Delta_n(s)\right|.$$
By uniform continuity of $\alpha(\cdot,\theta_0)$, $\sup_{n\in\IN}\|\Delta_n\|_{\infty}<\infty$ and $c_n\to0$ we also have convergence of the above to zero uniformly in $u$ and $s$. Moreover, continuity of $\pi$ and $\mu_n(\cdot;\beta_0)$ on the compact interval $[0,T]$ implies uniform continuity also of $\pi(s)^{-1}$ and $\mu_n(s;\beta_0)^{-1}$ because both are bounded from below. This implies that the integrand in \eqref{eq:kernelexp} converges uniformly and we conclude in turn that uniformly over $s\in[0,T]$
$$\eqref{eq:kernelexp}-\frac{4w(s)^2\alpha_0(s)}{\mu_n(s;\beta_0)^3}\int_0^2\left(\int_{-1}^1K\left(v\right)K\left(u-v\right)dv\right)^2du\to0.$$
We finish the proof by noting that by symmetry of $K$
$$(K\star K)(u)=\int_{-1}^1K(v)K(u-v)dv=\int_{-1}^1K(-v)K(u+v)dv=(K\star K)(-u).$$
This, in turn, implies together with $(K\star K)(u)=0$ for $u>2$ that
$$\int_0^2\left(\int_{-1}^1K(v)K(u-v)dv\right)^2du=\int_0^2\left(K\star K\right)(u)^2du=\frac{1}{2}\int_{-\infty}^{\infty}\left(K\star K\right)(u)^2du=K^{(2)}.$$
\end{proof}

\subsection{Supporting Lemmas}
\label{sec:support_lemmas}
We assume throughout that the model from Theorem \ref{thm:T1} as well as Assumption (VX) hold and do not mention this explicitly every time.

\begin{lemma}
\label{lem:I1}
Let (LL), (B), (C), (KBW), (SP) and \eqref{eq:wc8} hold. Then,
$$N\sqrt{h}\int_0^TI_1(t)^2w(t)dt-N\sqrt{h}\int_0^T\left(\int_0^TK_{h,t}(s)\frac{1}{mp_n(s)}\mu_n(s;\beta_0)^{-1}dM_n(s)\right)^2w(t)dt=o_P(1).$$
\end{lemma}

\begin{lemma}
\label{lem:I2}
Let $\sup_{n\in\IN}\|\Delta_n\|_{\infty}<+\infty$ and let (KBW) hold. Then,
\begin{equation}
N\sqrt{h}\int_0^TI_2(t)^2w(t)dt=O(1). \label{eq:I2}
\end{equation}
\end{lemma}

\begin{lemma}
\label{lem:I3}
Let (C), (P) and (KBW) hold. Then,
\begin{equation}
N\sqrt{h}\int_0^TI_3(t)^2w(t)dt=o_P(1). \label{eq:I3}
\end{equation}
\end{lemma}

\begin{lemma}
\label{lem:I4}
Let (LL), (P), (B), (SP), (KBW), (C) and \eqref{eq:wc8} hold. Then,
\begin{align}
&N\sqrt{h}\int_0^TI_4(t)^2w(t)dt=o_P(1). \label{eq:I4}
\end{align}
\end{lemma}

\begin{lemma}
\label{lem:I12}
Let (LL), (SP), (KBW), (C), (B) hold and let $\sup_{n\in\IN}\|\Delta_n\|_{\infty}<\infty$. Then,
$$N\sqrt{h}\int_0^TI_1(t)I_2(t)w(t)dt=o_P(1).$$
\end{lemma}

\begin{lemma}
\label{lem:I13}
Let (P), (LL), (B), (C), (KBW) and (SP) hold. Then,
$$N\sqrt{h}\int_0^TI_1(t)I_3(t)w(t)dt=o_P(1).$$
\end{lemma}

\begin{lemma}
\label{lem:I14}
Let (P), (LL), (C), (KBW), (B), (SP) and \eqref{eq:wc8} hold. Then,
$$N\sqrt{h}\int_0^TI_1(t)I_4(t)w(t)dt=o_P(1).$$
\end{lemma}

\begin{proof}[Proof of Lemma \ref{lem:I1}]
We rewrite the quantity of interest as follows
\begin{align}
&N\sqrt{h}\int_0^TI_1(t)^2w(t)dt-N\sqrt{h}\int_0^T\left(\int_0^TK_{h,t}(s)\left(mp_n(s)\mu_n(s;\beta_0)\right)^{-1}dM_n(s)\right)^2w(t)dt \nonumber \\
=&N\sqrt{h}\int_0^T\left(\int_0^TK_{h,t}(s)\frac{1}{mp_n(s)}\left(\left(\frac{1}{mp_n(s)}\overline{\Psi}_n(s;\beta_0)\right)^{-1}-\mu_n(s;\beta_0)^{-1}\right)dM_n(s)\right)^2w(t)dt \label{eq:b2} \\
&\quad+2N\sqrt{h}\int_0^T\int_0^TK_{h,t}(s)\frac{1}{mp_n(s)}\mu_n(s;\beta_0)^{-1}dM_n(s) \nonumber \\
&\quad\quad\times\int_0^TK_{h,t}(s)\frac{1}{mp_n(s)}\left(\left(\frac{1}{mp_n(s)}\overline{\Psi}_n(s;\beta_0)\right)^{-1}-\mu_n(s;\beta_0)^{-1}\right)dM_n(s)w(t)dt. \label{eq:b3}
\end{align}
Part \eqref{eq:b2} converges to zero in probability as we show next. We estimate
\begin{align*}
&\eqref{eq:b2} \\
\leq&2N\sqrt{h}\int_0^T\left(\int_0^TK_{h,t}(s)\frac{1}{mp_n(s)}\left(\left(\frac{1}{mp_n(s)}\overline{\Psi}_n(s;\beta_0)\right)^{-1}-\mu_n(s;\beta_0)^{-1}\right)dN_n(s)\right)^2w(t)dt \\
+&2N\sqrt{h}\int_0^T\Bigg(\int_0^TK_{h,t}(s)\frac{1}{mp_n(s)}\left(\left(\frac{1}{mp_n(s)}\overline{\Psi}_n(s;\beta_0)\right)^{-1}-\mu_n(s;\beta_0)^{-1}\right) \\
&\quad\quad\quad\quad\quad\quad\quad\quad\quad\quad\quad\quad\quad\quad\quad\quad\times \overline{\Psi}_n(s;\beta_0)\alpha_0(s)ds\Bigg)^2w(t)dt.
\end{align*}
By assumption we can find for any $\delta>0$ a $c>0$ such that $\IP(\mathcal{A}_n(c,0))\geq1-\delta$ for all $n\in\IN$. This can be used as follows: For any $\gamma>0$
\begin{align}
&\IP(\eqref{eq:b2}>\gamma)\leq\IP(\eqref{eq:b2}>\gamma,\mathcal{A}_n(c,0))+\IP(\mathcal{A}_n(c,0)^c) \nonumber \\
\leq&\IP\Bigg(2N\sqrt{h}\int_0^T\Bigg[\left(\int_0^TK_{h,t}(s)\frac{c\sqrt{\log m}}{(mp_n(s))^{\frac{3}{2}}}dN_n(s)\right)^2 \nonumber \\
&\quad\quad\quad+\left(\int_0^TK_{h,t}(s)\frac{c\sqrt{\log m}}{(mp_n(s))^{\frac{3}{2}}}\overline{\Psi}_n(s;\beta_0)\alpha_0(s)ds\right)^2\Bigg]w(t)dt>\gamma\Bigg)+\delta \nonumber \\
\leq&\IP\Bigg(\int_0^T\Bigg[4N\sqrt{h}\left(\int_0^TK_{h,t}(s)\frac{\sqrt{\log m}}{(mp_n(s))^{\frac{3}{2}}}dM_n(s)\right)^2 \nonumber \\
&\quad\quad\quad+6N\sqrt{h}\left(\int_0^TK_{h,t}(s)\frac{\sqrt{\log m}}{(mp_n(s))^{\frac{3}{2}}}\overline{\Psi}_n(s;\beta_0)\alpha_0(s)ds\right)^2\Bigg]w(t)dt>\frac{\gamma}{c^2}\Bigg)+\delta \nonumber \\
\leq&\frac{2c^2}{\gamma}N\sqrt{h}\int_0^T\int_0^TK_{h,t}(s)\frac{\log m}{mp_n(s)}\alpha_0(s) \nonumber \\
&\quad\quad\times\left[2\frac{K_{h,t}(s)}{mp_n(s)}\underbrace{\IE\left(\frac{\overline{\Psi}_n(s;\beta_0)}{mp_n(s)}\right)}_{=\mu_n(s;\beta_0)}+3\IE\left(\left(\frac{\overline{\Psi}_n(s;\beta_0)}{mp_n(s)}\right)^2\right)\alpha_0(s)\right]dsdt+\delta. \label{eq:breakpoint4}
\end{align}
By Assumptions (B) and \eqref{eq:wc8} the expectations are uniformly bounded. Moreover, by assumption $\|\alpha_0\|_{\infty}+\|K\|_{\infty}<+\infty$. Hence, we find a constant $c^*>0$ such that
\begin{align*}
\eqref{eq:breakpoint4}\leq \frac{c^*}{\gamma}N\sqrt{h}\frac{\log m}{mp_n}\left[\frac{1}{hmp_n}+1\right]+\delta\to\delta
\end{align*}
since $\sqrt{h}\log m\to0$, $hmp_n\to\infty$ by assumption and $N/mp_n=O(1)$ by (SP). Since $\gamma$ and $\delta$ were arbitrary, we conclude that $\eqref{eq:b2}=o_P(1)$.

In order to show that $\eqref{eq:b3}=o_P(1)$, we make the following abbreviations
\begin{align*}
a_n(t,s):=&K_{h,t}(s)\frac{1}{mp_n(s)}\mu_n(s;\beta_0)^{-1}, \\
b_n(t,s):=&K_{h,t}(s)\frac{1}{mp_n(s)}\left(\left(\frac{1}{mp_n(s)}\overline{\Psi}_n(s;\beta_0)\right)^{-1}-\mu_n(s;\beta_0)^{-1}\right).
\end{align*}
Then, we get
\begin{align}
&\eqref{eq:b3}=2N\sqrt{h}\int_0^T\int_0^T\int_0^Ta_n(t,s)b_n(t,r)w(t)dtdM_n(s)dM_n(r) \nonumber \\
=&2N\sqrt{h}\int_0^T\int_0^{r-}\int_0^Ta_n(t,s)b_n(t,r)w(t)dtdM_n(s)dM_n(r) \label{eq:s1} \\
&+2N\sqrt{h}\int_0^T\int_{r+}^T\int_0^Ta_n(t,s)b_n(t,r)w(t)dtdM_n(s)dM_n(r) \label{eq:s2} \\
&+2N\sqrt{h}\int_0^T\int_{\{r\}}\int_0^Ta_n(t,s)b_n(t,r)w(t)dtdM_n(s)dM_n(r). \label{eq:s3} 
\end{align}
We begin by treating \eqref{eq:s1}. Note that the two inner integrals are a predictable function of $r$ and therefore, we may treat the stochastic integral like a martingale. From Lenglart's Inequality (see equation (2.15.18) in \citet{ABGK93}) we see that we may alternatively prove that the quadratic variation converges to zero in probability. Hence, we prove in the following that the quadratic variation of \eqref{eq:s1} converges to zero in probability. Let to this end $\delta>0$ be arbitrary and choose $c>0$ so large such that $\IP(\mathcal{A}_n(c,0)\cap\tilde{\mathcal{A}}_n(c,0))\geq1-\delta$ for all $n\in\IN$. For simplicity of notation we let $\xi_n(s):=c\sqrt{\log m /mp_n(s)}$ and note that on $\mathcal{A}_n(c,0)\cap\tilde{\mathcal{A}}_n(c,0)$ we have
\begin{align*}
&\left|\left(\frac{1}{mp_n(s)}\overline{\Psi}_n(s;\beta_0)\right)^{-1}-\mu_n(s;\beta_0)^{-1}\right|\leq \xi_n(s),
&\left|\frac{1}{mp_n(s)}\overline{\Psi}_n(s;\beta_0)-\mu_n(s;\beta_0)\right|\leq \xi_n(s)
\end{align*}
We use these considerations to obtain for every $\gamma>0$
\begin{align*}
&\IP\left(4N^2h\int_0^T\left(\int_0^{r-}\int_0^Ta_n(t,s)b_n(t,r)w(t)dtdM_n(s)\right)^2\alpha_0(r)\overline{\Psi}_n(r,\beta_0)dr>\gamma\right) \\
\leq&\IP\Bigg(4N^2h\int_0^T\left(\int_0^{r-}\int_0^Ta_n(t,s)b_n(t,r)w(t)dtdM_n(s)\right)^2\alpha_0(r)\overline{\Psi}_n(r,\beta_0)dr>\gamma, \\
&\quad\quad\quad\quad\quad\quad\quad\quad\quad\mathcal{A}_n(c,0)\cap\tilde{\mathcal{A}}_n(c,0)\Bigg)+\delta \\
\leq&\IP\Bigg(8N^2h\int_0^T\alpha_0(r)\frac{\overline{\Psi}_n(r,\beta_0)\xi_n(r)^2}{m^2p_n(r)^2}\Bigg[\left(\int_0^{r-}\int_0^T\frac{K_{h,t}(s)K_{h,t}(r)}{mp_n(s)\mu_n(s;\beta_0)}w(t)dtdN_n(s)\right)^2 \\
&+\left(\int_0^{r-}\int_0^T\frac{K_{h,t}(s)K_{h,t}(r)}{mp_n(s)\mu_n(s;\beta_0)}w(t)\alpha_0(s)\overline{\Psi}_n(s;\beta_0)dtds\right)^2\Bigg]dr>\gamma\Bigg)+\delta \\
\leq&\IP\Bigg(8N^2h\int_0^T\alpha_0(r)\frac{\left(\mu_n(r,\beta_0)+\xi_n(r)\right)\xi_n(r)^2}{mp_n(r)} \\
&\qquad\times\Bigg[2\left(\int_0^{r-}\int_0^T\frac{K_{h,t}(s)K_{h,t}(r)}{mp_n(s)\mu_n(s;\beta_0)}w(t)dtdM_n(s)\right)^2 \\
&\qquad\qquad+3\left(\int_0^{r-}\int_0^T\frac{K_{h,t}(s)K_{h,t}(r)}{mp_n(s)\mu_n(s;\beta_0)}w(t)\alpha_0(s)\overline{\Psi}_n(s;\beta_0)dtds\right)^2\Bigg]dr>\gamma\Bigg)+\delta \\
\leq&\frac{1}{\gamma}\Bigg(8N^2h\int_0^T\alpha_0(r)\frac{\left(\mu_n(r,\beta_0)+\xi_n(r)\right)\xi_n(r)^2}{mp_n(r)} \\
&\quad\quad\times\Bigg[2\IE\left(\left(\int_0^{r-}\int_0^T\frac{K_{h,t}(s)K_{h,t}(r)}{mp_n(s)\mu_n(s;\beta_0)}w(t)dtdM_n(s)\right)^2\right) \\
&\quad\quad\quad+\IE\left(3\left(\int_0^{r-}\int_0^T\frac{K_{h,t}(s)K_{h,t}(r)}{mp_n(s)\mu_n(s;\beta_0)}w(t)\alpha_0(s)\overline{\Psi}_n(s;\beta_0)dtds\right)^2\right)\Bigg]dr\Bigg)+\delta \\
\leq&\frac{1}{\gamma}\Bigg(8N^2h\int_0^T\alpha_0(r)\frac{\left(\mu_n(r,\beta_0)+\xi_n(r)\right)\xi_n(r)^2}{mp_n(r)} \\
&\quad\quad\times\Bigg[2\left(\int_0^r\left(\int_0^T\frac{K_{h,t}(s)K_{h,t}(r)}{mp_n(s)\mu_n(s;\beta_0)}w(t)dt\right)^2\alpha_0(s)\IE(\overline{\Psi}_n(s;\beta_0))ds\right) \\
&\quad\quad\quad+3\int_0^r\int_0^TK_{h,t}(s)K_{h,t}(r)\IE\left(\left(\frac{w(t)\alpha_0(s)\overline{\Psi}_n(s;\beta_0)}{mp_n(s)\mu_n(s;\beta_0)}\right)^2\right)dtds\Bigg]dr\Bigg)+\delta \\
\leq&\frac{1}{\gamma}\Bigg(8N^2h\|\alpha_0\|_{\infty}\frac{\left(\|\mu_n(\cdot;\beta_0)\|_{\infty}+C\right)c^2\log m}{m^2p_n^2} \\
&\times\int_0^T\Bigg[2\left(\int_0^r\left(\int_0^TK_{h,t}(s)K_{h,t}(r)dt\right)^2\frac{\|w\|_{\infty}^2\|\alpha_0\|_{\infty}}{mp_n}\|\mu_n(\cdot;\beta_0)^{-1}\|_{\infty}\|\Psi(\cdot;\beta_0)\|_{\infty}ds\right) \\
&\qquad\qquad+3\int_0^r\int_0^TK_{h,t}(s)K_{h,t}(r)\|w\|_{\infty}^2\|\alpha_0\|_{\infty}^2\|\mu_n(\cdot;\beta_0)^{-1}\|_{\infty}^2 \\
&\qquad\qquad\qquad\times\IE\left(\left(\frac{\overline{\Psi}_n(s;\beta_0)}{mp_n(s)}\right)^2\right)dtds\Bigg]dr\Bigg)+\delta,
\end{align*}
where $C>0$ is chosen such that $\xi_n(r)\leq C$. Note that all $\infty$-norms above are bounded by assumption, we assume in \eqref{eq:wc8} moreover that the expectation is bounded and by definition and (SP) $N^2/m^2p_n^2=O(1)$. Use finally $h\log m\to0$, $hmp_n\to\infty$ and
$$\int_{-\infty}^{\infty}\int_{-\infty}^{\infty}K_{h,t}(s)K_{h,t}(r)dtds=1\textrm{ and }\int_{-\infty}^{\infty}\left(\int_{-\infty}^{\infty}K_{h,t}(s)K_{h,t}(r)dt\right)^2ds=O\left(\frac{1}{h}\right)$$
to conclude that the inequality chain above converges to $\delta$ on the right hand side. Since $\gamma>0$ and $\delta>0$ were arbitrary, we have proven that $\eqref{eq:s1}=o_p(1)$. For \eqref{eq:s2} we note that interchanging the order of integration yields
$$\eqref{eq:s2}=2N\sqrt{h}\int_0^T\int_0^{s-}\int_0^Ta_n(t,s)b_n(t,r)w(t)dtdM_n(r)dM_n(s).$$
Hence, we have an expression which is almost identical to \eqref{eq:s1} and may therefore be treated by the same arguments. So we skip the proof of $\eqref{eq:s2}=o_P(1)$. Instead, we detail the arguments for handling \eqref{eq:s3}. Note firstly that in the inner integral with respect to $M_n(s)$ the integral over the intensity can be ignored and only the counting process integral remains. Similarly for the outer integral only the counting process integral remains. In consideration of this we get (let as usually $\gamma,\delta>0$ be arbitrary and $c>0$ such that $\IP(\mathcal{A}_n(c,0))\geq1-\delta$)
\begin{align}
 &\IP\left(\eqref{eq:s3})>\gamma\right)=\IP\left(2N\sqrt{h}\int_0^T\int_0^Ta_n(t,r)b_n(t,r)w(t)dtdN_n(r)>\gamma\right) \nonumber \\
=&\IP\left(2N\sqrt{h}\int_0^T\int_0^Ta_n(t,r)b_n(t,r)w(t)dtdN_n(r)>\gamma,\mathcal{A}_n(c,0)\right)+\delta \nonumber \\
\leq&\IP\left(2N\sqrt{h}\int_0^T\int_0^TK_{h,t}(r)^2\frac{\xi_n(r)w(t)}{m^2p_n(r)^2\mu_n(r;\beta_0)}dtdN_n(r)>\gamma\right)+\delta \nonumber \\
\leq&\frac{2N\sqrt{h}}{\gamma}\int_0^T\int_0^TK_{h,t}(r)^2\frac{\xi_n(r)w(t)}{m^2p_n(r)^2\mu_n(r;\beta_0)}\alpha_0(r)\IE\left(\overline{\Psi}_n(r;\beta_0)\right)dtdr+\delta \nonumber \\
\leq&\frac{2cT\|w\|_{\infty}\|\alpha_0\|_{\infty}\|K\|_{\infty}}{\gamma}\cdot\frac{N}{mp_n}\sqrt{\frac{\log m}{hmp_n}}+\delta \label{eq:breakpoint3}
\end{align}
Again, $N/mp_n=O(1)$ by (SP), the $\infty$-norms are finite and $\log m/hmp_n\to0$, thus $\eqref{eq:breakpoint3}\to\delta$. Since $\gamma>0$ and $\delta>0$ were chosen arbitrarily, we conclude $\eqref{eq:s3}=o_P(1)$ and the proof of the Lemma is complete.
\end{proof}

\begin{proof}[Proof of Lemma \ref{lem:I2}]
We may apply simple direct bounds
\begin{align*}
&N\sqrt{h}\int_0^TI_2(t)^2w(t)dt=N\sqrt{h}\int_0^T\left(c_n\int_0^TK_{h,t}(s)\Delta_n(s)ds\right)^2w(t)dt \\
\leq& N\sqrt{h}c_n^2T\|w\|_{\infty}\|\Delta_n\|_{\infty}^2.
\end{align*}
By assumption $\sup_{n\in\IN}\|\Delta_n\|_{\infty}<\infty$, $\|w\|_{\infty}<\infty$ and $N\sqrt{h}c_n^2=1$ by definition, thus \eqref{eq:I2} is shown.
\end{proof}

\begin{proof}[Proof of Lemma \ref{lem:I3}]
By the Cauchy-Schwarz Inequality and the definition of $L_{\alpha}$
\begin{align*}
&N\sqrt{h}\int_0^TI_3(t)^2w(t)dt=N\sqrt{h}\int_0^T\left(\int_0^TK_{h,t}(s)\left(\alpha(\theta_0,s)-\alpha(\hat{\theta}_n,s)\right)ds\right)^2w(t)dt \\
\leq&N\sqrt{h}\|w\|_{\infty}\int_0^T\int_0^TK_{h,t}(s)\left(\alpha(\theta_0,s)-\alpha(\hat{\theta}_n,s)\right)^2dsdt \\
\leq&N\sqrt{h}\|w\|_{\infty}\cdot\left\|\theta_0-\hat{\theta}_n\right\|^2\int_0^TL_{\alpha}(s)^2ds.
\end{align*}
Let now $\epsilon,\delta,c^*>0$ be arbitrary, it follows from the above that
\begin{align*}
&\IP\left(N\sqrt{h}\int_0^TI_3(t)^2w(t)dt>\epsilon\right) \\
\leq&\IP\left(\left\|\hat{\theta}_n-\theta_0\right\|>c^*N^{-\frac{1}{2}}\right)+\IP\left(N\sqrt{h}\|w\|_{\infty}\cdot\left(c^*N^{-\frac{1}{2}}\right)^2\|L_{\alpha}\|_2^2>\epsilon\right).
\end{align*}
By assumption, we may choose $c^*>0$ such that $\IP(\|\hat{\theta}_n-\theta_0\|>c^*N^{-1/2})\leq\delta$ for all $n\in\IN$. With this choice we obtain
\begin{align*}
&\IP\left(N\sqrt{h}\int_0^TI_3(t)^2w(t)dt>\epsilon\right)\leq\delta+\IP\left(h^{\frac{1}{4}}\cdot c^*N^{-\frac{1}{2}}>\sqrt{\frac{\epsilon}{N\|w\|_{\infty}\|L_{\alpha}\|_2^2}}\right).
\end{align*}
The probability on the right and side equals zero for $n$ large enough by the assumptions on $L_{\alpha}$ in (C) and since $h\to0$. Since $\epsilon,\delta>0$ where chosen arbitrarily, the proof is complete.
\end{proof}

\begin{proof}[Proof of Lemma \ref{lem:I4}]
We firstly insert $\mu_n$, the limit of $\bar{\Psi}_n$, as follows
\begin{align}
 &\frac{1}{3}N\sqrt{h}\int_0^TI_4(t)^2w(t)dt \nonumber \\
=&\frac{1}{3}N\sqrt{h}\int_0^T\left(\int_0^TK_{h,t}(s)\frac{1}{mp_n(s)}\left(\frac{1}{\frac{1}{mp_n(s)}\bar{\Psi}_n(s;\tilde{\beta}_n)}-\frac{1}{\frac{1}{mp_n(s)}\bar{\Psi}_n(s;\beta_0)}\right)dN_n(s)\right)^2w(t)dt \nonumber \\
\leq&        N\sqrt{h}\int_0^T\left(\int_0^TK_{h,t}(s)\frac{1}{mp_n(s)}\left(\frac{1}{\frac{1}{mp_n(s)}\bar{\Psi}_n(s;\tilde{\beta}_n)}-\frac{1}{\mu_n(s;\tilde{\beta}_n)}\right)dN_n(s)\right)^2w(t)dt \label{eq:step1} \\
&           +N\sqrt{h}\int_0^T\left(\int_0^TK_{h,t}(s)\frac{1}{mp_n(s)}\left(\frac{1}{\mu_n(s;\tilde{\beta}_n)}-\frac{1}{\mu_n(s;\beta_0)}\right)dN_n(s)\right)^2w(t)dt \label{eq:step2} \\
&           +N\sqrt{h}\int_0^T\left(\int_0^TK_{h,t}(s)\frac{1}{mp_n(s)}\left(\frac{1}{\frac{1}{mp_n(s)}\bar{\Psi}_n(s;\beta_0)}-\frac{1}{\mu_n(s;\beta_0)}\right)dN_n(s)\right)^2w(t)dt. \label{eq:step3}
\end{align}
The arguments for \eqref{eq:step1} and \eqref{eq:step3} are identical. Therefore, we only detail \eqref{eq:step1}. Recall to this end for $c>0$ the event
\begin{align*}
&B_n(c):=\left\{\left\|\tilde{\beta}_n-\beta\right\|\leq\frac{c}{\sqrt{N}}\right\}.
\end{align*}
Let $\delta>0$ be arbitrary. By the assumptions on $\tilde{\beta}_n$ and $\mathcal{A}(c_1,c_2)$ we can choose $c_1,c_2$ so large such that $\IP(\mathcal{A}_n(c_1,c_2)\cap\mathcal{B}_n(c_2))\geq1-\delta$ for all $n\in\IN$. We keep the choice of $\delta,c_1,c_2$ fixed for the remainder of the proof. Then, we obtain for any $\gamma>0$ (use the Cauchy-Schwarz Inequality in the last line)
\begin{align}
&\IP(\eqref{eq:step1}>\gamma)\leq\IP(\eqref{eq:step1}>\gamma,\mathcal{A}_n(c_1,c_2)\cap B_n(c_2))+\IP\left(\left(\mathcal{A}_n(c_1,c_2)\cap B_n(c_2)\right)^c\right) \nonumber \\
\leq&\IP\left(N\sqrt{h}\int_0^T\left(\int_0^TK_{h,t}(s)\frac{c_1\sqrt{\log m}}{(mp_n(s))^{\frac{3}{2}}}dN_n(s)\right)^2w(t)dt>\gamma\right)+\delta \nonumber \\
\leq&\IP\Bigg(\int_0^T\Bigg[2N\sqrt{h}\left(\int_0^TK_{h,t}(s)\frac{\sqrt{\log m}}{(mp_n(s))^{\frac{3}{2}}}dM_n(s)\right)^2 
\nonumber \\
&\quad\quad\quad+2N\sqrt{h}\left(\int_0^TK_{h,t}(s)\frac{\sqrt{\log m}}{(mp_n(s))^{\frac{3}{2}}}\alpha_0(s)\bar{\Psi}_n(s;\beta_0)ds\right)^2\Bigg]w(t)dt>\frac{\gamma}{c_1^2}\Bigg)+\delta \nonumber \\
\leq&\frac{c_1^2}{\gamma}\int_0^T2N\sqrt{h}\int_0^TK_{h,t}(s)\frac{\log m}{mp_n(s)}\alpha_0(s) \nonumber \\
&\quad\quad\times\left[\frac{K_{h,t}(s)}{mp_n(s)}\IE\left(\frac{\bar{\Psi}_n(s;\beta_0)}{mp_n(s)}\right)+\IE\left(\left(\frac{\bar{\Psi}_n(s;\beta_0)}{mp_n(s)}\right)^2\right)\alpha_0(s)\right]w(t)dsdt+\delta. \label{eq:breakpoint}
\end{align}
By Assumption (B) there is a constant $c^*\in(0,\infty)$ such that $\IE\left(\bar{\Psi}_n(s;\beta_0)/mp_n(s)\right)<c^*$ and by Assumption \eqref{eq:wc8} together with Assumption (SP) (after possibly increasing $c^*$) we also have $\IE\left(\left(\bar{\Psi}_n(s;\beta_0)/mp_n(s)\right)^2\right)<c^*$. Then,
\begin{align*}
\eqref{eq:breakpoint}\leq \frac{2c_1^2}{\gamma}N\sqrt{h}\frac{\log m\cdot\|\alpha_0\|_{\infty}\|w\|_{\infty}}{mp_n}\left[\frac{T\|K\|_{\infty}c^*}{mp_nh}+Tc^*\|\alpha_0\|_{\infty}\right]+\delta\to\delta
\end{align*}
since $N/mp_n=O(1)$ by (SP) and, by Assumption (KBW), $\sqrt{h}\log m\to0$ and $hmp_n\to\infty$ as well as $\|\alpha_0\|_{\infty}, \|K\|_{\infty}, \|w\|_{\infty}<\infty$. Since $\delta,\gamma>0$ were chosen arbitrarily, we have shown that $\eqref{eq:step1}=o_P(1)$. The statement $\eqref{eq:step3}=o_P(1)$ can be shown along the same lines: The only change is that one does not need to condition on the event $B_n(c_2)$ but may choose $c_2=0$ instead.

We have left to prove that $\eqref{eq:step2}=o_P(1)$. Let to this end $\delta,\gamma>0$ be arbitrary and choose $c_2>0$ such that $\IP(B_n(c_2))\geq1-\delta$. By Lemma \ref{lem:contXbar} we get
\begin{align}
&\IP(\eqref{eq:step2}>\gamma)\leq \IP(\eqref{eq:step2}>\gamma,B_n(c_2))+\delta \nonumber \\
\leq& \IP\left(N\sqrt{h}\int_0^T\left(\int_0^TK_{h,t}(s)\cdot\frac{L_{\Psi}c_2N^{-\frac{1}{2}}}{mp_n(s)\cdot\inf_{\beta\in B_n(c_2)}\mu_n(s;\beta)^2}dN_n(s)\right)^2w(t)dt>\gamma\right)+\delta \nonumber \\
\leq&\IP\Bigg(2N\sqrt{h}\left(c_2N^{-\frac{1}{2}}\right)^2\int_0^T\Bigg[\left(\int_0^TK_{h,t}(s)\frac{L_{\Psi}}{mp_n(s)\cdot\inf_{\beta\in B(c_2)}\mu_n(s;\beta)^2}dM_n(s)\right)^2 
\nonumber \\
&\quad\quad\quad+\left(\int_0^TK_{h,t}(s)\frac{L_{\Psi}}{mp_n(s)\cdot\inf_{\beta\in B_n(c_2)}\mu_n(s;\beta)^2}\alpha_0(s)\bar{\Psi}_n(s;\beta_0)ds\right)^2\Bigg]w(t)dt>\gamma\Bigg)+\delta \nonumber \\
\leq&\frac{2}{\gamma}N\sqrt{h}\left(c_2N^{-\frac{1}{2}}\right)^2\int_0^T\int_0^TK_{h,t}(s)\alpha_0(s)\left(\frac{L_{\Psi}}{\inf_{\beta\in B_n(c_2)}\mu_n(s;\beta)^2}\right)^2 \nonumber \\
&\quad\quad\times\Bigg[\frac{K_{h,t}(s)}{mp_n(s)}\IE\left(\frac{\bar{\Psi}_n(s;\beta_0)}{mp_n(s)}\right)+\IE\left(\left(\frac{\bar{\Psi}_n(s;\beta_0)}{mp_n(s)}\right)^2\alpha_0(s)\right)\Bigg]w(t)dsdt+\delta. \label{eq:breakpoint2}
\end{align}
Similarly as before, we may bound the expectations by $c^*$ by assumption. We get
\begin{align*}
\eqref{eq:breakpoint2}\leq&\frac{2\|\alpha_0\|_{\infty}\|w\|_{\infty}}{\gamma}N\sqrt{h}\left(c_2N^{-\frac{1}{2}}\right)^2 \\
&\quad\quad\quad\times\int_0^T\left(\frac{L_{\Psi}}{\inf_{\beta\in B_n(c_2)}\mu_n(s;\beta)^2}\right)^2ds\cdot\left[\frac{\|K\|_{\infty}}{hmp_n}+c^*\|\alpha_0\|_{\infty}\right]+\delta \\
\to&\delta
\end{align*}
because, by Assumption (KBW), $h\to0$, $hmp_n\to\infty$ and since
$$\int_0^T\left(\frac{L_{\Psi}}{\inf_{\beta\in B_n(c_2)}\mu_n(s;\beta)^2}\right)^2ds=O(1)$$
together with the finiteness of all infinity norms. This completes the proof.
\end{proof}

\begin{proof}[Proof of Lemma \ref{lem:I12}]
We have
\begin{align*}
&N\sqrt{h}\int_0^TI_1(t)I_2(t)w(t)dt=N\sqrt{h}c_n\int_0^T\int_0^T\int_0^TK_{h,t}(s)K_{h,t}(r)\frac{\Delta_n(r)w(t)}{\bar{\Psi}_n(s;\beta_0)}dtdrdM_n(s)
\end{align*}
Since the integrand is predictable, the stochastic integral is a martingale. Therefore, we may use Lenglart's Inequality (cf. (2.5.18) in \citet{ABGK93}) to argue that it is enough to prove that the quadratic variation of the process converges to zero. Suppose to this end that $\delta>0$ is arbitrary and that $c>0$ is chosen such that $\IP(\mathcal{A}_n(c,0))\geq1-\delta$. Let $\xi_n(s)=\sqrt{\log m/mp_n(s)}$. On the set $\mathcal{A}_n(c,0)$ it holds for all $s\in[0,T]$ that
$$\left|\left(\frac{1}{mp_n(s)}\bar{\Psi}_n(s;\beta_0)\right)^{-1}-\mu_n(s;\beta_0)^{-1}\right|\leq\xi_n(s).$$
Then we obtain for the quadratic variation and for any $\gamma>0$ (replace also the definition of $c_n$ and use the Cauchy-Schwarz Inequality together with $\int_0^T\int_0^TK_{h,t}(s)K_{h,t}(r)dtdr\leq1$)
\begin{align*}
&\IP\left(N^2hc_n^2\int_0^T\left(\int_0^T\int_0^TK_{h,t}(s)K_{h,t}(r)\frac{\Delta_n(r)w(t)}{\bar{\Psi}_n(s;\beta_0)}dtdr\right)^2\alpha_0(s)\bar{\Psi}_n(s;\beta_0)ds>\gamma\right) \\
\leq&\IP\left(N\sqrt{h}\int_0^T\left(\int_0^T\int_0^TK_{h,t}(s)K_{h,t}(r)\frac{\Delta_n(r)w(t)}{\bar{\Psi}_n(s;\beta_0)}dtdr\right)^2\alpha_0(s)\bar{\Psi}_n(s;\beta_0)ds>\gamma\right) \\
\leq&\IP\left(N\sqrt{h}\int_0^T\int_0^T\int_0^TK_{h,t}(s)K_{h,t}(r)\frac{\Delta_n(r)^2w(t)^2\alpha_0(s)}{\bar{\Psi}_n(s;\beta_0)}dtdrds>\gamma,\mathcal{A}_n(c,0)\right)+\delta \\
\leq&\IP\Bigg(N\sqrt{h}\frac{T\|\Delta_n\|_{\infty}^2\|w\|_{\infty}^2\|\alpha_0\|_{\infty}}{mp_n}\left(\left\|\mu_n(\cdot;\beta_0)^{-1}+\xi_n(\cdot)\right\|_{\infty}\right)>\gamma\Bigg)+\delta.
\end{align*}
Now we see that the expression in the probability is not random and hence the probability equals zero if the inequality inside is violated. This is the case for large $n$ since $N/mp_n=O(1)$ by (SP), $h\to0$, $\log m/mp_n\to0$ and the boundedness assumptions on $\|\Delta_n\|_{\infty}$, $\|\alpha_0\|$ and $\left\|\mu_n(\cdot;\beta_0)^{-1}\right\|_{\infty}$. Thus, we have that the quadratic variation of $N\sqrt{h}\int_0^TI_1(t)I_2(t)w(t)dt$ converges to zero in probability and thus also the martingale itself.
\end{proof}

\begin{proof}[Proof of Lemma \ref{lem:I13}]
Let $\delta>0$ be arbitrary and let $c_1,c_2$ be such that
$$\IP\left(\left\|\hat{\theta}_n-\theta_0\right\|\leq c_2N^{-\frac{1}{2}},\,\mathcal{A}_n(c_1,c_2)\right)\geq1-\delta.$$
This is possible by (P) in conjunction with (LL). By the independence of $\hat{\theta}_n$ from the remaining random quantities as in Assumption (P), we have that
\begin{align*}
&N\sqrt{h}\int_0^TI_1(t)I_3(t)w(t)dt \\
=&N\sqrt{h}\int_0^T\int_0^T\int_0^TK_{h,t}(r)K_{h,t}(s)\left(\alpha_0(\theta_0,r)-\alpha(\hat{\theta}_n,r)\right)\frac{w(t)}{\bar{\Psi}_n(s;\beta_0)}drdtdM_n(s)
\end{align*}
is a martingale. As a consequence we can prove the convergence to zero by applying Lenglart's Inequality (cf. (2.5.18) in \citet{ABGK93}) and prove that the quadratic variation converges to zero in probability. We do this in the following. Let $\gamma>0$ be arbitrary and recall the notation $\xi_n(s)=\sqrt{\log m/mp_n(s)}$. By using the same arguments as in the end of the proof of Lemma \ref{lem:I12} we obtain
\begin{align*}
&\IP\left(N^2h\int_0^T\Bigg(\int_0^T\int_0^TK_{h,t}(r)K_{h,t}(s)\left(\alpha_0(\theta_0,r)-\alpha(\hat{\theta}_n,r)\right)\frac{w(t)}{\bar{\Psi}_n(s;\beta_0)}drdt\right)^2 \\
&\quad\quad\quad\quad\quad\quad\quad\quad\quad\quad\times\alpha_0(s)\bar{\Psi}_n(s;\beta_0)ds>\gamma\Bigg) \\
\leq&\IP\left(N^2h\int_0^T\int_0^T\int_0^TK_{h,t}(r)K_{h,t}(s)\left(\alpha_0(\theta_0,r)-\alpha(\hat{\theta}_n,r)\right)^2w(t)^2\frac{\alpha_0(s)}{\bar{\Psi}_n(s;\beta_0)}drdtds>\gamma\right) \\
\leq&\IP\Bigg(N^2h\int_{[0,T]^3}\frac{K_{h,t}(r)K_{h,t}(s)L_{\alpha}(r)^2\left(c_2N^{-\frac{1}{2}}\right)^2w(t)^2\alpha_0(s)}{mp_n(s)} \\
&\quad\quad\quad\quad\quad\times\left(\mu_n(s;\beta_0)^{-1}+\xi_n(s)\right)drdtds>\gamma\Bigg)+\delta \\
\leq&\IP\Bigg(h\cdot N^2\left(c_2N^{-\frac{1}{2}}\right)^2\cdot\frac{\|w\|_{\infty}^2\|\alpha_0\|_{\infty}}{mp_n}\cdot\left\|\mu_n(\cdot;\beta_0)^{-1}+\xi_n(\cdot)\right\|_{\infty}\|L_{\alpha}\|_2^2>\gamma\Bigg)+\delta.
\end{align*}
The probability equals zero for $n$ large enough because by Assumptions (B) and (C) all infinity norms are bounded, by (KBW) $h\to0$, by (SP) $N/mp_n=O(1)$ and by (C) $\|L_{\alpha}\|_2<\infty$. Since $\delta>0$ was chosen arbitrarily, we conclude that the quadratic variation process converges to zero in probability and the proof is complete.
\end{proof}

\begin{proof}[Proof of Lemma \ref{lem:I14}]
We firstly write
\begin{align}
&N\sqrt{h}\int_0^TI_1(t)I_4(t)w(t)dt \nonumber \\
=&N\sqrt{h}\int_0^T\int_0^T\int_0^TK_{h,t}(s)K_{h,t}(r)\frac{\bar{\Psi}_n(r;\beta_0)-\bar{\Psi}_n(r;\tilde{\beta}_n)}{\bar{\Psi}_n(r;\beta_0)\bar{\Psi}_n(r;\tilde{\beta}_n)\bar{\Psi}_n(s;\beta_0)}w(t)dN_n(r)dtdM_n(s) \nonumber \\
=&N\sqrt{h}\int_0^T\int_0^T\int_0^TK_{h,t}(s)K_{h,t}(r)\frac{\bar{\Psi}_n(r;\beta_0)-\bar{\Psi}_n(r;\tilde{\beta}_n)}{\bar{\Psi}_n(r;\beta_0)\bar{\Psi}_n(r;\tilde{\beta}_n)\bar{\Psi}_n(s;\beta_0)}w(t)dM_n(r)dtdM_n(s) \label{eq:r1} \\
&+N\sqrt{h}\int_0^T\int_0^T\int_0^TK_{h,t}(s)K_{h,t}(r)\frac{\bar{\Psi}_n(r;\beta_0)-\bar{\Psi}_n(r;\tilde{\beta}_n)}{\bar{\Psi}_n(r;\tilde{\beta}_n)\bar{\Psi}_n(s;\beta_0)}\alpha_0(r)w(t)drdtdM_n(s). \label{eq:r2}
\end{align}
We begin with \eqref{eq:r2}. We write
\begin{align}
&\eqref{eq:r2} \nonumber \\
=&N\sqrt{h}\int_0^T\int_0^s\int_0^TK_{h,t}(s)K_{h,t}(r)\frac{\bar{\Psi}_n(r;\beta_0)-\bar{\Psi}_n(r;\tilde{\beta}_n)}{\bar{\Psi}_n(r;\tilde{\beta}_n)\bar{\Psi}_n(s;\beta_0)}\alpha_0(r)w(t)dtdrdM_n(s) \label{eq:r21} \\
&+N\sqrt{h}\int_0^T\int_0^r\int_0^TK_{h,t}(s)K_{h,t}(r)\frac{\bar{\Psi}_n(r;\beta_0)-\bar{\Psi}_n(r;\tilde{\beta}_n)}{\bar{\Psi}_n(r;\tilde{\beta}_n)\bar{\Psi}_n(s;\beta_0)}\alpha_0(r)w(t)dtdM_n(s)dr. \label{eq:r22}
\end{align}
We see that \eqref{eq:r21} is a martingale because the integrand is predictable because by (P) $\tilde{\beta}_n$ is obtained independently of the martingales. Thus, by Lenglart's Inequality (cf. (2.5.18) in \citet{ABGK93}) it is sufficient to prove that the quadratic variation converges to zero. We do this in the following. Let as usual $\delta,\gamma>0$ and choose $c_1,c_2>0$ such that the probability of the event $\|\tilde{\beta}_n-\beta_0\|\leq c_2N^{-1/2}\cap\mathcal{A}_n(c_1,c_2)\cap\tilde{\mathcal{A}}_n(c_1,c_2)$ is at least $1-\delta$ (this possible by the Assumptions (P) and (LL)). On this event, we have by Lemma \ref{lem:contXbar}
\begin{align*}
&\left|\left(\frac{1}{mp_n(r)}\bar{\Psi}_n(r;\tilde{\beta}_n)\right)^{-1}-\mu_n(r,\tilde{\beta}_n)^{-1}\right|\leq c_1\sqrt{\frac{\log m}{mp_n(r)}}, \\
&\left|\frac{1}{mp_n(r)}\bar{\Psi}_n(r;\beta_0)-\mu_n(r,\beta_0)\right|\leq c_1\sqrt{\frac{\log m}{mp_n(r)}}, \\
&\left|\frac{\bar{\Psi}_n(r;\beta_0)-\bar{\Psi}_n(r;\tilde{\beta}_n)}{\bar{\Psi}_n(r;\tilde{\beta}_n)\bar{\Psi}_n(r;\beta_0)}\right|\leq \frac{1}{mp_n}\left(2c_1\sqrt{\frac{\log m}{mp_n}}+\eta_0(c_2)^{-2}L_{\Psi}c_2N^{-\frac{1}{2}}\right),
\end{align*}
where $\eta_0:=\inf_{r\in[0,T],\beta\in K(\beta_0)}\mu_n(r;\beta)$. Then we obtain for the quadratic variation of \eqref{eq:r21}
\begin{align*}
&\IP\Bigg(N^2h\int_0^T\left(\int_0^s\int_0^TK_{h,t}(s)K_{h,t}(r)\frac{\bar{\Psi}_n(r;\beta_0)-\bar{\Psi}_n(r;\tilde{\beta}_n)}{\bar{\Psi}_n(r;\tilde{\beta}_n)\bar{\Psi}_n(s;\beta_0)}\alpha_0(r)w(t)dtdr\right)^2 \\
&\quad\quad\quad\quad\quad\quad\quad\quad\quad\quad\times\alpha_0(s)\bar{\Psi}_n(s;\beta_0)ds>\gamma\Bigg) \\
\leq&\IP\Bigg(N^2h\int_0^T\int_0^s\int_0^TK_{h,t}(s)K_{h,t}(r)\left(\frac{\bar{\Psi}_n(r;\beta_0)-\bar{\Psi}_n(r;\tilde{\beta}_n)}{\bar{\Psi}_n(r;\tilde{\beta}_n)\bar{\Psi}_n(s;\beta_0)}\right)^2\alpha_0(r)^2w(t)^2 \\
&\quad\quad\quad\quad\quad\quad\quad\quad\quad\quad\times\alpha_0(s)\bar{\Psi}_n(s;\beta_0)dtdrds>\gamma\Bigg) \\
\leq&\IP\Bigg(N^2h\|\alpha_0\|_{\infty}^3\|w\|_{\infty}^2\int_0^T\frac{1}{mp_n}\left(2c_1\sqrt{\frac{\log m}{mp_n}}+\eta_0^{-2}L_{\Psi}c_2N^{-\frac{1}{2}}\right)^2dr \\
&\quad\quad\times\left(\left\|\mu_n(\cdot;\beta_0)\right\|_{\infty}+c_1\sqrt{\frac{\log m}{mp_n}}\right)>\gamma\Bigg)+\delta
\end{align*}
The probability above equals zero for $n$ large enough because, by Assumptions (C), (B) and (KBW) all norms and $\eta_0$ are bounded, by Assumption (SP) $N/mp_n=O(1)$, and by (KBW) $h\log m\to0$, $\log m/mp_n\to0$. So we have $\eqref{eq:r21}=o_P(1)$ because $\delta,\gamma>0$ where chosen arbitrarily. The proof of $\eqref{eq:r22}=o_P(1)$ requires momentary-$m$-dependence and is given in Lemma \ref{lem:r22}.

Let us now turn to \eqref{eq:r1}. When handling \eqref{eq:b3} earlier, we defined functions $a_n(t,s)$ and $b_n(t,s)$. We will show that \eqref{eq:r1} and \eqref{eq:b3} have a similar structure. We redefine therefore the functions $a_n(t,s)$ and $b_n(t,s)$ as follows
\begin{align*}
a_n(t,s):=K_{h,t}(s)\bar{\Psi}_n(s;\beta_0)^{-1}\textrm{ and }b_n(t,r):=K_{h,t}(r)\left(\bar{\Psi}_n(r;\tilde{\beta}_n)^{-1}-\bar{\Psi}_n(r;\beta_0)^{-1}\right).
\end{align*}
With these new definitions we obtain
$$2\eqref{eq:r1}=2N\sqrt{h}\int_0^T\int_0^T\int_0^Ta_n(t,s)b_n(t,r)w(t)dM_n(r)dtdM_n(s) =\eqref{eq:s1}+\eqref{eq:s2}+\eqref{eq:s3}$$
because for this step no conditions on $a_n(t,s)$ and $b_n(t,r)$ are required. Thus in the following we will show \eqref{eq:s1}-\eqref{eq:s3} converge to zero in probability for the new definitions of $a_n$ and $b_n$. For handling \eqref{eq:s1} we required that \eqref{eq:s1} is a martingale in $T$. This is the case also with the new definitions of $a_n$ and $b_n$ because $\tilde{\beta}_n$ is independent of the counting processes by (P). When showing that $\eqref{eq:s1}=o_P(1)$ we conditioned on the set $\mathcal{A}(c,0)\cap\tilde{\mathcal{A}}_n(c,0)$. We will now condition on $\mathcal{A}(c_1,c_2)\cap\tilde{\mathcal{A}}_n(c_1,c_2)\cap\{\|\tilde{\beta}_n-\beta_0\|\leq c_2N^{-1/2}\}$ because its probability is larger than $1-\delta$ for any $\delta>0$ after choosing $c_1,c_2>0$ appropriately by (P) and (LL). By following the same arguments as for \eqref{eq:s1}, it is enough to show that for every $c_1,c_2,\gamma>0$ we have that
\begin{align}
&\IP\Bigg(N^2h\int_0^T\left(\int_0^{r-}\int_0^Ta_n(t,s)b_n(t,r)w(t)dtdM_n(s)\right)^2\alpha_0(r)\bar{\Psi}_n(r;\beta_0)dr>\gamma, \nonumber \\
&\quad\quad\quad\quad\mathcal{A}(c_1,c_2),\tilde{\mathcal{A}}_n(c_1,c_2),\{\|\tilde{\beta}_n-\beta_0\|\leq c_2N^{-1/2}\}\Bigg)\to0. \label{eq:eq1}
\end{align}
Note that on $\mathcal{A}(c_1,c_2)\cap\tilde{\mathcal{A}}_n(c_1,c_2)\cap\{\|\tilde{\beta}_n-\beta_0\|\leq c_2N^{-1/2}\}$ we have by Lemma \ref{lem:contXbar}
\begin{align*}
a_n(t,s)\leq&K_{h,t}(s)\frac{1}{mp_n(s)}\left(\mu_n(s;\beta_0)^{-1}+c_1\sqrt{\frac{\log m}{mp_n(s)}}\right)=:\frac{\bar{a}_n(t,s)}{mp_n(s)}, \\
b_n(t,s)\leq&K_{h,t}(s)\frac{1}{mp_n(s)}\left(2c_1\sqrt{\frac{\log m}{mp_n(s)}}+\eta_0^{-2}L_{\Psi}c_2N^{-\frac{1}{2}}\right)=:\frac{\bar{b}_n(t,s)}{mp_n(s)}.
\end{align*}
Recall the notation $\xi_n(s)=c_1\sqrt{\log m/mp_n(s)}$ and let $C>0$ be such that $\xi_n(s)\leq C$. The boundedness of $\|\mu(\cdot;\beta)^{-1}\|_{\infty}$ by Assumption (B) allows to increase $C>0$ such that $\bar{a}_n(t,s)\leq CK_{h,t}(s)$. We may hence bound the probability in \eqref{eq:eq1} by very similar arguments as for \eqref{eq:s1}. We give the main steps below (recall the notation $\xi_n(s)=c_1\sqrt{\log m/mp_n(s)}$ and let $C>0$ be such that $\xi_n(s)\leq C$)
\begin{align*}
&\IP\Bigg(N^2h\int_0^T\left(\int_0^{r-}\int_0^T\frac{\bar{a}_n(t,s)\bar{b}_n(t,r)}{mp_n(s)}w(t)dtd|M_n|(s)\right)^2\frac{\alpha_0(r)\left(\mu_n(r;\beta_0)+\xi_n(r)\right)}{mp_n(r)}dr>\gamma\Bigg) \\
\leq&\frac{2N^2h\|\alpha_0\|_{\infty}\left(\|\mu_n(\cdot;\beta_0)\|_{\infty}+C\right)}{\gamma mp_n} \\
&\quad\times\Bigg[2\int_0^T\int_0^r\IE\Bigg(\left(\int_0^T\frac{\bar{a}_n(t,s)\bar{b}_n(t,r)}{mp_n(s)}w(t)dt\right)^2\alpha_0(s)\bar{\Psi}_n(s;\beta_0)\Bigg)dsdr \\
&\quad\quad+3\int_0^T\IE\Bigg(\left(\int_0^{r-}\int_0^T\frac{\bar{a}_n(t,s)\bar{b}_n(t,r)}{mp_n(s)}w(t)dt\alpha_0(s)\bar{\Psi}_n(s;\beta_0)ds\right)^2\Bigg)dr\Bigg] \\
\leq&\frac{N^2h}{mp_n}\cdot\frac{4C^2\|\alpha_0\|_{\infty}^2\|w\|_{\infty}^2\left(\|\mu_n(\cdot;\beta_0)\|_{\infty}+C\right)}{\gamma} \\
&\quad\times\Bigg[\frac{4\|K\|_{\infty}^2\|\Psi(\cdot;\beta_0)\|_{\infty}}{hmp_n}\left(4Tc_1^2\frac{\log m}{mp_n}+L_{\Psi}^2\eta_0^{-4}\left(c_2N^{-\frac{1}{2}}\right)^2\right) \\
&+\quad\quad3\|\alpha_0\|_{\infty}\sup_{s\in[0,T]}\IE\left(\left(\frac{\bar{\Psi}_n(s;\beta_0)}{mp_n(s)}\right)^2\right)\left(4Tc_1^2\frac{\log m}{mp_n}+L_{\Psi}^2\eta_0^{-4}\left(c_2N^{-\frac{1}{2}}\right)^2\right)\Bigg].
\end{align*}
The above converges to zero because, by (SP) $N/mp_n=O(1)$, by Assumption \eqref{eq:wc8} $\sup_{s\in[0,T]}\IE\left(\bar{\Psi}_n(s;\beta_0)^2/m^2p_n(s)^2\right)=O(1)$, by (KBW) $h\log m\to0$ and by (C), (B) and (KBW) all norms and $\eta_0^{-1}$ are bounded. Thus we have shown that \eqref{eq:s1} converges to zero in probability with the new definitions of $a_n,b_n$. When handling \eqref{eq:s2} we argued that rearranging the integrals yields an expression similar to \eqref{eq:s1}. We can apply the same argument here and show that \eqref{eq:s2} converges to zero in the same way as above. Therefore we skip the details here. Finally, we turn to \eqref{eq:s3}. For the new choices of $a_n,b_n$ we may apply the same arguments as for the old definitions of $a_n,b_n$. We have to show that for any $c_1,c_2,\gamma>0$
\begin{align*}
&\IP\Bigg(N\sqrt{h}\int_0^T\int_0^T\frac{\bar{a}_n(t,r)\bar{b}_n(t,r)}{m^2p_n(r)^2}w(t)dtdN_n(r)>\gamma, \\
&\quad\quad\quad\quad\mathcal{A}(c_1,c_2),\tilde{\mathcal{A}}_n(c_1,c_2),\{\|\tilde{\beta}_n-\beta_0\|\leq c_2N^{-1/2}\}\Bigg)\to0.
\end{align*}
The probability above can be bounded as follows:
\begin{align*}
&\IP\left(N\sqrt{h}\int_0^T\int_0^T \frac{CK_{h,t}(r)^2\left(2c_1\sqrt{\frac{\log m}{mp_n(r)}}+\eta_0^{-2}L_{\Psi}c_2N^{-\frac{1}{2}}\right)
w(t)}{m^2p_n(r)^2}dtdN_n(r)>\gamma\right) \\
=&\frac{N\sqrt{h}}{\gamma}\int_0^T\int_0^T \frac{CK_{h,t}(r)^2\left(2c_1\sqrt{\frac{\log m}{mp_n(r)}}+\eta_0^{-2}L_{\Psi}c_2N^{-\frac{1}{2}}\right)
w(t)\IE\left(\alpha_0(r)\bar{\Psi}_n(r;\beta_0)\right)}{m^2p_n(r)^2}dtdr \\
\leq&\frac{CN\|w\|_{\infty}\|\alpha_0\|_{\infty}\|\Psi(\cdot;\beta_0)\|_{\infty}\|K\|_{\infty}}{\gamma\sqrt{h}mp_n}\left(2Tc_1\sqrt{\frac{\log m}{mp_n}}+\eta_0^{-2}L_{\Psi}c_2N^{-\frac{1}{2}}\right).
\end{align*}
The above converges to zero because, by (C), (B), (KBW) all norms and $\eta_0^{-1}$ are bounded, by (SP) $N/mp_n=O(1)$, by (KBW) $\log m/hmp_n\to0$ and $Nh\to\infty$. This completes the proof.
\end{proof}

The following result uses momentary-$m$-dependence in a similar way as in Proposition \ref{prop:mmd}.
\begin{lemma}
\label{lem:r22}
Let (B), (P), (KBW), (LL), (C), (SP), (mDep) and \eqref{eq:wc1}, \eqref{eq:wc1+1}, \eqref{eq:assump8}, \eqref{eq:assump9} hold. Then,
$$N\sqrt{h}\int_0^T\int_0^r\int_0^TK_{h,t}(s)K_{h,t}(r)\frac{\bar{\Psi}_n(r;\beta_0)-\bar{\Psi}_n(r;\tilde{\beta}_n)}{\bar{\Psi}_n(r;\tilde{\beta}_n)\bar{\Psi}_n(s;\beta_0)}\alpha_0(r)w(t)dtdM_n(s)dr=o_P(1).$$
\end{lemma}
\begin{proof}
Recall that $f_n(r,s):=\int_0^ThK_{h,t}(s)K_{h,t}(r)w(t)dt$. Then, we may rewrite
\begin{align}
&N\sqrt{h}\int_0^T\int_0^r\int_0^TK_{h,t}(s)K_{h,t}(r)\frac{\bar{\Psi}_n(r;\beta_0)-\bar{\Psi}_n(r;\tilde{\beta}_n)}{\bar{\Psi}_n(r;\tilde{\beta}_n)\bar{\Psi}_n(s;\beta_0)}\alpha_0(r)w(t)dtdM_n(s)dr \nonumber \\
=&Nh^{-\frac{1}{2}}\int_0^T\int_0^rf_n(r,s)\frac{\bar{\Psi}_n(r;\beta_0)-\bar{\Psi}_n(r;\tilde{\beta}_n)}{\bar{\Psi}_n(r;\tilde{\beta}_n)\bar{\Psi}_n(s;\beta_0)}\alpha_0(r)dM_n(s)dr \nonumber \\
=&Nh^{-\frac{1}{2}}\int_0^T\int_0^rf_n(r,s)\frac{1}{mp_n(s)}\left(\frac{1}{mp_n(r)}\bar{\Psi}_n(r;\beta_0)-\frac{1}{mp_n(r)}\bar{\Psi}_n(r;\tilde{\beta}_n)\right) \nonumber \\
&\times\left(\left(\frac{1}{mp_n(r)}\bar{\Psi}_n(r;\tilde{\beta}_n)\frac{1}{mp_n(s)}\bar{\Psi}_n(s;\beta_0)\right)^{-1}-\mu_n(r,\tilde{\beta}_n)^{-1}\mu_n(s;\beta_0)^{-1}\right)\alpha_0(r)dM_n(s)dr \label{eq:r221} \\
&+Nh^{-\frac{1}{2}}\int_0^T\int_0^rf_n(r,s)\frac{1}{mp_n(s)}\frac{\frac{1}{mp_n(r)}\bar{\Psi}_n(r;\beta_0)-\frac{1}{mp_n(r)}\bar{\Psi}_n(r;\tilde{\beta}_n)}{\mu_n(r,\tilde{\beta}_n)\mu_n(s;\beta_0)}\alpha_0(r)dM_n(s)dr. \label{eq:r222}
\end{align}
Suppose that we are on the event $\|\tilde{\beta}_n-\beta\|\leq c_2N^{-1/2}$ intersected with $\mathcal{A}_n(c_1,c_2)\cap\tilde{\mathcal{A}}_n(c_1,c_2)$ for $c_1\geq1$ and $c_2>0$ arbitrary. Let also $\eta_0:=\inf_{n\in\IN}\inf_{t\in[0,T]}\inf_{\beta\in K(\beta_0)}\mu_n(t;\beta_0)>0$, by Assumption (B), and choose $n$ so large such that $\sqrt{\log m/mp_n}\leq \min(1,1/\eta_0)$ (possible because $\log m/mp_n\to0$ by (KBW) and (SP)). For the remainder of the proof, $n$ will always be so large such that this relation holds (recall that by assumption $\tilde{\beta}_n\in K(\beta_0)$ with probability equal to one by (P)). Then, by Lemma \ref{lem:contXbar},
\begin{align*}
&\sup_{r\in[0,T]}\left|\frac{1}{mp_n(r)}\bar{\Psi}_n(r;\beta_0)-\frac{1}{mp_n(r)}\bar{\Psi}_n(r;\tilde{\beta}_n)\right|\leq 2c_1\sqrt{\frac{\log m}{mp_n}}+L_{\Psi}c_2N^{-1/2}, \\
&\sup_{r\in[0,T]}\sup_{\beta\in B_n(c_2)}\left(\frac{1}{mp_n(r)}\bar{\Psi}_n(r;\beta)\right)^{-1}\leq c_1\sqrt{\frac{\log m}{mp_n}}+\sup_{r\in[0,T],\beta\in B_n(c_2)}\mu_n(r;\beta)^{-1}\leq\frac{2}{\eta_0}.
\end{align*}
From the last two inequalities we obtain in particular that there is a constant $c^*>0$ (independent of $c_1,c_2$) such that
\begin{align*}
&\left|\left(\frac{1}{mp_n(r)}\bar{\Psi}_n(r;\tilde{\beta}_n)\frac{1}{mp_n(s)}\bar{\Psi}_n(s;\beta_0)\right)^{-1}-\mu_n(r,\tilde{\beta}_n)^{-1}\mu_n(s;\beta_0)^{-1}\right| \\
\leq&\frac{c_1c^*}{\eta_0}\sqrt{\frac{\log m}{mp_n}}.
\end{align*}
Let now $\delta>0$ be arbitrary and choose $c_1\geq1$ and $c_2>0$ such that the event $\|\tilde{\beta}_n-\beta\|\leq c_2N^{-1/2}$ intersected with $\mathcal{A}_n(c_1,c_2)\cap\tilde{\mathcal{A}}_n(c_1,c_2)$ has probability at least $1-\delta$ for all $n\in\IN$ (possible because of (P) and (LL)). Then we obtain for every $\gamma>0$ (note that $f_n$ acts like a kernel in the sense that $f_n(r,s)\neq0$ only if $|r-s|\leq 2h$)
\begin{align*}
&\IP\left(\left|\eqref{eq:r221}\right|>\gamma\right) \\
\leq&\IP\Bigg(Nh^{-\frac{1}{2}}\int_0^T\int_0^rf_n(r,s)\frac{\|\alpha_0\|_{\infty}}{mp_n(s)}\left(2c_1\sqrt{\frac{\log m}{mp_n}}+L_{\Psi}c_2N^{-\frac{1}{2}}\right) \nonumber \\
&\times\frac{c_1c^*}{\eta_0}\sqrt{\frac{\log m}{mp_n}}d|M_n|(s)dr>\gamma\Bigg)+\delta \\
\leq&\sum_{i,j\in V_n}\frac{2}{\gamma}\IE\Bigg(Nh^{-\frac{1}{2}}\int_0^T\int_0^rf_n(r,s)\frac{\|\alpha_0\|_{\infty}}{mp_n(s)}\left(2c_1\sqrt{\frac{\log m}{mp_n}}+L_{\Psi}c_2N^{-\frac{1}{2}}\right) \nonumber \\
&\times\frac{c_1c^*}{\eta_0}\sqrt{\frac{\log m}{mp_n}}\alpha_0(s)\Psi(X_{n,ij}(s);\beta_0)C_{n,ij}(s)dsdr\Bigg)+\delta \\
\leq&\frac{2c_1c^*\|w\|_{\infty}\|\alpha_0\|_{\infty}^2\|\Psi(\cdot;\beta_0)\|_{\infty}}{\gamma\eta_0(c_2)}N\sqrt{\frac{h\cdot\log m}{mp_n}}\left(2Tc_1\sqrt{\frac{\log m}{mp_n}}+L_{\Psi}c_2N^{-\frac{1}{2}}\right)+\delta
\end{align*}
which converges to $\delta$ since, by (C), (KBW) all norms are bounded, by (SP) $N/mp_n=O(1)$ and by (KBW) $\sqrt{h}\log m\to0$. Since $\delta,\gamma>0$ were arbitrarily chosen, we have shown that \eqref{eq:r221} converges to zero. In order to handle \eqref{eq:r222} we have to use momentary-$m$-dependence ideas and define to this end for any set $I\subseteq V_n\times V_n$
\begin{align*}
F_n(r,s):=&\frac{\frac{1}{mp_n(r)}\bar{\Psi}_n(r;\beta_0)-\frac{1}{mp_n(r)}\bar{\Psi}_n(r;\tilde{\beta}_n)}{\mu_n(r;\tilde{\beta}_n)\mu_n(s;\beta_0)}, \\
F_n^I(r,s):=&\frac{\frac{1}{mp_n(r)}\bar{\Psi}_n^I(r,s;\beta_0)-\frac{1}{mp_n(r)}\bar{\Psi}_n^I(r,s;\tilde{\beta}_n)}{\mu_n(r;\tilde{\beta}_n)\mu_n(s;\beta_0)}, \\
\bar{\Psi}_n^I(r,s;\beta):=&\sum_{k,l\in V_n}\Psi(X_{n,kl}(r);\beta)C_{n,kl}(r)\Ind\left(d_{s-4h}^n(kl,I)\geq M\right).
\end{align*}
Note that in the definition of $\bar{\Psi}_n^I$, the distance is always indexed by $s-4h$ regardless which value we plug in for $r$. In particular $\bar{\Psi}_n(r;\beta)=\bar{\Psi}_n^{\emptyset}(r,s;\beta)$ for all $s\in[0,T]$. These definitions ensure that
\begin{align*}
\tilde{\phi}_{n,ij}^I(s):=\frac{N}{mp_n(s)\sqrt{h}}\int_s^{s+2h}f_n(r,s)\alpha_0(r)F_n^I(r,s)dr
\end{align*}
is predictable with respect to $\tilde{F}_{ij,t}^{n,I,m}$. Note that the index $ij$ is not necessary, but we keep it here to reproduce the notation from Proposition 2.9 in \citet{K20}. We write moreover $\tilde{\phi}_{n,ij}:=\tilde{\phi}_{n,ij}^{\emptyset}$. In the following we show two bounds similar to Lemma \ref{lem:phi_bound}. We have for a suitable $C_1>0$ that
\begin{align}
&\left|\tilde{\phi}_{n,ij}^I(s)\right| \nonumber \\
\leq&\frac{N\|\alpha_0\|_{\infty}\|K\|_{\infty}\|w\|_{\infty}}{mp_n(s)\sqrt{h}\eta_0^2}\int_s^{s+2h}\frac{1}{mp_n(r)}\sum_{k,l\in V_n}\left|\Psi(X_{n,kl}(r);\beta_0)-\Psi(X_{n,kl}(r);\tilde{\beta}_n)\right|C_{n,kl}(r)dr \nonumber \\
\leq&\frac{2N\|\alpha_0\|_{\infty}\|K\|_{\infty}\|w\|_{\infty}L_{\Psi}\sqrt{h}}{mp_n(s)\eta_0^2}\|\tilde{\beta}_n-\beta_0\|\sup_{r\in[s,s+2h]}\frac{1}{mp_n(r)}\sum_{k,l\in V_n}C_{n,kl}(r) \nonumber \\
\leq&C_1\frac{N\sqrt{h}}{mp_n(s)}\|\tilde{\beta}_n-\beta_0\|\sup_{r\in[s,s+2h]}\frac{1}{mp_n(r)}\sum_{k,l\in V_n}C_{n,kl}(r).\label{eq:phi_tilde_bound1}
\end{align}
Similarly, after possibly increasing $C_1$,
\begin{align}
&\left|\tilde{\phi}_{n,ij}(s)-\tilde{\phi}_{n,ij}^I(s)\right| \nonumber \\
\leq&\frac{N\|\alpha_0\|_{\infty}\|K\|_{\infty}\|w\|_{\infty}}{mp_n(s)\sqrt{h}\eta_0^2}\int_s^{s+2h}\frac{1}{mp_n(r)}\sum_{k,l\in V_n}\left|\Psi(X_{n,kl}(r);\beta_0)-\Psi(X_{n,kl}(r);\tilde{\beta}_n)\right| \nonumber \\
&\quad\quad\quad\quad\quad\times\left|1-\Ind\left(d_{s-4h}^n(kl,I)\geq m\right)\right|\cdot C_{n,kl}(r)dr \nonumber \\
\leq&\frac{2N\|\alpha_0\|_{\infty}\|K\|_{\infty}\|w\|_{\infty}L_{\Psi}\sqrt{h}}{m^2p_n(s)p_n\eta_0^2}\|\tilde{\beta}_n-\beta_0\||I|B_n^I \nonumber \\
\leq&C_1\frac{N\sqrt{h}}{m^2p_n(s)p_n}\|\tilde{\beta}_n-\beta_0\||I|B_n^I. \label{eq:phi_tilde_bound2}
\end{align}
With these inequalities and since
$$\eqref{eq:r222}=\sum_{i,j\in V_n}\int_0^T\tilde{\phi}_{n,ij}(s)dM_{n,ij}(s)$$
we may apply Lemma 2.9 from \citet{K20} to show that $\eqref{eq:r222}=o_P(1)$. Thus, we need to show that the three terms in the upper bound from this proposition converge to zero. We begin with the first condition: Apply \eqref{eq:phi_tilde_bound1} to get
\begin{align*}
&\sum_{i,j\in V_n}\int_0^T\IE\left(\tilde{\phi}_{n,ij}^{ij}(s)^2C_{n,ij}(s)\alpha_0(s)\Psi(X_{n,ij}(s);\beta_0)\right)ds \\
\leq&C_1^2\|\alpha_0\|_{\infty}\|\Psi(\cdot;\beta_0)\|_{\infty}\sum_{i,j\in V_n}\int_0^T\frac{N^2h}{m^2p_n(s)^2} \\
&\quad\quad\quad\quad\times\IE\left(\|\tilde{\beta}_n-\beta_0\|^2\left(\sup_{r\in[s,s+2h]}\frac{1}{mp_n(r)}\sum_{k,l\in V_n}C_{n,kl}(r)\right)^2C_{n,ij}(s)\right)ds \\
\leq&\frac{C_1^2\|\alpha_0\|_{\infty}\|\Psi(\cdot;\beta_0)\|_{\infty}\cdot N^2h}{mp_n}\IE\left(\|\tilde{\beta}_n-\beta_0\|^2\right) \\
&\quad\quad\quad\quad\times\int_0^T\IE\left(\left(\sup_{r\in[s,s+2h]}\frac{1}{mp_n(r)}\sum_{k,l\in V_n}C_{n,kl}(r)\right)^3\right)ds.
\end{align*}
The above converges to zero because all norms are bounded by (C), $N/mp_n=O(1)$ by (SP), the integral with the third moment is bounded by \eqref{eq:wc1+1}, $h\to0$ by (KBW) and $\IE\left(\|\tilde{\beta}_n-\beta_0\|^2\right)=O(1/N)$ by (P).

We continue with the second term of the upper bound in Proposition 2.9 in \citet{K20} by using \eqref{eq:phi_tilde_bound1} and \eqref{eq:phi_tilde_bound2} in the same way (recall also that $B_n^{\{ij,kl\}}\leq B_n^{ij}B_n^{kl}$)
\begin{align*}
&\sum_{i,j,k,l\in V_n}\IE\left(\int_0^T\tilde{\phi}_{n,ij}^{ij,kl}(t)dM_{n,ij}(t)\int_0^T\left(\phi_{n,kl}(s)-\tilde{\phi}_{n,kl}^{ij,kl}(s)\right)dM_{n,kl}(s)\right) \\
\leq&\frac{2C_1^2N^2h}{mp_n}\sum_{i,j,k,l\in V_n}\IE\Bigg(\|\tilde{\beta}_n-\beta_0\|^2 \\
&\quad\quad\times\int_0^T\frac{B_n^{ij}}{mp_n(t)}\sup_{r\in[t,t+2h]}\frac{1}{mp_n(r)}\sum_{k',l'\in V_n}C_{n,k'l'}(r)d|M_{n,ij}|(t)\cdot\int_0^T\frac{B_n^{kl}}{mp_n(s)}d|M_{n,kl}|(s)\Bigg) \\
\leq&\frac{2C_1^2N^2h}{mp_n}\IE\left(\|\tilde{\beta}_n-\beta_0\|^2\right) \\
&\quad\quad\times\IE\left(\left(\frac{1}{mp_n}\sum_{i,j\in V_n}\int_0^T\sup_{r\in[t,t+2h]}\frac{B_n^{ij}}{mp_n(r)}\sum_{k',l'\in V_n}C_{n,k'l'}(r)d|M_{n,ij}|(t)\right)^2\right)^{\frac{1}{2}} \\
&\quad\quad\times\IE\left(\left(\sum_{k,l\in V_n}\int_0^T\frac{B_n^{kl}}{mp_n(s)}d|M_{n,kl}|(s)\right)^2\right)^{\frac{1}{2}}
\end{align*}
because $\tilde{\beta}_n$ is independent of everything else and the Cauchy-Schwarz Inequality. We have convergence of the above to zero because of the MSE property of $\tilde{\beta}_n$ from (P) and since $N/mp_n$ is bounded by assumption (SP), the integrals remain bounded by \eqref{eq:assump8} and Assumption \eqref{eq:wc1} and $h\to0$ by (KBW). And finally for the third assumption we use \eqref{eq:phi_tilde_bound2} again and $B_n^{\{ij,kl\}}\leq B_n^{ij}B_n^{kl}$ and estimate
\begin{align*}
&\sum_{ij,kl\in V_n}\IE\left(\int_0^T\left(\tilde{\phi}_{n,ij}(t)-\tilde{\phi}_{n,ij}^{ij,kl}(t)\right)dM_{n,ij}(t)\int_0^T\left(\phi_{n,kl}(s)-\tilde{\phi}_{n,kl}^{ij,kl}(s)\right)dM_{n,kl}(s)\right) \\
\leq&\frac{4C_1^2N^2h}{m^2p_n^2}\sum_{ij,kl\in V_n}\IE\left(\|\tilde{\beta}_n-\beta_0\|^2\int_0^T\frac{(B_n^{ij})^2}{mp_n(t)}d|M_{n,ij}|(t)\int_0^T\frac{(B_n^{kl})^2}{mp_n(s)}|dM_{n,kl}|(s)\right) \\
=&\frac{4C_1^2N^2h}{m^2p_n^2}\IE\left(\|\tilde{\beta}_n-\beta_0\|^2\right)\IE\left(\left(\sum_{ij\in V_n}\int_0^T\frac{(B_n^{ij})^2}{mp_n(t)}d|M_{n,ij}|(t)\right)^2\right).
\end{align*}
We conclude convergence of the above to zero from the boundedness assumptions in (SP), the MSE properties of $\tilde{\beta}_n$ in (P) and the property of the expectation in \eqref{eq:assump9}. Thus we have bounded all three expression from Proposition 2.9 in \citet{K20} and the proof is complete.
\end{proof}

\newpage
\bibliographystyle{chicago}
\bibliography{mybib}

\end{document}